\theoremstyle{plain}
\newtheorem{theorem}{Theorem}
\newtheorem{definition}{Definition}
\newtheorem{proposition}{Proposition}
\newtheorem{remark}{Remark}
\renewcommand{\vec}[1]{\mathbf{#1}}
\newcommand{\mI}{\mathbb{I}}
\newcommand{\mA}{\mathcal{A}}
\DeclareMathOperator*{\argmax}{arg\,max}
\DeclareMathOperator*{\argmin}{arg\,min}
\newcommand*\circled[1]{\tikz[baseline=(char.base)]{\node[shape=circle,draw,inner sep=0.5pt] (char) {#1};}}
\begin{document}


\title{On the Performance of Mobility-Aware \\ D2D Caching Networks}

\author{
    \IEEEauthorblockN{Sameh Hosny, Atilla Eryilmaz, Alhussein A. Abouzeid and Hesham El Gamal}}


\maketitle
%


\begin{abstract}

The increase in demand for spectrum-based services forms a bottleneck in wireless networks. Device-to-Device (D2D) caching networks tackle this problem by exploiting users behavior predictability and the possibility of sharing data between them to alleviate the network congestion. Usually, network congestion occurs at certain times of the day and in some popular locations. Consequently, the information about user demand alone is not enough. Capturing mobility statistics allows Service Providers (SPs) to enhance their caching strategies. In this work, we introduce a mobility-aware D2D caching network where an SP harnesses users demand and mobility statistics to minimize the incurred service cost through an optimal caching policy. We investigate two caching schemes: a centralized caching scheme and a decentralized caching scheme. In the centralized caching scheme, the SP makes the caching decision towards its cost minimization to increase its profit. However, the complexity of the optimal caching policy grows exponentially with the number of users. Therefore, we discuss a greedy caching algorithm which has a polynomial order complexity. We also use this greedy algorithm to establish upper and lower bounds on the proactive service gain achieved by the optimal caching policy. In the decentralized caching scheme, users take over and make their caching decisions, in a distributed fashion affected by the SP pricing policy, towards their payment minimization. We formulated the tension between the SP and users as a \emph{Stackelberg game}. Best response analysis was used to identify a \emph{subgame perfect Nash equilibrium} (SPNE) between users. The optimal solution of the proposed model was found to depend on the \emph{SP reward preference}, which affects the assigned memory in users devices. We found some regimes for the reward value where the SPNE was non-unique. A \emph{fair allocation} caching policy was adopted to choose one of these SPNEs. To understand the impact of user behavior, we investigated some special cases to explore how users mobility statistics affect their caching decision. The obtained results in this work allow us to enhance our previously studied content trading model \cite{Hosny2015Game} to form a complete vision of mobile content trading. Based on the results obtained in this work, we plan to formulate a mobility-aware content trading marketplace. We expect to achieve more gains by exploiting the users mobility statistics when they are allowed to trade their proactive downloads.

\end{abstract}

%


%
\section{Introduction}

The growth in data traffic represents a crucial problem in mobile networks. More than half a billion mobile devices were added in 2015 causing a 74$\%$ growth in global mobile data traffic. Nevertheless, an eightfold increase in this traffic is expected between 2015 and 2020. Moreover, three-fourths of the world’s mobile data traffic will be video by 2020 \cite{cisco_2015}. This increase in demand for spectrum-based services and devices has led network SPs to experience a major demand and supply mismatch during the whole day \cite{Federal2002Spectrum}. This demand disparity is ultimately tied to user behavioral pattern. However, most people follow certain daily routines and hence their behavior is highly predictable \cite{Song2010Limits},\cite{farrahi2008discovering}. Interestingly, the time-varying user activities, that are ultimately contributing to this mismatch, can be exploited to solve this demand disparity.

The concept of \emph{proactive resource allocation} for wireless networks was established to control the supplied services to best match the demand patterns \cite{Tadrous2013Proactive}. The predictability of user behavior is exploited to balance the wireless traffic over time, and significantly reduce the bandwidth required to achieve a given blocking/outage probability. \emph{Device-to-device (D2D) communication} has been proposed in \cite{yu2011resource} as a promising technology that can relief the wireless networks congestion. A pair of end-users, moving within a close proximity to each other, establish a D2D link that can be operated in the unlicensed spectrum band, such as the Industrial, Scientific, and Medical (ISM) radio bands. These D2D links when used as a traffic offloading approach introduces very little or no monetary cost for the end-users.

A tutorial overview of some recent results on base station assisted D2D wireless networks with caching for video delivery was presented in \cite{Caire2016Review}. Some competing conventional schemes and a recently developed scheme based on caching at the user devices was also introduced. Throughput-outage scaling laws of such schemes were discussed. It was shown that, in realistic conditions, the D2D caching scheme largely outperforms all other competing schemes both in terms of per-user throughput and in terms of outage probability. A D2D caching network under arbitrary demand was considered in \cite{Caire2016CachingD2D}. It was shown that if each node in the network can reach in a single hop all other nodes, then the proposed scheme achieves almost the same throughput of \cite{Maddah2014Fundamental}. Moreover, if concurrent short range transmissions can co-exist in a spatial reuse scheme, then the throughput has the same scaling law of the reuse-only case \cite{Caire2013Optimal, Caire2015Throughput} or the coded-only case \cite{Maddah2014Fundamental}. 
Although previous models utilized the D2D communication to alleviate network congestion, they considered a grid network formed by a set of nodes placed on a regular grid on the unit square and user's mobility was not captured in this work.

The authors in \cite{DavidTse2002Mobility} considered the model of \cite{Gupta2000Capacity} and showed that the per-user throughput can increase dramatically when nodes are mobile rather than fixed. This improvement was obtained under several idealistic assumptions. They assumed complete mixing of nodes trajectories in the network and random mobility pattern was not considered. They also assumed that data contents are delay tolerant and stated that their ideas were not very relevant to real-time applications. Caching data contents in users devices helps us to overcome the delay constraint. Furthermore, a practical mobility model is required to represent a more realistic behavior of the users. There are many mobility models in the literature which try to capture user behavior \cite{Camp02asurvey}. In this work, we focus on the individual user mobility based on a probabilistic random walk and defer group mobility for our future work.


We consider a D2D caching network where the SP is aware of the user demand and mobility. We consider the results presented here as a forward step towards a mobile content marketplace. The obtained results will allow us to enhance our content trading model presented in \cite{Hosny2015Game}. Our aim is to show that exploiting the information about user mobility helps the SP to optimize its caching strategy and address the network congestion problem in an intelligent manner. Moreover, users can achieve more gains when they consider their mobility statistics and the locations where they can meet other users in the network. We investigate two caching schemes: a centralized caching scheme and a decentralized caching scheme. In the centralized caching scheme, the SP makes the caching decision towards its cost minimization to increase its profit. In the decentralized caching scheme, users take over and make their caching decisions, in a distributed fashion affected by the SP pricing policy, towards their payment minimization. Our main contributions are:
\begin{enumerate}
    \item We introduce an optimal centralized caching policy that allows SP to enhance its caching decisions based on the user demand and mobility statistics.
    \item The complexity of the optimal centralized caching policy grows exponentially with the number of users. Therefore, We introduce a sub-optimal policy based on a greedy algorithm that has a polynomial order complexity.
    \item Using the proposed greedy algorithm, we establish upper and lower bounds on the gain achieved by the optimal policy for the proactive service cost.
    \item We investigated how the SP chooses an optimal reward to incentives users to participate in the proposed centralized caching policy.
    \item We extend our work by considering a decentralized caching policy. The tension between the SP and users is modeled as a Stackelberg game. Best response analysis was used to identify a \emph{subgame perfect Nash equilibrium} (SPNE) between users.
    \item The optimal solution of the proposed model was found to depend on the SP reward preference, which affects the assigned memory in users devices. We found some regimes for the reward value where the SPNE was non-unique. A \emph{fair allocation} caching policy was adopted to choose one of these SPNEs.
    \item We studied the relation between the users assigned memory and the reward they receive from the SP. This part studies the tension between the SP and users to choose an appropriate memory size for the decentralized caching policy.
    \item To understand the impact of user mobility, we considered some special cases when users have similar behavior. We studied the effect of these special cases on the centralized and decentralized caching policies.
    \item The obtained results in this work allow us to enhance our content trading model presented in \cite{Hosny2015Towards} to form a complete vision about mobile content marketplace.
\end{enumerate}

The rest of this paper is organized as follows. In Section \ref{Sec:System_Model}, we lay out the system setup and define the characteristics of its main components. We study the performance of the centralized caching scheme in Section \ref{Sec:Centralized_Caching}. In Section \ref{Sec:Decentralized_Caching}, we study the decentralized caching scheme. The paper is concluded in Section \ref{Sec:Conclusion}.


\section{System Model} \label{Sec:System_Model}

We consider a wireless network consisting of a set of $N$ \emph{users} $\mathcal{N} = \{1,2,\cdots,N\}$ and a single \emph{Service Provider} (SP) who supplies $M$ \emph{data items} $\mathcal{M} = \{1,2,\cdots,M\}$ upon demand. Each data item $m \in \mathcal{M}$ has a size $S_m>0$ which may be a movie (as in YouTube and Netflix), a sound track (as in Panadora), a social network update (as in Facebook and Twitter), a news update (as in CNN and Fox News), etc. Each user may request any of these data items in a random fashion. We consider a time-slotted system where SP divides the duration of interest (e.g. a day) into $T$ time slots. We assume that the duration of each slot is the time taken for a user to completely consume the requested data item and hence each time slot is in the order of minutes or possibly hours. At the beginning of each time slot, SP collects the demand of all users and supplies them with the requested data items.

\subsection{User Demand Model}

We assume that SP can track, learn and predict user behavior over time and hence constructs a \emph{demand profile} for every user $n$ denoted by $\vec{\Pi}_n=\left(\vec{p}_{n,t}\right)_t$. For any time slot $t$, $\vec{p}_{n,t}=\left(p_{n,t}^{m}\right)_m$ where $p_{n,t}^{m}$ is the probability that user $n$ requests item $m$ in time slot $t$. The demand of user $n$ in time slot $t$ is captured by a random variable $\mI_{n,t}^{m}$ where 
	\begin{equation*}\label{Eq:Demand_Indicator}
	\mI_{n,t}^{m}=\begin{cases}1, & \text{with probability } p_{n,t}^{m},\\
	                     0, & \text{with probability } 1-p_{n,t}^{m}.
	\end{cases}
	\end{equation*}
We assume that at any time slot $t$, $\mI_{n,t}^m$ is independent of $\mI_{n,t+1}^{m}, \forall n,m$. We aslso assume that for any $n \neq k$, $\mI_{n,t}^{m}$ is independent of $\mI_{k,t}^{m}, \forall m,t$. Furthermore, the demand profile of each user follows a \emph{cyclo-stationary} pattern that repeats itself in a period of $T$ time slots. That is, we can write $\vec{p}_{n,t}=\vec{p}_{n,t+kT}$ for any non-negative integer $k$. As an example, the $T$-slot period can be interpreted as a single day through which the activity of each user varies each hour, but occurs with the same statistics every day. SP relates these time slots with the actual day time based on users demand statistics to recognize the time slots where it experiences low demand (off-peak times) and those where a high demand occurs (peak times).

\subsection{User Mobility Model}

We assume that SP is interested in $L$ \emph{popular} locations $\mathcal{L}=\{1,2,\cdots,L\}$ like airports, schools, shopping malls, stadiums or governmental buildings where high demand can be related with mobility of users. Moreover, SP can track, learn and predict the mobility of each user over time and hence constructs a \emph{mobility profile} for every user $n$ denoted by $\vec{\Theta}_{n}=\left(\theta_{n,t}^{l}\right)_{t}^{l}$ where $\theta_{n,t}^{l}$ is the probability that user $n$ will be present at location $l$ in time slot $t$ where $\sum_{l=1}^{L} \theta_{n,t}^{l} = 1 \hspace{1mm} \forall n,t$. We represent user's mobility by a modified probabilistic version of the random walk mobility model which is based on a discrete-time Markov chain model \cite{Chiang98Mobility}. We assume that users stay in the same location within a time slot and may move to another location at the beginning of each time slot. Let $\lambda_{n,t}^{l,k}$ be the transition probability that user $n$ moves from location $l$ to location $k$ in time slot $t$ where $\sum_{k=1}^{L} \lambda_{n,t}^{l,k} = 1 \hspace{1mm} \forall n,t,l$. These transition probabilities may change from one time slot to another to capture the mobility of each user. However, the probability of being at a certain location in a time slot $t$ depends on the location in the previous time slot $t-1$ only, i.e. $\theta_{n,t}^{l} = \sum_{k=1}^{L} \theta_{n,t-1}^{k} \lambda_{n,t}^{k,l}$ where $\theta_{n,1}^{l}=\lambda_{n,1}^{l,l}$. 

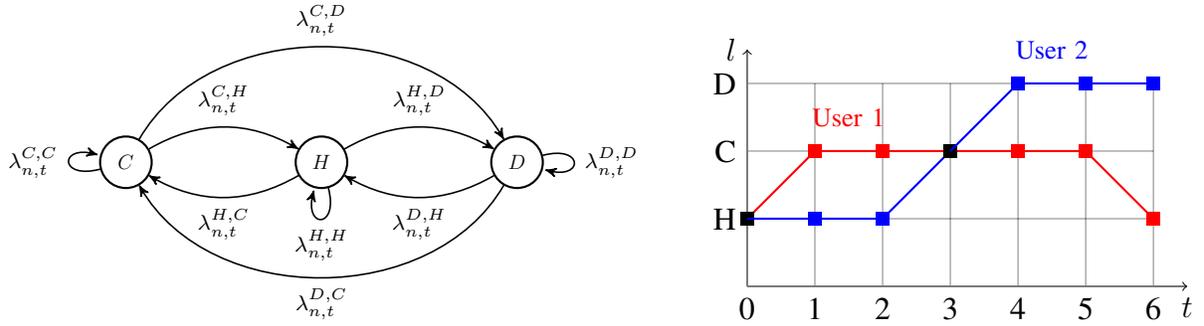
\begin{figure}[h!]
    \centering
        \subfloat[An example state transition diagram of user $n$ in time slot $t$ for $L=3$ locations, like home (H), campus (C) and downtown (D).]{
            \begin{tikzpicture}[->, >=stealth', auto, semithick, node distance=4cm]
                \tikzstyle{every state}=[fill=white,draw=black,thick,text=black,scale=0.65]
                \node[state]    (H)                 {$H$};
                \node[state]    (D)[right of= H]    {$D$};
                \node[state]    (C)[left of= H]     {$C$};
                \path
                (H) edge[loop below,below]      node{\scriptsize{$\lambda_{n,t}^{H,H}$}}    (H)
                    edge[bend left,below]       node{\scriptsize{$\lambda_{n,t}^{H,C}$}}    (C)
                    edge[bend left,above]       node{\scriptsize{$\lambda_{n,t}^{H,D}$}}    (D)
                (C) edge[loop left,left]        node{\scriptsize{$\lambda_{n,t}^{C,C}$}}    (C)
                    edge[bend left,above]       node{\scriptsize{$\lambda_{n,t}^{C,H}$}}    (H)
                    edge[bend left=60,above]    node{\scriptsize{$\lambda_{n,t}^{C,D}$}}    (D)
                (D) edge[loop right,right]      node{\scriptsize{$\lambda_{n,t}^{D,D}$}}    (D)
                    edge[bend left,below]       node{\scriptsize{$\lambda_{n,t}^{D,H}$}}    (H)
                    edge[bend left=60,below]    node{\scriptsize{$\lambda_{n,t}^{D,C}$}}    (C);
            \end{tikzpicture} } \quad     
        \subfloat[An example trajectory of two users for $L=3$. They meet at home (H) at $t=0$ and on campus (C) at $t=3$ and exchange data.]{
            \begin{tikzpicture}[scale=0.90]
                \draw[->] (0,0) -- (6.5,0) node[anchor=north] {$t$};
                \draw	(0,0) node[anchor=north] {0}
                		(1,0) node[anchor=north] {1}
                		(2,0) node[anchor=north] {2}
                		(3,0) node[anchor=north] {3}
                		(4,0) node[anchor=north] {4}
                		(6,0) node[anchor=north] {6}
                		(5,0) node[anchor=north] {5};
                \draw[->] (0,0) -- (0,3.5) node[anchor=east] {$l$};
                \draw	(0,0) node[anchor=east] {}
                		(0,1) node[anchor=east] {H}
                		(0,2) node[anchor=east] {C}
                		(0,3) node[anchor=east] {D};            
                
                \draw[step=1cm,gray,very thin] (0,0) grid (6,3);            
    
                \draw[red]	(1.5,2.5) node{\small{User 1}};
                \draw[blue] (4.5,3.5) node{\small{User 2}};
    		            
                \fill[black!100!white] (0.1,1.1) rectangle (-0.1,0.9);
                \draw[red,thick] (0,1) -- (1,2); ;
                \fill[red!100!white] (1.1,2.1) rectangle (0.9,1.9);
                \draw[red,thick] (1,2) -- (2,2);
                \fill[red!100!white] (2.1,2.1) rectangle (1.9,1.9);
                \draw[red,thick] (2,2) -- (3,2);
                \fill[black!100!white] (3.1,2.1) rectangle (2.9,1.9);
                \draw[red,thick] (3,2) -- (4,2);
                \fill[red!100!white] (4.1,2.1) rectangle (3.9,1.9);
                \draw[red,thick] (4,2) -- (5,2);
                \fill[red!100!white] (5.1,2.1) rectangle (4.9,1.9);
                \draw[red,thick] (5,2) -- (6,1);
                \fill[red!100!white] (6.1,1.1) rectangle (5.9,0.9);
                
                \fill[black!100!white] (0.1,1.1) rectangle (-0.1,0.9);
                \draw[blue,thick] (0,1) -- (1,1); ;
                \fill[blue!100!white] (1.1,1.1) rectangle (0.9,0.9);
                \draw[blue,thick] (1,1) -- (2,1);
                \fill[blue!100!white] (2.1,1.1) rectangle (1.9,0.9);
                \draw[blue,thick] (2,1) -- (3,2);
                \fill[black!100!white] (3.1,2.1) rectangle (2.9,1.9);
                \draw[blue,thick] (3,2) -- (4,3);
                \fill[blue!100!white] (4.1,3.1) rectangle (3.9,2.9);
                \draw[blue,thick] (4,3) -- (5,3);
                \fill[blue!100!white] (5.1,3.1) rectangle (4.9,2.9);
                \draw[blue,thick] (5,3) -- (6,3);
                \fill[blue!100!white] (6.1,3.1) rectangle (5.9,2.9);               
             
            \end{tikzpicture}}\\
    \caption{An example for user mobility model}
    \label{Fig:User_Mobility}
\end{figure}    
    
Figure \ref{Fig:User_Mobility} (a) shows the state transition diagram of user $n$ in time slot $t$ for $L=3$ locations, like home (H), campus (C) and downtown (D). We assume that each user randomly takes a trajectory everyday starting from one location and moving to other locations. However, we assume that the mobility profile of each user follows a \emph{cyclo-stationary} pattern that repeats itself in a period of $T$ time slots. For example, everyday user starts from his home, visits some frequent locations throughout the day and then returns back home at the end of the day as shown in Figure \ref{Fig:User_Mobility} (b). SP exploits this mobility to enhance its caching strategy and hence achieves more gain by reducing the incurred service cost.

\subsection{Proactive Service Scheme}

SP tries to smooth out the network load by caching some of these data items at the network edge and exploits users mobility statistics to enhance its caching decision. We assume that one-hop device-to-device (D2D) communication is allowed and can be used to transfer data items between users. A fixed data rate link between all users is assumed. We also consider a non-fading channel between all users where an appropriate network protocol is applied to avoid multiple access interference. In the small timescale, data transmission follows an orthogonal multiple access scheme, hence inter-node interference effect is ignored in our large timescale model. For example, at any location $l$, SP predicts that a certain item $m$ will experience a high request in time slot $t$. It also predicts which users will be possibly present at that location in this time slot. This data item can be cached at these users and they can transfer it to other users in their vicinity. Therefore, some of the network load will be shifted to the D2D communication which alleviates the network congestion and yields a reduction in the incurred service cost.

Users occupy part of their device memory for caching these data items and consume some of their batteries to transfer it through the D2D communication. We capture the cost of caching each byte by a parameter $r>0$. This parameter can be viewed as a rent cost for caching this data. We can also view it as a reward that incentives users to participate in this model and save some of their payments by getting it as a discount in their monthly bills. For simplicity, we assume that users always have enough battery level to transfer cached data items to other users in the network and that they always allow SP to cache data in their devices. This reward promotes users to raise their memory size to be able to cache more data. When $N$ is sufficiently large, we can assume that each user has enough memory space to cache assigned data items since SP distributes cached items over all available users.


\section{Centralized Caching Scheme} \label{Sec:Centralized_Caching}

In the centralized caching mode, SP makes the caching decision to push some data items in users devices. SP leverages the information about users demand and mobility statistics to make these decisions towards its cost minimization. Users get reward by participating in this model and find their request either in their local cache or at other users in the same vicinity. Therefore, users can also save some of their payments. To evaluate the system performance, we compare the incurred service cost of the proposed model with the cost of the flat pricing scenario. The definition of the cost function is first defined and then the problem is stated. We introduce an optimal centralized caching policy and resolve its complexity issue through a suboptimal caching policy. We also shed light on the impact of users mobility on the proposed caching policy.

\subsection{Problem Statement}

To supply requested data items, SP incurs a certain service cost due to the resources consumed at each time slot. We denote by $C(L_t)$ the SP cost for serving a total demand $L_t \geq 0$ in time slot $t$. We also assume that the cost function $C:\mathbb{R}^+ \rightarrow \mathbb{R}^+$ is convex and non-decreasing. We consider a \emph{reactive network} as a baseline scenario where users' requests are served upon arrival (in contrast to proactively predicting the demand requests). In this case, the time-averaged expected cost of all users is given by:
    \begin{equation}\label{Eq:Cost_Reactive}
        C^{\mathcal{R}} = \limsup_{T\to\infty} \frac{1}{T} \sum_{t=1}^{T} \mathbb{E} \left[ C \left( \sum_{n=1}^{N} \sum_{m=1}^{M} S_m p_{n,t}^{m} \right) \right].
    \end{equation}
where, the superscript $\mathcal{R}$ indicates reactive operation. In the proposed model, we assume that SP is aware of the demand and mobility profiles of all users over $T$ time slots. SP caches an amount $x_{n}^{m}$ of data item $m$ at user $n$ for a future possible request. Each user $n$ transfers this data to other users through the D2D communication in any time slot $t$ when it is requested. For simplicity, we assume that sharing the cached data in users devices happen for free. In particular, users are not announcing any selling prices and they don't pay for getting their request from other users. Therefore, when a user requests a certain content, his request will be first served from the cached data in the other users devices who are located around him. If this data content was not cached, the request will be served through the network resources. SP replaces the data cached in users devices when it is expired at the end of the day (i.e. at the end of time slot $T$). In particular, SP caches data at the beginning of the day and lets users share it throughout the rest of the day. The cached amount of data item $m$ at each user cannot exceed its size, i.e. 
    \begin{equation}\label{Eq:Const}
        0 \leq x_{n}^{m} \leq S_m, \hspace{1mm} \forall n,m
    \end{equation}
Hence, under this proactive model, the total network load in time slot $t$ is given by:
    \begin{equation}\label{Eq:Load_Proactive}
        \begin{aligned}
            L_t^{\mathcal{P}} & = \sum_{m=1}^{M} \sum_{n=2}^{N-1} \sum_{a_n \in \mA_n} \biggl(S_m - \sum_{k \in a_n} x_k^m \biggr)^+ \sum_{k \in a_n} p_{k,t}^{m} \underbrace{\sum_{l=1}^{L} \prod_{k \in a_n} \theta_{k,t}^{l} \prod_{j \notin a_n} \Bigl(1-\theta_{j,t}^{l}\Bigr)}_\text{some users are together} \\
            & + \sum_{m=1}^{M} \biggl(S_m - \sum_{n=1}^{N} x_n^m \biggr)^+ \sum_{n=1}^N p_{n,t}^{m} \underbrace{\sum_{l=1}^{L} \prod_{n=1}^N \theta_{n,t}^{l}}_\text{all users are together} \\
            & + \sum_{m=1}^{M} \sum_{n=1}^{N} \biggl( S_m - x_n^m \biggr) p_{n,t}^{m} \underbrace{\biggl( 1 - \sum_{l=1}^{L} \sum_{k=2}^{N} \sum_{a_k \in \mA_k} \prod_{j \in a_k} \theta_{j,t}^{l} \prod_{i \notin a_k} \Bigl(1-\theta_{i,t}^{l}\Bigr) \biggr)}_\text{every user is alone}  
        \end{aligned}
    \end{equation}
where the superscript $\mathcal{P}$ indicates \emph{proactive} operation and $A_n$ is the set of all possible combinations of $n$ indices. i.e. 
    \begin{equation*}
        \mA_n = \Bigl\{a_n := \bigl(k_1,\cdots,k_n\bigr),k_j \in \bigl\{1,2,\cdots,N\bigr\} \hspace{1mm} \forall j \Bigr\}
    \end{equation*}
where $|\mA_n| = \binom{N}{n}$. We assume that users share the cached data items when they meet each other and get the remaining portion from SP. The total network load in (\ref{Eq:Load_Proactive}) captures all cases when some of the users meet each other, all users meet together or each user is moving alone. Consequently, the corresponding time-averaged expected cost under the proactive operation is given by:
    \begin{equation}\label{Eq:Cost_Proactive}
        C^{\mathcal{P}} = \limsup_{T\to\infty} \frac{1}{T} \sum_{t=1}^{T} \mathbb{E} \biggl[ C \Bigl( L_{t}^{\mathcal{P}} \Bigr) \biggr] + r \sum_{n=1}^{N} \sum_{m=1}^{M} x_{n}^{m} 
    \end{equation}
which captures SP's cost for serving a proactive demand $L_t^\mathcal{P}$ and the corresponding cost for caching process. Note that instead of having a cost factor ($\beta$), as in the previous chapters, we are modeling the SP cost in terms of serving the peak load for a normalized cost factor and caching some data items for a reward factor ($r$).


The SP gain is the difference between the \emph{reactive} cost and the \emph{proactive} cost under the proposed model which can be denoted by $\bigtriangleup C = C^\mathcal{R}-C^\mathcal{P}$. Users save some of their payments by finding the requested data items in their local cache or in the cache of their neighbors. The SP objective is to achieve a positive gain (i.e. $\bigtriangleup C >0$) by finding an optimal caching policy $\{x_{n}^{m*}\}_{n,m}$, which minimizes the time-averaged expected cost, while serving the requested data items on time to all users. The problem is defined as:
    \begin{equation}\label{Eq:SP_Optimization_Problem}
        \begin{aligned}
            & \min \hspace{5mm} C^{\mathcal{P}}\\
            & \text{s.t.} \hspace{5mm} (\ref{Eq:Const}).
        \end{aligned}
    \end{equation}

The optimization problem in (\ref{Eq:SP_Optimization_Problem}) depends mainly on the cost function $C$ which may be linear, quadratic or a polynomial of higher order. The exact solution of (\ref{Eq:SP_Optimization_Problem}) for non-linear cost functions can be obtained using convex optimization techniques. However, this case does not provide clear insights on the effect of user's mobility. Nevertheless, finding an optimal caching policy will be non-tractable. Instead, we focus here on a linear cost function to reveal some insights and to find an optimal caching policy, which allows SP to achieve a minimum service cost. The complexity of this optimal policy grows exponentially with the number of users $N$. We overcome this point by introducing a suboptimal policy based on a greedy algorithm which has a polynomial-order complexity. We use the sub-optimal policy to find upper and lower bounds for the optimal policy.

\subsection{Optimal Centralized Caching Policy Analysis}\label{Sec:Centralized_Optimal}

In this section we introduce an optimal caching policy which achieves a minimum service cost for the proposed model. For a linear cost function, considering all possible cases of the $\bigl( . \bigr)^+$ terms in (\ref{Eq:Load_Proactive}), we wind up with a set of linear programs and the optimal solution is obtained from the one which leads to a minimum cost. We start by considering two simple cases for $N=2,3$. We use these simple cases to generalize the optimal solution of this centralized caching policy.

\subsubsection{\textbf{\underline{Case Study ($N=2$)}}}

For simplicity, we start by the case when $T=1$ and then extend it for any value of $T$. In this case, the suffix $t$ can be dropped and the expected load (\ref{Eq:Load_Proactive}) will be:
    \begin{equation} \label{Eq:Load_N=2_T=1}
        \begin{aligned}
            L^{\mathcal{P}} &= \sum_{m=1}^{M} \biggl(S_m - \Bigl(x_1^m+x_2^m\Bigr) \biggr)^+ \Bigl(p_1^m+p_2^m \Bigr) \sum_{l=1}^{L} \theta_1^l \theta_2^l\\
                    &+ \sum_{m=1}^{M} \biggl(1-\sum_{l=1}^{L} \theta_1^l \theta_2^l \biggr) \sum\limits_{n=1}^{2 } \Bigl(S_m - x_n^m\Bigr) p_n^m 
        \end{aligned}
    \end{equation}

And the optimization problem will be:
    \begin{equation}\label{Eq:SP_Optimization_Problem_N=2_T=1}
        \begin{aligned}
            & \min \hspace{5mm} L^{\mathcal{P}} + r \sum_{m=1}^{M} \Bigl( x_1^m+x_2^m\Bigr)\\
            & \text{s.t.} \hspace{5mm} 0 \leq x_n^m \leq S_m, \hspace{5mm} \forall n,m
        \end{aligned}
    \end{equation}
The problem decomposes to $M$ sub-problems and we have two sub-cases: either $x_1^m+x_2^m<S_m$, which leads to a linear program (LP), where:
    \begin{equation} \label{Eq:Load_N=2_T=1_Case1}
        \begin{aligned}
            L^{\mathcal{P}} = \underbrace{\sum_{m=1}^{M} \sum_{n=1}^{2} S_m p_n^m}_\text{reactive load} - \underbrace{\sum_{m=1}^{M} \sum_{n=1}^{2} x_n^m p_n^m}_\text{caching gain} - \underbrace{\sum_{m=1}^{M} \Bigl( x_1^m p_2^m + x_2^m p_1^m \Bigr) \sum_{l=1}^{L} \theta_1^l \theta_2^l}_\text{sharing gain}
        \end{aligned}
    \end{equation}
or $x_1^m+x_2^m \geq S_m$, which leads to another LP, where:
    \begin{equation} \label{Eq:Load_N=2_T=1_Case2}
        \begin{aligned}
            L^{\mathcal{P}} = \underbrace{\sum_{m=1}^{M} \sum_{n=1}^{2} S_m p_n^m}_\text{reactive load} - \underbrace{\sum_{m=1}^{M} \sum_{n=1}^{2} x_n^m p_n^m}_\text{caching gain} - \underbrace{\sum_{m=1}^{M} \sum_{n=1}^{2} \Bigl(S_m - x_n^m \Bigr) p_n^m \sum_{l=1}^{L} \theta_1^l \theta_2^l}_\text{sharing gain}
        \end{aligned}
    \end{equation}
Note that the first term in (\ref{Eq:Load_N=2_T=1_Case1}) and (\ref{Eq:Load_N=2_T=1_Case2}) represents the \emph{reactive load} of the network, the second term represents the \emph{caching gain} achieved by caching $x_1^m$ and $x_2^m$ at users 1 and 2 respectively, while the last term represents the \emph{sharing gain} attained when each user transfers his proactive download to the other. The feasibility regions of these LPs are shown in Figure \ref{Fig:Feasibility_Regions_N=2}.

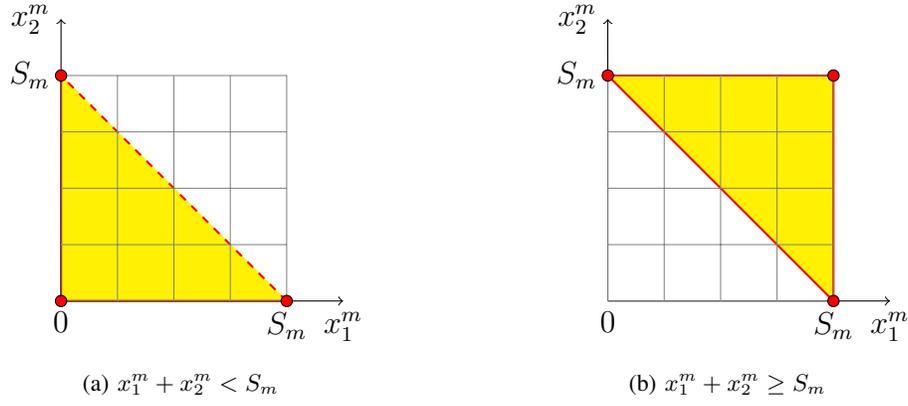
\begin{figure}[h!]
    \centering
    \subfloat[$x_1^m+x_2^m < S_m$]{
        \centering
        \begin{tikzpicture}[scale=0.75]
        
            \draw[fill=yellow,yellow]  (0,0) --  (4,0) --  (0,4) -- cycle;
            \draw[red,thick,dashed] (4,0) -- (0,4);
            \draw[red,thick] (0,0) -- (4,0);
            \draw[red,thick] (0,0) -- (0,4);
    
            \draw[->] (0,0) -- (5,0) node[anchor=north] {$x_1^m$};
            \draw	(0,0) node[anchor=north] {$0$}
                	(4,0) node[anchor=north] {$S_m$};
            \draw[->] (0,0) -- (0,5) node[anchor=east] {$x_2^m$};
            \draw	(0,0) node[anchor=east] {}
                	(0,4) node[anchor=east] {$S_m$};            
                
            \draw[step=1cm,gray,very thin] (0,0) grid (4,4);            
            \draw [fill=red] (0,0) circle (0.1cm);
            \draw [fill=red] (4,0) circle (0.1cm);
            \draw [fill=red] (0,4) circle (0.1cm);
                
        \end{tikzpicture}}
    \hspace{2cm}
    \subfloat[$x_1^m+x_2^m \geq S_m$]{
        \centering
        \begin{tikzpicture}[scale=0.75]
    
            \draw[fill=yellow,yellow]  (4,0) --  (4,4) --  (0,4) -- cycle;
            \draw[red,thick] (4,0) -- (0,4);
            \draw[red,thick] (0,4) -- (4,4);
            \draw[red,thick] (4,0) -- (4,4);
                
            \draw[->] (0,0) -- (5,0) node[anchor=north] {$x_1^m$};
            \draw	(0,0) node[anchor=north] {$0$}
                	(4,0) node[anchor=north] {$S_m$};
            \draw[->] (0,0) -- (0,5) node[anchor=east] {$x_2^m$};
            \draw	(0,0) node[anchor=east] {}
                	(0,4) node[anchor=east] {$S_m$};            
                
            \draw[step=1cm,gray,very thin] (0,0) grid (4,4);            
            \draw [fill=red] (4,4) circle (0.1cm);
            \draw [fill=red] (4,0) circle (0.1cm);
            \draw [fill=red] (0,4) circle (0.1cm);
                
        \end{tikzpicture}}
    \caption{Feasibility regions for $N=2$}
    \label{Fig:Feasibility_Regions_N=2}
\end{figure}

The optimal solution of each LP is at one of its extreme points in the corresponding feasibility region. We have 4 extreme points $(0,0), (S_m,0), (0,S_m)$ and $(S_m,S_m)$. The optimal solution of the problem is that of the LP which yields a minimum service cost. This solution can be extended for a general $T$. Figure \ref{Fig:Opt_Soln_Char_N=2_T>1} shows the SP optimal policy which is explained in the following proposition.
\begin{proposition}\label{Prop:Optimal_Solution_N=2_T=1}
    For $N=2$ and for each data content $m$, the SP optimal centralized caching policy follows: 
    \begin{enumerate}
        \item $(S_m,S_m)$ is optimal if and only if:
        \begin{equation*}
            \begin{aligned}
                r &< r_1 = \min_{i=1,2} \tau_i,\\
                0 &\leq \tau_i = \frac{1}{T} \sum\limits_{t=1}^{T} p_{i,t}^{m} \left( 1-\sum\limits_{l=1}^{L} \theta_{1,t}^{l} \theta_{2,t}^{l} \right) \leq 1            
            \end{aligned}
        \end{equation*}
        \item (0,0) is optimal if and only if: 
        \begin{equation*}
            \begin{aligned}
                r &> r_2 = \max_{i=1,2} \rho_i,\\
                \rho_i &= \frac{1}{T} \sum\limits_{t=1}^{T} \left( p_{i,t}^{m} + p_{j,t}^{m} \sum\limits_{l=1}^{L} \theta_{1,t}^{l} \theta_{2,t}^{l} \right) \geq 0
            \end{aligned}
        \end{equation*}
        \item $(S_m,0)$ is optimal if and only if $\tau_1 > \tau_2$.
        \item $(0,S_m)$ is optimal if and only if $\tau_2 > \tau_1$.
    \end{enumerate}
\begin{proof} The proof is straightforward by evaluating the cost function at all extreme points and comparing them to find the optimal solution.
\end{proof}
\end{proposition}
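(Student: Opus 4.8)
The plan is to exploit the decomposition and piecewise-linearity already established, reduce the whole problem to a comparison of four corner costs, and then read off the four regimes by pairwise comparison. First I would fix a content $m$ and work with the time-averaged objective $J(x_1^m,x_2^m) := \frac{1}{T}\sum_{t=1}^{T} L_t^{\mathcal{P},m} + r(x_1^m+x_2^m)$, where $L_t^{\mathcal{P},m}$ is the $m$-th summand of (\ref{Eq:Load_Proactive}) specialized to $N=2$; since the box constraints (\ref{Eq:Const}) are separable across $m$, minimizing $C^{\mathcal{P}}$ is equivalent to minimizing each $J$ over $[0,S_m]^2$. Writing $q_t := \sum_{l=1}^{L}\theta_{1,t}^{l}\theta_{2,t}^{l}$ for the co-location probability, the two expressions (\ref{Eq:Load_N=2_T=1_Case1}) and (\ref{Eq:Load_N=2_T=1_Case2}), averaged over $t$, make $J$ affine in $(x_1^m,x_2^m)$ on each of the two triangles $\{x_1^m+x_2^m\le S_m\}$ and $\{x_1^m+x_2^m\ge S_m\}$. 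I would then invoke the fundamental theorem of linear programming on each triangle and check that the two affine pieces agree on the shared edge $x_1^m+x_2^m=S_m$, so that $J$ is continuous and its minimum over the box is attained at a vertex of one of the triangles. The vertex set is $\{(0,0),(S_m,0),(0,S_m),(S_m,S_m)\}$, so it suffices to compare $J$ at these four corners.

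Next I would evaluate $J$ at each corner in closed form. Using $\tau_i=\frac{1}{T}\sum_t p_{i,t}^m(1-q_t)$ and $\rho_i=\frac{1}{T}\sum_t\bigl(p_{i,t}^m+p_{j,t}^m q_t\bigr)$ together with $\bar{p}_i:=\frac{1}{T}\sum_t p_{i,t}^m$, a direct substitution yields the compact values $J(0,0)=S_m(\bar{p}_1+\bar{p}_2)$ (the reactive load), $J(S_m,0)=S_m(r+\tau_2)$, $J(0,S_m)=S_m(r+\tau_1)$, and $J(S_m,S_m)=2S_m r$. Each has a transparent meaning: caching all of $m$ at both users removes all network traffic and costs only $2S_m r$, while caching all of $m$ at user $1$ alone leaves user $2$ to fetch it from the network precisely when it is requested and the two are apart, contributing $S_m\tau_2$.

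Finally, the four regimes follow from pairwise comparisons. Cancelling the common factor $S_m>0$, $J(S_m,S_m)$ is smallest iff $r\le\tau_1$ and $r\le\tau_2$, i.e. $r\le r_1=\min_i\tau_i$; $J(0,0)$ is smallest iff $r\ge\rho_1$ and $r\ge\rho_2$, i.e. $r\ge r_2=\max_i\rho_i$; and $J(S_m,0)\le J(0,S_m)$ iff $\tau_1\ge\tau_2$. The one step I would be careful about — the main obstacle — is showing that these single-inequality thresholds really are equivalent to global optimality, i.e. that the remaining comparisons are redundant and that the regimes tile the reward axis without gaps. This rests on two elementary facts: the ordering $\tau_i\le\bar{p}_i\le\rho_i$ (so $2r<\tau_1+\tau_2$ forces $J(S_m,S_m)<J(0,0)$, symmetrically at the other extreme, and $r_1\le r_2$), and the identity $\rho_1-\rho_2=\tau_1-\tau_2$, which gives $\tau_1>\tau_2\Leftrightarrow\rho_1>\rho_2$. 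The latter aligns the breakpoint $r_2$ with the single-cache corner picked by the sign of $\tau_1-\tau_2$, so that whenever a single-cache solution is optimal it is exactly that corner and no intermediate gap opens between the single-cache and no-cache regimes. I would close by noting that equalities at the thresholds produce the non-unique optima flagged later in the paper, which is why parts 1--2 are stated with strict inequalities.
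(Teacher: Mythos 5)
Your proposal is correct and follows essentially the same route as the paper, whose proof is simply the remark that one evaluates the cost at the four extreme points $(0,0)$, $(S_m,0)$, $(0,S_m)$, $(S_m,S_m)$ and compares; your corner values $S_m(\bar{p}_1+\bar{p}_2)$, $S_m(r+\tau_2)$, $S_m(r+\tau_1)$, $2S_m r$ and the resulting pairwise comparisons are all accurate. The extra care you take --- verifying continuity across the edge $x_1^m+x_2^m=S_m$, and using $\tau_i\le\bar{p}_i\le\rho_i$ together with $\rho_1-\rho_2=\tau_1-\tau_2$ to show the redundant comparisons are implied and the regimes tile the $r$-axis --- is exactly the detail the paper leaves implicit.
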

    \begin{figure}[h!]
        \centering
        \begin{tikzpicture}[scale=0.75]
            \draw[->,thick] (0,0) -- (16,0) node[anchor=north] {$r$};
            \draw	(0,0) node[anchor=north] {$0$}
                    (4,0) node[anchor=north] {$r_1$}
                    (11,0) node[anchor=north] {$r_2$}
                    (15,0) node[anchor=north] {$1$};
            \draw	(2,1) node[anchor=north] {$(S_m,S_m)$}
                	(7.5,2) node[anchor=north] {$(S_m,0) \Leftrightarrow \tau_1 > \tau_2$}
                	(7.5,1) node[anchor=north] {$(0,S_m) \Leftrightarrow \tau_1 < \tau_2$}
                	(13,1) node[anchor=north] {$(0,0)$};
            \draw [fill=red] (0,0) circle (0.1cm);
            \draw [fill=red] (4,0) circle (0.1cm);
            \draw [fill=red] (11,0) circle (0.1cm);
            \draw [fill=red] (15,0) circle (0.1cm);
        \end{tikzpicture}
        \caption{Optimal centralized caching policy for $N=2$}
        \label{Fig:Opt_Soln_Char_N=2_T>1}
    \end{figure}
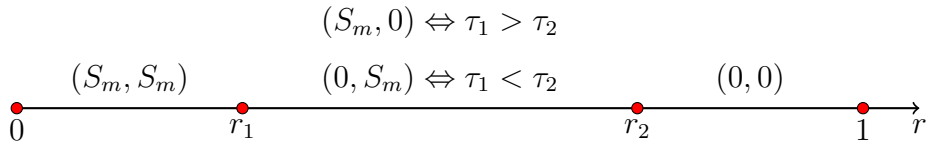

\subsubsection{\textbf{\underline{Case Study ($N=3$)}}}

For simplicity, we start by the case when $T=1$ and then extend it for any value of $T$. In this case, suffix $t$ can be dropped and the optimization problem will be:
    \begin{equation} \label{Eq:SP_Optimization_Problem_N=3_T=1}
        \begin{split}
            & \min \hspace{5mm} L^{\mathcal{P}} + r \sum_{m=1}^{M} \sum_{n=1}^{3} x_n^m\\
            & \text{s.t.} \hspace{5mm} 0 \leq x_n^m \leq S_m, \hspace{5mm} \forall n,m
        \end{split}
    \end{equation}
The problem decomposes to $M$ sub-problems and we have $1+\sum_{n=1}^{3} \binom{3}{n}=8$ sub-cases. Each sub-case leads to a different LP. The optimal solution is that of the one which yields a minimum service cost among all sub-cases. It is enough to consider only two sub-cases: (i) when all terms inside the $\bigl(.\bigr)^+$ functions are positive leading to a LP, where:
    \begin{equation}\label{Eq:Load_N=3_T=1_Case1}
        \begin{split}
            L^{\mathcal{P}} = \underbrace{\sum_{m=1}^{M} \sum_{n=1}^{3} S_m p_n^m}_\text{reactive load} - \underbrace{\sum_{m=1}^{M} \sum_{n=1}^{3} x_n^m p_n^m}_\text{caching gain} - \underbrace{\sum_{m=1}^{M} \sum_{i \neq j} \left( \Bigl(x_i^m p_j^m + x_j^m p_i^m \Bigr) \sum_{l=1}^{L} \theta_i^l \theta_j^l \right)}_\text{sharing gain}
        \end{split}
    \end{equation}
(ii) when all the terms inside the $\bigl(.\bigr)^+$ functions are negative and can be removed leading to another LP, where:
    \begin{equation}\label{Eq:Load_N=3_T=1_Case8}
        \begin{split}
            L^{\mathcal{P}} = \underbrace{\sum_{m=1}^{M} \sum_{n=1}^{3} S_m p_n^m}_\text{reactive load} - \underbrace{\sum_{m=1}^{M} \sum_{n=1}^{3} x_n^m p_n^m}_\text{caching gain}  - \underbrace{\sum_{m=1}^{M} \sum_{n=1}^{3} \Bigl( S_m - x_n^m \Bigr) p_n^m v_n}_\text{sharing gain}
        \end{split}
    \end{equation}
where,
\begin{equation*}
        v_i = \sum_{l=1}^{L} \biggl(\theta_i^l \theta_j^l+ \theta_i^l \theta_k^l \Bigl(1 - \theta_j^l \Bigl) \biggr), i \neq j \neq k
\end{equation*}

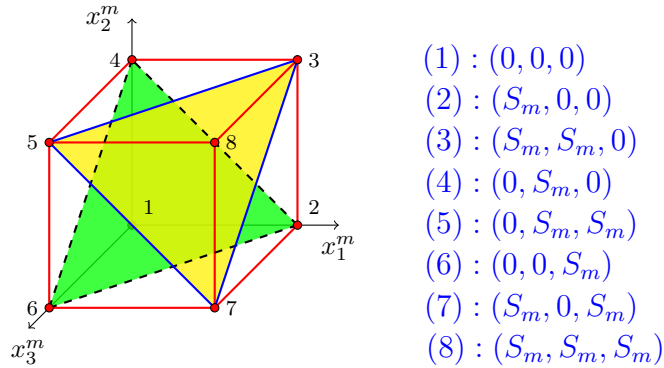
\begin{figure}[h!]
    \centering
    \begin{tikzpicture}[scale=0.55]
        \draw[->] (0,0) -- (5,0) node[anchor=north] {\small{$x_1^m$}};
        \draw[->] (0,0) -- (0,5) node[anchor=east] {\small{$x_2^m$}};
        \draw[->] (0,0) -- (-2.5,-2.5) node[anchor=north] {\small{$x_3^m$}};
        \coordinate (A1) at (0cm,0cm);
        \coordinate (A2) at (4cm,0cm);
        \coordinate (A3) at (4cm,4cm);
        \coordinate (A4) at (0cm,4cm);
        \coordinate (A5) at (-2cm,2cm);
        \coordinate (A6) at (-2cm,-2cm);
        \coordinate (A7) at (2cm,-2cm);
        \coordinate (A8) at (2cm,2cm);
        \draw [fill=red] (A1) circle (0.1cm);
    	\fill[fill=green,opacity=0.75] (A2) -- (A4) -- (A6) -- cycle;
    	\fill[fill=yellow,opacity=0.75] (A3) -- (A5) -- (A7) -- cycle;
    	\draw[thick,red] (A2) -- (A3);
    	\draw[thick,red] (A3) -- (A4);
    	\draw[thick,red] (A5) -- (A6);
    	\draw[thick,red] (A6) -- (A7);
    	\draw[thick,red] (A7) -- (A8);
    	\draw[thick,red] (A8) -- (A5);
    	\draw[thick,red] (A2) -- (A7); 
    	\draw[thick,red] (A3) -- (A8); 
    	\draw[thick,red] (A4) -- (A5); 
    	\draw[thick,dashed,black] (A2) -- (A4); 
    	\draw[thick,dashed,black] (A2) -- (A6); 
    	\draw[thick,dashed,black] (A4) -- (A6); 
    	\draw[thick,blue] (A3) -- (A5); 
    	\draw[thick,blue] (A5) -- (A7); 
    	\draw[thick,blue] (A7) -- (A3); 
        \draw [fill=red] (A2) circle (0.1cm);
        \draw [fill=red] (A3) circle (0.1cm);
        \draw [fill=red] (A4) circle (0.1cm);
        \draw [fill=red] (A5) circle (0.1cm);
        \draw [fill=red] (A6) circle (0.1cm);
        \draw [fill=red] (A7) circle (0.1cm);
        \draw [fill=red] (A8) circle (0.1cm);
        \draw (A1) node[anchor=south west] {\scriptsize{$1$}};
        \draw (A2) node[anchor=south west] {\scriptsize{$2$}};
        \draw (A3) node[anchor=west] {\scriptsize{$3$}};
        \draw (A4) node[anchor=east] {\scriptsize{$4$}};
        \draw (A5) node[anchor=east] {\scriptsize{$5$}};
        \draw (A6) node[anchor=east] {\scriptsize{$6$}};
        \draw (A7) node[anchor=west] {\scriptsize{$7$}};
        \draw (A8) node[anchor=west] {\scriptsize{$8$}};

        \draw[blue] (9,4) node{$(1): (0,0,0)$};
        \draw[blue] (9.35,3) node{$(2): (S_m,0,0)$};
        \draw[blue] (9.65,2) node{$(3): (S_m,S_m,0)$};
        \draw[blue] (9.35,1) node{$(4): (0,S_m,0)$};
        \draw[blue] (9.65,0) node{$(5): (0,S_m,S_m)$};
        \draw[blue] (9.35,-1) node{$(6): (0,0,S_m)$};
        \draw[blue] (9.65,-2) node{$(7): (S_m,0,S_m)$};
        \draw[blue] (10,-3) node{$(8): (S_m,S_m,S_m)$};

    \end{tikzpicture}
    \caption{Feasibility regions for $N=3$}
    \label{Fig:Feasibility_Regions_N=3}
\end{figure}

Note that the first term in (\ref{Eq:Load_N=3_T=1_Case1}), (\ref{Eq:Load_N=3_T=1_Case8}) represents the \emph{reactive load} of the network, the second term represents the \emph{caching gain} achieved by caching $x_1^m,x_2^m$ and $x_3^m$, while the last term represents the \emph{sharing gain} attained when each user transfer his cached data to other users. The feasibility regions of these LPs are shown in Figure \ref{Fig:Feasibility_Regions_N=3}. Each LP has 4 extreme points and its solution is one of them. The optimal solution of the problem is the solution of the LP which yields a minimum service cost. This solution can be extended for a general $T$. Figure \ref{Fig:Opt_Soln_Char_N=3_T>1} shows SP's optimal policy which is explained in the following proposition.
\begin{proposition}\label{Prop:Optimal_Solution_N=3_T=1}
For $N=3$ and for each data content $m$, the SP optimal centralized caching policy follows: 
\begin{enumerate}
    \item $(S_m,S_m,S_m)$ is optimal if and only if:
    \begin{equation*}
        \begin{aligned}
        r < r_1 = \min\limits_{i=1,2,3} \frac{1}{T} \sum\limits_{t=1}^{T} p_{i,t}^m \Bigl(1-v_{i,t}\Bigr), 0 \leq r_1 \leq 1
        \end{aligned}
    \end{equation*}
    \item Caching once is better than twice if and only if:
    \end{enumerate}
    \begin{equation*}
        \begin{aligned}
            r > r_2 = \max\limits_{i=1,2,3} \left\{ \max\limits_{k \neq j \neq i} \left\{ \frac{1}{T} \sum\limits_{t=1}^{T} \left( p_{j,t}^m \Bigl( 1 - \sum\limits_{l=1}^{L} \theta_{i,t}^l \theta_{j,t}^l \Bigr) + p_{k,t}^m \sum\limits_{l=1}^{L} \theta_{j,t}^l \theta_{k,t}^l \Bigl( 1 - \theta_{i,t}^l \Bigr) \right) \right\} \right\} 
        \end{aligned}
    \end{equation*}
    \begin{enumerate} \setcounter{enumi}{2}
    \item $(0,0,0)$ is optimal if and only if:
    \end{enumerate}
    \begin{equation*}
        \begin{aligned}
            r > & r_3 = \max\limits_{i=1,2,3} \frac{1}{T} \sum\limits_{t=1}^{T} \left( p_{i,t}^m + \sum\limits_{j \neq i} p_{j,t}^m \sum\limits_{l=1}^{L} \theta_{i,t}^l \theta_{j,t}^l \right) \geq 0 
        \end{aligned}
    \end{equation*}
    \begin{enumerate} \setcounter{enumi}{3}
    \item For the case of caching once, it is optimal to cache at user $k_1$ if and only if:
    \end{enumerate}
    \begin{equation} \label{Eq:Ranking_N=3_Once}
        k_1 = \argmax\limits_{i=1,2,3} \hspace{2mm} \frac{1}{T} \sum\limits_{t=1}^{T} \left( p_{i,t}^m + \sum\limits_{j \neq i} \left( p_{j,t}^m \sum\limits_{l=1}^{L} \theta_{i,t}^l \theta_{j,t}^l \right) \right)
    \end{equation}
    \begin{enumerate} \setcounter{enumi}{4}
    \item For the case of caching twice, it is optimal to cache at users $k_1$ and $k_2$ if and only if:
    \end{enumerate}
    \begin{equation} \label{Eq:Ranking_N=3_Twice}
        \begin{split}
            (k_1,k_2) & = \argmax\limits_{\substack{i,j=1,2,3 \\ i\neq j}} \hspace{2mm} \frac{1}{T} \sum\limits_{t=1}^{T} \left( p_{i,t}^m + p_{j,t}^m + \sum\limits_{k \neq i,j} p_{k,t}^m v_{k,t} \right)
        \end{split}
    \end{equation}        
\begin{proof} The proof is straight forward by evaluating the cost function for every extreme point and comparing them to find the optimal solution.
\end{proof}
\end{proposition}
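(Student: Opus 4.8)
The plan is to reuse the $N=2$ machinery. Fix a content $m$, drop the superscript $m$ and momentarily the slot index $t$, and minimise the per-item objective $L^{\mathcal{P}}+r(x_1+x_2+x_3)$ over the cube $[0,S_m]^3$. Every $(\cdot)^+$ term is convex and piecewise-affine with nonpositive slope in each coordinate (caching more never raises the load), so the objective is convex and is affine on each sign-pattern region. The first step is to argue that a minimiser may be taken among the eight cube vertices of Figure~(\ref{Fig:Feasibility_Regions_N=3}), reducing the problem to a finite comparison; the two extreme linear programs (\ref{Eq:Load_N=3_T=1_Case1}) and (\ref{Eq:Load_N=3_T=1_Case8}) together expose exactly these eight candidates.

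I would then group the candidates by the number $c\in\{0,1,2,3\}$ of fully cached users. For fixed $c$ the reward $c\,r\,S_m$ is constant across the $\binom{3}{c}$ configurations, so selecting \emph{which} users to cache reduces to maximising the load reduction (the caching-plus-sharing gain). Evaluating this gain shows that caching user $i$ alone serves its own demand $p_i$ plus each other user's demand whenever they are co-located, giving the single-cache ranking (\ref{Eq:Ranking_N=3_Once}); caching $\{i,j\}$ serves $p_i+p_j$ in full plus the third user's demand with meeting probability $v_k$, giving the double-cache ranking (\ref{Eq:Ranking_N=3_Twice}). This proves items (4) and (5) and lets me set $G_c$ to be the best gain achievable with $c$ caches, with $G_0=0$.

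With the within-count winners fixed, the cost under count $c$ is $R-G_c+c\,r\,S_m$, where $R=\sum_n S_m p_n$ is the reactive load, so the optimal $c$ is governed purely by marginal gains: one caches an additional user iff its marginal gain exceeds $r S_m$. I would verify that the marginals decrease in $c$ — each new cache overlaps, through the sharing term, with demand already covered — which both orders $r_1\le r_2\le r_3$ and yields them in closed form. Normalising $G_3-G_2$, $G_2-G_1$ and $G_1$ by $S_m$ reproduces items (1)--(3); in particular the factor $1-v_i$ in $r_1$ is exactly the marginal value of the third cache, since user $i$'s demand was already met with probability $v_i$ by the other two. Reinstating $t$ is immediate: the caching vector is chosen once for the whole day and is independent of the slot, so time-averaging the load replaces each per-slot quantity by its average $\tfrac1T\sum_t(\cdot)$, precisely as the thresholds are written.

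The genuine obstacle is the vertex reduction asserted in the first paragraph. The three planes $x_i+x_j=S_m$ meet at the interior point $(S_m/2,S_m/2,S_m/2)$, a true vertex of the arrangement that lies in neither extreme tetrahedron $\{1,2,4,6\}$ nor $\{3,5,7,8\}$, so ``only two sub-cases matter'' is not automatic. A symmetric computation (e.g.\ two equiprobable locations) shows that for an intermediate band of $r$ this centroid can strictly beat every cube corner, because partial caching still reduces the load of a user who travels alone. To make the statement airtight I would either exhibit a dominance condition on the mobility weights under which all interior and mixed-sign vertices are beaten by a corner, or restrict $r$ to the ranges where that holds; settling this domination is the hard part, whereas the gain evaluations and threshold comparisons above are routine once the candidate set is pinned down.
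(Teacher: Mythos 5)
Your overall route is the same as the paper's: decompose by the sign pattern of the $(\cdot)^+$ terms, note the objective is affine on each resulting region, and compare candidate extreme points, with the within-count rankings (your items for \eqref{Eq:Ranking_N=3_Once} and \eqref{Eq:Ranking_N=3_Twice}) and the marginal-gain thresholds $r_1\le r_2\le r_3$ falling out of that comparison exactly as the paper intends. Those parts of your argument are fine and, if anything, better organized than the paper's one-line proof.

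The obstacle you flag in your last paragraph, however, is genuine, and it is not a defect of your write-up alone --- it is a gap in the paper's own argument. The region on which the load takes the form \eqref{Eq:Load_N=3_T=1_Case8} is $\{x\in[0,S_m]^3:\ x_i^m+x_j^m\ge S_m\ \forall i\ne j\}$ (the triple constraint being implied), and this polytope has \emph{five} vertices, not four: the cube corners $(S_m,S_m,S_m)$, $(S_m,S_m,0)$, $(S_m,0,S_m)$, $(0,S_m,S_m)$ shown in Figure \ref{Fig:Feasibility_Regions_N=3}, plus $(S_m/2,S_m/2,S_m/2)$, where the three pair constraints are active with linearly independent normals. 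Evaluating \eqref{Eq:Load_N=3_T=1_Case8} there gives $\tfrac{S_m}{2}\sum_n p_n^m(1-v_n)+\tfrac{3}{2}rS_m$. In the symmetric instance you suggest ($L=2$ equiprobable locations, $p_n^m=p$ for all $n$, hence $v_n=\tfrac34$) this equals $S_m\bigl(\tfrac{3p}{8}+\tfrac{3r}{2}\bigr)$, while the best cube corner, $(S_m,S_m,0)$, gives $S_m\bigl(\tfrac{p}{4}+2r\bigr)$, the single-cache corner gives $S_m(p+r)$, and the remaining corners give $3rS_m$ and $3pS_m$; for $\tfrac{p}{4}<r<\tfrac{5p}{4}$ the centroid is strictly below all eight. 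So the ``if and only if'' claims of the proposition, read against the objective in \eqref{Eq:SP_Optimization_Problem_N=3_T=1}, actually fail on that band of $r$: partial caching wins, and no comparison confined to the eight corners can detect it. Your proposed repair --- either prove a dominance condition on the mobility weights under which every non-corner vertex of the arrangement is beaten by a corner, or restrict the caching variables to $\{0,S_m\}$ (equivalently, restrict $r$ outside the offending band) --- is exactly what is needed; the paper's proof (``evaluate the cost function for every extreme point'') silently enumerates only the corners and therefore does not close this gap either.
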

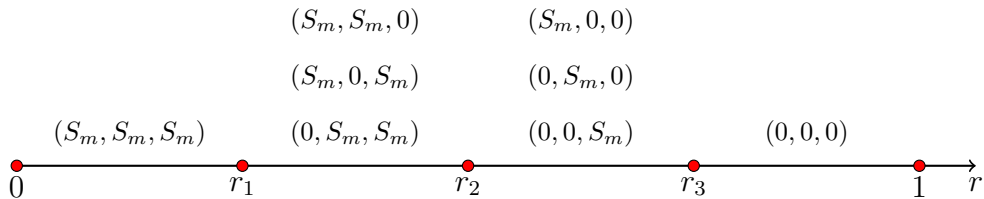
\begin{figure}[h!]
    \centering
    \begin{tikzpicture}[scale=0.75]
        \draw[->,thick] (0,0) -- (17,0) node[anchor=north] {$r$};
        \draw	(0,0) node[anchor=north] {$0$}
                (4,0) node[anchor=north] {$r_1$}
                (8,0) node[anchor=north] {$r_2$}
                (12,0) node[anchor=north] {$r_3$}
                (16,0) node[anchor=north] {$1$};
        \draw	(2,1) node[anchor=north] {\small{$(S_m,S_m,S_m)$}}
                (6,3) node[anchor=north] {\small{$(S_m,S_m,0)$}}
                (6,2) node[anchor=north] {\small{$(S_m,0,S_m)$}}
                (6,1) node[anchor=north] {\small{$(0,S_m,S_m)$}}
                (10,3) node[anchor=north] {\small{$(S_m,0,0)$}}
                (10,2) node[anchor=north] {\small{$(0,S_m,0)$}}
                (10,1) node[anchor=north] {\small{$(0,0,S_m)$}}
                (14,1) node[anchor=north] {\small{$(0,0,0)$}};
        \draw [fill=red] (0,0) circle (0.1cm);
        \draw [fill=red] (4,0) circle (0.1cm);
        \draw [fill=red] (8,0) circle (0.1cm);
        \draw [fill=red] (12,0) circle (0.1cm);
        \draw [fill=red] (16,0) circle (0.1cm);
    \end{tikzpicture}
    \caption{Optimal centralized caching policy for $N=3$}
    \label{Fig:Opt_Soln_Char_N=3_T>1}
\end{figure}

\subsubsection{\textbf{\underline{Optimal Policy for $N$-users}}}

Now, from the previous cases, we can infer the optimal centralized caching policy for a general number of users $N$ as shown in Figure \ref{Fig:Opt_Soln_Char_N_T>1}. The optimal solution depends on the value of $r$. For each data content $m$, SP needs to find $N$ points on the scale or $r$. In particular, there are $N+1$ regions on this scale starting from caching nothing up to caching everywhere. These points (regions) determine how much caching should be done based on the exact value of $r$. Inside each region of $r$, users should be ranked to determine which user caches this data content. Based on the demand and mobility profiles of all users, SP finds an optimal caching decision $\{x_n^{m*}\}_n^m$ which minimizes the incurred service cost. We summarize the optimal caching policy in Algorithm \ref{Alg:Optimal_Centralized_Policy}.
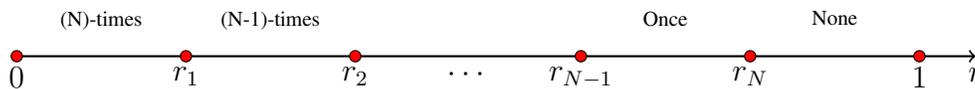
\begin{figure}[h!]
    \centering
    \begin{tikzpicture}[scale=0.75]
        \draw[->,thick] (0,0) -- (17,0) node[anchor=north] {$r$};
        \draw	(0,0) node[anchor=north] {$0$}
    		    (3,0) node[anchor=north] {$r_1$}
    		    (6,0) node[anchor=north] {$r_2$}
    		    (8,0) node[anchor=north] {$\cdots$}
    		    (10,0) node[anchor=north] {$r_{N-1}$}
    		    (13,0) node[anchor=north] {$r_N$}
    		    (16,0) node[anchor=north] {$1$};
        \draw	(1.5,1) node[anchor=north] {\scriptsize{(N)-times}}
    		    (4.5,1) node[anchor=north] {\scriptsize{(N-1)-times}}
    		    (11.5,1) node[anchor=north] {\scriptsize{Once}}
    		    (14.5,1) node[anchor=north] {\scriptsize{None}};
        \draw [fill=red] (0,0) circle (0.1cm);
        \draw [fill=red] (3,0) circle (0.1cm);
        \draw [fill=red] (6,0) circle (0.1cm);
        \draw [fill=red] (10,0) circle (0.1cm);
        \draw [fill=red] (13,0) circle (0.1cm);
        \draw [fill=red] (16,0) circle (0.1cm);
    \end{tikzpicture}
    \caption{Optimal centralized caching policy for $N$ users}
    \label{Fig:Opt_Soln_Char_N_T>1}
\end{figure}
\begin{remark}\label{Rk:Remark_N_1}
The optimal caching decision depends on the exact value of $r$ which represents the caching cost. In particular, smaller value of $r$ yields more caching and vice verse.
\end{remark}
\begin{remark}
Higher values of users meeting probabilities shift points $r_1,r_2,\cdots,r_{N-1}$ to the left while point $r_N$ moves to the right. In particular, the possibility of over-caching reduces when the meeting probabilities increase. Moreover, higher demand levels shift all points to the right and yields more caching. 
\end{remark}

\begin{proposition}
    The complexity of the SP optimal centralized caching policy described in Algorithm \ref{Alg:Optimal_Centralized_Policy} grows exponentially with the number of users $N$.
\begin{proof}Although the number of points on the scale of $r$ increases linearly with the number of users $N$, the total number of required terms for users ranking is given by:
    \begin{equation*}
        \begin{split}
            &\binom{N}{1}+\binom{N}{2}+\cdots+\binom{N}{N-2}+\binom{N}{N-1} \\
            &= \sum\limits_{k=0}^{N} \binom{N}{k} - \binom{N}{N} - \binom{N}{0} = 2^N - 2
        \end{split}
    \end{equation*}
which increases exponentially with $N$.
\end{proof}
\end{proposition}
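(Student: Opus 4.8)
The plan is to separate the two contributions to the cost of Algorithm \ref{Alg:Optimal_Centralized_Policy} and to show that the dominant one is exponential in $N$. On one hand, the algorithm locates the threshold points $r_1,\dots,r_N$ that partition the $r$-axis into $N+1$ regions (as in Figure \ref{Fig:Opt_Soln_Char_N_T>1}); the number of these points is only $N$, hence linear. On the other hand, \emph{inside} each region the algorithm must rank the users to decide which subset actually caches the content, and it is the number of terms needed for this ranking that I would count carefully, since it is what dominates the running cost.

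The key step is to generalize the ranking rules established for the case studies---equations (\ref{Eq:Ranking_N=3_Once}) and (\ref{Eq:Ranking_N=3_Twice}) for $N=3$---to an arbitrary $N$. In the region where the content is cached exactly $k$ times, the optimal solution sits at the best among the extreme points in which $k$ coordinates equal $S_m$ and the remaining $N-k$ equal $0$; selecting the optimal such vertex amounts to choosing the size-$k$ subset of users that maximizes an associated demand and mobility score, exactly as (\ref{Eq:Ranking_N=3_Once}) picks the best single user and (\ref{Eq:Ranking_N=3_Twice}) the best pair. Since there are $\binom{N}{k}$ candidate subsets of size $k$, each contributing one score term that must be formed and compared, this region requires $\binom{N}{k}$ ranking terms.

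Summing over every region that genuinely requires a ranking then yields the total term count. The two extreme regions---caching \emph{none} ($k=0$) and caching \emph{everywhere} ($k=N$)---each admit a unique configuration and need no ranking, so I would sum $k$ from $1$ to $N-1$ and invoke the binomial identity
\begin{equation*}
\sum_{k=1}^{N-1}\binom{N}{k} = \sum_{k=0}^{N}\binom{N}{k} - \binom{N}{0} - \binom{N}{N} = 2^N - 2,
\end{equation*}
which grows exponentially in $N$ and establishes the claim. The main obstacle I anticipate is not the arithmetic but the justification of the $\binom{N}{k}$ count: one must argue that no shortcut exists, i.e.\ that each size-$k$ subset corresponds to a \emph{distinct} extreme point of the associated linear program whose objective value differs in general, so that all $\binom{N}{k}$ scores must be formed before the maximizing subset can be identified. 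This is precisely the step that lifts the $N=2,3$ observations into a general statement about the work performed, and it is what drives the exponential blow-up.
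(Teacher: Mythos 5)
Your proposal is correct and follows essentially the same route as the paper's proof: both count the $\binom{N}{k}$ subset-ranking terms needed in the region where the content is cached $k$ times, sum over $k=1,\dots,N-1$ (the $k=0$ and $k=N$ regions being trivial), and obtain $2^N-2$ via the binomial identity. Your added remarks justifying why each of the $\binom{N}{k}$ candidate extreme points must be scored are a sensible elaboration of what the paper leaves implicit, but they do not change the argument.
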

\begin{algorithm}
    \caption{Optimal Centralized Caching Policy for $N$-users}
    \label{Alg:Optimal_Centralized_Policy}
    \begin{algorithmic}
    \State \textbf{Given:} $N,M,L,T,\vec{\Pi}_n,\vec{\Theta}_{n}$
    \For{$m=1$ to $M$}
        \Call{Caching}{$N,L,T,m,p_{n,t}^{m},\theta_{n,t}^{l}$}
    \EndFor
        \Procedure{Caching}{$N,L,T,m,p_{n,t}^{m},\theta_{n,t}^{l}$}
            \State \textbf{\underline{Step 1:}}
            \State Create a level-$(1)$ ranked list using: 
                \begin{equation}\label{Eq:Ranking_Optimal_N_1}
                    s_i = \frac{1}{T} \sum\limits_{t=1}^{T} \left( p_{i,t}^m+\sum\limits_{j \neq i} \biggl( p_{j,t}^m \sum\limits_{l=1}^{L} \theta_{i,t}^l \theta_{j,t}^l \biggr) \right), \forall i
                \end{equation}
            \State $r_N$ is the first item in list, $k_1$ is its corresponding index.
            \State User $k_1$ cache it while all other users will not.
            \State \textbf{\underline{Step 2:}}
            \State Create a level-(2) ranked list using:
                \begin{equation}\label{Eq:Ranking_Optimal_N_2}
                    \begin{aligned}
                        s_{ij} = \frac{1}{T} \sum\limits_{t=1}^{T} \Biggl( p_{i,t}^m + p_{j,t}^m + \sum\limits_{k \neq i,j} \biggl( p_{k,t}^m \sum\limits_{l=1}^{L} \Bigl( \theta_{i,t}^l \theta_{k,t}^l + \theta_{j,t}^l \theta_{k,t}^l - \theta_{i,t}^l \theta_{j,t}^l \theta_{k,t}^l \Bigr) \biggr) \Biggr), \forall i,j
                    \end{aligned}
                \end{equation}
            \State $k_1,k_2$ are the corresponding indexes of the first item.
			 \State Users $k_1$ and $k_2$ cache it while others will not.
            \State calculate $r_{N-1}$ as follows:
                \begin{equation}\label{Eq:Optimal_N_r_N-1}
                    \begin{aligned}
                        r_{N-1} = \frac{1}{T} \sum\limits_{t=1}^{T} \Biggl( p_{k_2,t}^m \Bigl(1 - \sum\limits_{l=1}^{L} \theta_{k_1,t}^l \theta_{k_2,t}^l \Bigr) + \sum\limits_{n \neq k_1,k_2} p_{n,t}^m \sum\limits_{l=1}^{L} \theta_{k_2,t}^l \theta_{n,t}^l \Bigl(1 - \theta_{k_1,t}^l \Bigr) \Biggr)
                    \end{aligned}
                \end{equation}
            \State $\hspace{1cm} \vdots$
            \State \textbf{\underline{Step $N-1$:}}
            \State Create a level-$(N-1)$ ranked list using 
                \begin{equation}\label{Eq:Ranking_Optimal_N_N-1}
                    \begin{aligned}
                    s_i &= \frac{1}{T} \sum\limits_{t=1}^{T} \biggl( p_{i,t}^m \Bigl( 1 - v_{i,t} \Bigr) \biggr), \forall i,\\
						v_{i,t} &= \sum\limits_{l=1}^{L} \left[ \sum\limits_{i \neq j} \theta_{i,t}^l \theta_{j,t}^l -  \sum\limits_{i \neq j \neq k} \theta_{i,t}^l \theta_{j,t}^l \theta_{k,t}^l \cdots (-1)^{N} \prod_{n=1}^{N} \theta_{n,t}^l \right]
                    \end{aligned}
                \end{equation}
            \State $r_1$ is the smallest item in list, $k_N$ is the corresponding index.
            \State User $k_N$ will not receive a caching assignment and all other users cache it.
        \EndProcedure
    \end{algorithmic}
\end{algorithm}

\subsection{Greedy Centralized Caching Policy}

The complexity of optimal policy discussed in the previous section motivates us to introduce a greedy algorithm as a sub-optimal caching policy. Moreover, this algorithm allows us to establish upper and lower bounds on the achieved service gain of the optimal caching policy. The main idea of this algorithm depends on ranking users based on the level-$(1)$ list stated in (\ref{Eq:Ranking_Optimal_N_1}). Users are picked from this list in order to cache the amount required of each data content. Algorithm \ref{Alg:Suboptimal_Centralized_Policy} summarizes the steps of this greedy algorithm. 
\begin{algorithm}[h!]
    \caption{Greedy Centralized Caching Policy for $N$-users}
    \label{Alg:Suboptimal_Centralized_Policy}
    \begin{algorithmic}
    \State \textbf{Given:} $N,M,L,T,\vec{\Pi}_n,\vec{\Theta}_{n}$
    \For{$m=1$ to $M$}
        \Call{Caching}{$N,L,T,m,p_{n,t}^{m},\theta_{n,t}^{l}$}
    \EndFor
        \Procedure{Caching}{$N,L,T,m,p_{n,t}^{m},\theta_{n,t}^{l}$}
            \State \textbf{\underline{Step 1:}}
            \State Create a level-$(1)$ ranked list using: 
                \begin{equation*}\label{Eq:Ranking_Suboptimal_N_1}
                    s_i = \frac{1}{T} \sum\limits_{t=1}^{T} \left( p_{i,t}^m+\sum\limits_{j \neq i} \biggl( p_{j,t}^m \sum\limits_{l=1}^{L} \theta_{i,t}^l \theta_{j,t}^l \biggr) \right), \forall i
                \end{equation*}
            \State $r_N$ is the first item in list, $k_1$ is its corresponding index.
            \State \textbf{\underline{Step 2:}}
            \State Exclude user $k_1$ and create a level-$(2)$ ranked list using:
                \begin{equation}\label{Eq:Ranking_Suboptimal_N_2}
                    \begin{aligned}
                        s_j = \frac{1}{T} \sum\limits_{t=1}^{T} \Biggl( p_{j,t}^m \Bigl(1 - \sum\limits_{l=1}^{L} \theta_{k_1,t}^l \theta_{j,t}^l \Bigr) + \sum\limits_{k \neq j} p_{k,t}^m \sum\limits_{l=1}^{L} \theta_{j,t}^l \theta_{k,t}^l \Bigl(1 - \theta_{k_1,t}^l \Bigr) \Biggr), \forall j \neq k_1
                    \end{aligned}
                \end{equation}
            \State $r_{N-1}$ is the first item in list.
            \State $k_2$ is the corresponding index.
            \State \textbf{\underline{Step 3:}}
            \State Exclude users $k_1,k_2$ and create a level-$(3)$ ranked list by:
                \small{
                \begin{equation}\label{Eq:Ranking_Suboptimal_N_3}
                    \begin{aligned}
                    s_k &= \frac{1}{T} \sum\limits_{t=1}^{T} \Biggl( p_{k,t}^m \biggl[ 1 - \sum\limits_{l=1}^{L} \Bigl( \theta_{k_1,t}^l \theta_{k,t}^l + \theta_{k_2,t}^l \theta_{k,t}^l \bigl( 1 - \theta_{k_1,t}^l \bigr) \Bigr) \biggr] \\
                    &+ \sum\limits_{n \neq k} p_{n,t}^m \sum\limits_{l=1}^{L} \theta_{k,t}^l \theta_{n,t}^l \Bigl( 1 - \theta_{k_1,t}^l \Bigr) \Bigl( 1 - \theta_{k_2,t}^l \Bigr) \Biggr), \forall k \neq k_1,k_2
                    \end{aligned}
                \end{equation}}
            \State $r_{N-2}$ is the first item in list, $k_3$ is the corresponding index.
            \State $\hspace{1cm} \vdots$
            \State So on for all other ranges of $r$.
        \EndProcedure
    \end{algorithmic}
\end{algorithm}
\begin{proposition}
    Complexity of the greedy caching policy described in Algorithm \ref{Alg:Suboptimal_Centralized_Policy} grows in a polynomial order with the number of users $N$.
\begin{proof} We still need to find $N$ points the scale of $r$ which increases linearly with the number of users. However, the total number of required terms is given by:
    \begin{equation*}
        1+2+\cdots+(N-2)+(N-1)+N = \frac{N(N+1)}{2}
    \end{equation*}    
which grows in a polynomial order with $N$.
\end{proof}
\end{proposition}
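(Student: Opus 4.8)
The plan is to count the total number of ranking-term evaluations performed by Algorithm \ref{Alg:Suboptimal_Centralized_Policy} and to show that this count is a polynomial in $N$. As in the optimal policy, the greedy algorithm produces $N$ threshold points $r_1, \ldots, r_N$ on the reward scale, one per level; since each level contributes exactly one point, the number of threshold points is $N$, which is linear in $N$ and therefore contributes only a linear factor to the overall cost.

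First I would isolate what happens at a single level $j$. The crucial structural feature distinguishing the greedy scheme from the optimal one is that at each level the previously selected users $k_1, \ldots, k_{j-1}$ are \emph{fixed} rather than re-optimized. Consequently, at level $j$ the algorithm only scores the remaining $N - (j-1)$ candidate users against this fixed prefix, producing $N-j+1$ ranking terms — as exhibited explicitly by (\ref{Eq:Ranking_Suboptimal_N_2}) and (\ref{Eq:Ranking_Suboptimal_N_3}) at levels $2$ and $3$, where one user (resp. two users) is excluded before the remaining scores are formed. This is in sharp contrast to the optimal algorithm, whose level $j$ must examine all $\binom{N}{j}$ subsets of size $j$.

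Summing the per-level counts over all $N$ levels then gives the total number of required terms,
\begin{equation*}
\sum_{j=1}^{N} (N-j+1) = \sum_{i=1}^{N} i = \frac{N(N+1)}{2},
\end{equation*}
which is $O(N^2)$ and hence polynomial in $N$. Since each term is an inner $\frac{1}{T}\sum_t \sum_l$ summation whose size does not depend on $N$, and the whole procedure is run once per data item inside the outer loop over $\mathcal{M}$, these factors preserve the polynomial order, completing the argument.

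I do not expect a genuine obstacle here: the statement reduces to a counting argument once the greedy exclusion step is identified. The only point requiring care is making explicit why fixing $k_1,\ldots,k_{j-1}$ collapses the binomial sum $\sum_{j} \binom{N}{j} = 2^N - 2$ of the optimal policy into the arithmetic sum $\sum_{j} (N-j+1)$; this reduction is precisely the source of both the polynomial speed-up and the loss of optimality, and it is the one place where I would state the restriction cleanly rather than treat it as self-evident.
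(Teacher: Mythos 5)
Your proposal is correct and follows essentially the same counting argument as the paper: the greedy algorithm evaluates $N-j+1$ ranking terms at level $j$ because the previously selected users are fixed, and summing over levels yields $\sum_{j=1}^{N}(N-j+1)=\frac{N(N+1)}{2}$, exactly the paper's $1+2+\cdots+N$ total. Your added remarks on why fixing the prefix collapses the $\binom{N}{j}$ subset enumeration of the optimal policy, and on the $N$-independent inner sums and the outer loop over $\mathcal{M}$, merely make explicit what the paper leaves implicit.
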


\subsection{Upper and Lower Bounds Analysis}

The greedy caching policy ranks users based on the level-$(1)$ list stated in  (\ref{Eq:Ranking_Optimal_N_1}). This ranking is similar to that of caching once in the optimal policy. However, this doesn't guarantee that level-$(1)$ ranking is still valid in all other cases. For instance, the first two users in this list are not guaranteed to be the same users for the case of caching twice in the optimal policy. Hence, the greedy caching policy forms a lower bound for the proactive service gain achieved by the optimal policy.
\begin{theorem}
    Under demand and mobility profiles of $N$-users and for $T \geq 1$, the optimal proactive service gain $\bigtriangleup C\left(\vec{\Pi}_n,\vec{\Theta}_n\right)$ of (\ref{Eq:SP_Optimization_Problem}) achieved by Algorithm (\ref{Alg:Optimal_Centralized_Policy}) satisfies:
        \begin{equation}\label{Eq:Gain_Lower_Bound}
            \bigtriangleup C\left(\vec{\Pi}_n,\vec{\Theta}_n\right) \geq \bigtriangleup C_{L}\left(\vec{\Pi}_n,\vec{\Theta}_n\right)
        \end{equation}  
    where, $\bigtriangleup C_{L}\left(\vec{\Pi}_n,\vec{\Theta}_n\right)$ is the gain achieved by Algorithm (\ref{Alg:Suboptimal_Centralized_Policy}).
\begin{proof} We compare the gain achieved by the optimal and greedy policies. For example, in the case of caching once both policies achieve the same gain and we have:
    \begin{equation}\label{Eq:Gain_Lower_Bound_Once}
        \begin{aligned}
            \bigtriangleup C\left(\vec{\Pi}_n,\vec{\Theta}_n\right) = \bigtriangleup C_{L}\left(\vec{\Pi}_n,\vec{\Theta}_n\right)= \sum\limits_{m=1}^{M} S_m \left[\frac{1}{T} \sum\limits_{t=1}^{T} \left( p_{i,t}^m+\sum\limits_{j \neq i} \biggl( p_{j,t}^m \sum\limits_{l=1}^{L} \theta_{i,t}^l \theta_{j,t}^l \biggr) \right) -r \right]
        \end{aligned}
    \end{equation}
In the case of caching twice, the proactive service gain of the optimal caching policy is:
    \begin{equation}\label{Eq:Gain_Lower_Bound_Twice}
        \begin{aligned}
            \bigtriangleup C & \left(\vec{\Pi}_n,\vec{\Theta}_n\right) = \\  &\sum\limits_{m=1}^{M} S_m \left[\frac{1}{T} \sum\limits_{t=1}^{T} \Biggl( p_{i,t}^m + p_{j,t}^m + \sum\limits_{k \neq i,j} p_{k,t}^m \sum\limits_{l=1}^{L} \biggl( \theta_{i,t}^l \theta_{k,t}^l + \theta_{j,t}^l \theta_{k,t}^l - \theta_{i,t}^l \theta_{j,t}^l \theta_{k,t}^l \biggr) \Biggr) -2r \right]
        \end{aligned}
    \end{equation}
This gain depends mainly on the selection of users $i$ and $j$. Since level-$(1)$ ranking can not guarantee that these users are the same users as in the greedy policy, this gain is larger than or equal to the gain achieved by the greedy caching policy. The same approach applies to show a similar result for all other cases.
\end{proof}
\end{theorem}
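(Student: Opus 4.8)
The plan is to prove the bound by comparing the two policies \emph{level by level} in the number of cached copies, which at bottom amounts to the observation that the greedy policy produces one \emph{feasible} caching configuration while the optimal policy produces the \emph{best} one. Both Algorithm (\ref{Alg:Optimal_Centralized_Policy}) and Algorithm (\ref{Alg:Suboptimal_Centralized_Policy}) serve each content $m$ by committing some number $k$ of users to hold a full copy $S_m$, and the gain from caching $k$ copies is a single scalar functional $G_k(\mathcal{S})$ of the chosen caching set $\mathcal{S}$ with $|\mathcal{S}|=k$ --- a sharing-plus-caching gain minus the fixed cost $kr$. The heart of the argument is that for each level $k$ the optimal policy evaluates $G_k$ at the best admissible set (an unconstrained $\argmax$ over all $\binom{N}{k}$ subsets), whereas the greedy policy evaluates the \emph{same} $G_k$ at a single nested set $\{k_1,\dots,k_k\}$ grown incrementally from the level-$(1)$ ranking (\ref{Eq:Ranking_Optimal_N_1}); since the greedy set is one feasible competitor in the optimal maximization, $G_k(\mathcal{S}^{\mathrm{opt}})=\max_{|\mathcal{S}|=k}G_k(\mathcal{S})\geq G_k(\mathcal{S}^{\mathrm{greedy}})$.

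First I would dispose of the base level of caching once: both algorithms use the identical score $s_i$ of (\ref{Eq:Ranking_Optimal_N_1}) and pick the same maximizer $k_1$, so the two gains coincide exactly, as recorded in (\ref{Eq:Gain_Lower_Bound_Once}). This fixes the shared first coordinate that the greedy policy carries into every later step and establishes equality at the first level.

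Next I would treat caching twice and then the generic level. For caching twice the optimal gain (\ref{Eq:Gain_Lower_Bound_Twice}) is attained at the pair $(i,j)$ maximizing the bracketed score through (\ref{Eq:Ranking_Optimal_N_2}), while the greedy policy freezes the first coordinate at $k_1$ and optimizes only the second via (\ref{Eq:Ranking_Suboptimal_N_2}); the greedy pair $(k_1,k_2)$ therefore lies inside the feasible set of the optimal $\argmax$, and subtracting the common cost $2r$ preserves the resulting inequality. The same step repeats at every level: the optimal policy picks the best $k$-subset while the greedy policy commits to the nested set inherited from earlier steps, which remains an admissible candidate. Chaining the per-level bounds closes the argument for any fixed $r$: writing $\hat{k}$ for the level the greedy policy selects and abbreviating $G_k^{\mathrm{opt}}=\max_{|\mathcal{S}|=k}G_k(\mathcal{S})$, the optimal gain equals $\max_k G_k^{\mathrm{opt}}\geq G_{\hat{k}}^{\mathrm{opt}}\geq G_{\hat{k}}(\mathcal{S}^{\mathrm{greedy}})$, the last quantity being precisely the greedy gain; summing over contents $m$ then gives (\ref{Eq:Gain_Lower_Bound}).

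The main obstacle will be the bookkeeping of the sharing-gain term as $k$ grows, since its inclusion--exclusion structure --- the weights $v_{i,t}$ appearing in (\ref{Eq:Ranking_Optimal_N_N-1}) --- becomes progressively more intricate, and one must verify that it defines the \emph{same} total-gain functional $G_k$ in both algorithms, differing only through the argument $\mathcal{S}$. This is subtle because the greedy policy reports \emph{incremental} scores (\ref{Eq:Ranking_Suboptimal_N_2})--(\ref{Eq:Ranking_Suboptimal_N_3}) rather than total gains, so I would first check that summing these increments along the nested chain reconstitutes exactly $G_k(\mathcal{S}^{\mathrm{greedy}})$ and matches the optimal total score (\ref{Eq:Ranking_Optimal_N_2}) evaluated on the same set. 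Once the functional is confirmed common to both policies, the feasibility and maximization inequality yields the per-level domination, and hence the global bound, immediately.
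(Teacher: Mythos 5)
Your proposal is correct and follows essentially the same route as the paper: both arguments establish equality at the level of caching once and then observe that, at each higher level, the greedy policy's nested user set is merely one feasible candidate in the subset maximization that the optimal policy performs, so the optimal gain dominates. You are somewhat more careful than the paper in flagging that the greedy algorithm's incremental scores must be checked to sum to the same total-gain functional and in chaining the per-level bounds through the level $\hat{k}$ actually selected, but these are refinements of, not departures from, the paper's argument.
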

Moreover, the greedy algorithm allows us to establish an upper bound for the optimal proactive service gain. Level-$(1)$ ranking defined in (\ref{Eq:Ranking_Optimal_N_1}) generates $N$ items representing the gain achieved by caching data content $m$ once at one of the users. Adding these gains up provides us with an upper bound for the gain achieved by the optimal caching policy in all cases. For example, the first item is this ranked list is an upper bound for the case of caching once in the optimal policy. The sum of the first two items is an upper bound for the case of caching twice in the optimal policy and so on. This result is stated in the following theorem.
\begin{theorem}
    Under demand and mobility profiles of $N$-users and for $T \geq 1$, the optimal proactive service gain $\bigtriangleup C\left(\vec{\Pi}_n,\vec{\Theta}_n\right)$ of (\ref{Eq:SP_Optimization_Problem}) achieved by Algorithm (\ref{Alg:Optimal_Centralized_Policy}) satisfies:
        \begin{equation}\label{Eq:Gain_Upper_Bound}
            \bigtriangleup C\left(\vec{\Pi}_n,\vec{\Theta}_n\right) \leq \bigtriangleup C_{U}\left(\vec{\Pi}_n,\vec{\Theta}_n\right)
        \end{equation}  
    where, $\bigtriangleup C_{U}\left(\vec{\Pi}_n,\vec{\Theta}_n\right)$ is the gain achieved by adding up gains defined in (\ref{Eq:Ranking_Optimal_N_1}).
\begin{proof} We show this result by comparing the gain achieved by the optimal caching policy with the again achieved by the greedy caching policy by adding up items of level-$(1)$ ranking list (\ref{Eq:Ranking_Optimal_N_1}). For the case of caching once we have:
    \begin{equation}\label{Eq:Gain_Upper_Bound_Once}
        \bigtriangleup C\left(\vec{\Pi}_n,\vec{\Theta}_n\right) = \bigtriangleup C_{U}\left(\vec{\Pi}_n,\vec{\Theta}_n\right)
    \end{equation} 
which is the same value as in (\ref{Eq:Gain_Lower_Bound_Once}). For the case of caching twice, we have:
    \begin{equation}\label{Eq:Gain_Upper_Bound_Twice}
        \begin{aligned}
            \bigtriangleup C_{U}\left(\vec{\Pi}_n,\vec{\Theta}_n\right) &= \sum\limits_{m=1}^{M} S_m \left[\frac{1}{T} \sum\limits_{t=1}^{T} \Biggl( \Bigl(p_{i,t}^m + p_{j,t}^m\Bigr)\Bigl( 1 + \sum\limits_{l=1}^{L} \theta_{i,t}^l \theta_{j,t}^l \Bigr) \right. \\
            &\left. + \sum\limits_{k \neq i,j} \biggl( p_{k,t}^m \sum\limits_{l=1}^{L} \Bigl( \theta_{i,t}^l \theta_{k,t}^l + \theta_{j,t}^l \theta_{k,t}^l \Bigr) \biggr) \Biggr) -2r \right]
        \end{aligned}
    \end{equation}
Comparing (\ref{Eq:Gain_Upper_Bound_Twice}) with (\ref{Eq:Gain_Lower_Bound_Twice}) we see that:
    \begin{equation}\label{Eq:Gain_Upper_Bound_Twice_2}
        \bigtriangleup C_{U} \left(\vec{\Pi}_n,\vec{\Theta}_n \right) \geq \bigtriangleup C\left(\vec{\Pi}_n,\vec{\Theta}_n\right)
    \end{equation}
Same approach applies to all other cases.
\end{proof}
\end{theorem}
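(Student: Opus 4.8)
The plan is to prove the bound regime by regime, following the partition of the $r$-axis in Figure \ref{Fig:Opt_Soln_Char_N_T>1}: for a fixed value of $r$ the optimal policy caches each content $m$ at exactly some fixed number of users, and $\bigtriangleup C_{U}$ is defined to be the sum of the correspondingly many largest entries of the level-$(1)$ list (\ref{Eq:Ranking_Optimal_N_1}), with the reward contribution identical on both sides. Because both the optimal gain and $\bigtriangleup C_{U}$ are additive over contents (each weighted by $S_m$), it suffices to fix a single content $m$ and a single caching depth and to compare the optimal gain with the summed level-$(1)$ gains. For the depth-one regime the two coincide, as recorded in (\ref{Eq:Gain_Upper_Bound_Once}), so all of the work lies in depths two and higher.

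The heart of the argument is a per-tuple inequality. Fix a set $A$ of users at which content $m$ is cached, and let $\text{gain}(A)$ denote the induced proactive gain, i.e. the caching-plus-sharing reduction of the load (\ref{Eq:Load_Proactive}) relative to the reactive baseline (\ref{Eq:Cost_Reactive}). I would first write $\text{gain}(A)$ explicitly: the cached users contribute their local hits $\sum_{i\in A} p_{i,t}^m$, and each uncached user $j$ is served whenever it shares a location with at least one member of $A$, an event of probability $\sum_{l} \theta_{j,t}^l \bigl(1-\prod_{i\in A}(1-\theta_{i,t}^l)\bigr)$ by the independence of users' positions. Summing the level-$(1)$ scores instead gives $\sum_{i\in A} s_i$, whose cross terms contribute for the same $j$ the larger quantity $\sum_{l} \theta_{j,t}^l \sum_{i\in A}\theta_{i,t}^l$, and which carries additional spurious ``within-$A$'' sharing terms $\sum_{i,i'\in A,\, i\ne i'} p_{i',t}^m \sum_{l} \theta_{i,t}^l\theta_{i',t}^l$ absent from $\text{gain}(A)$. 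Both discrepancies are non-negative: the spurious terms are products of probabilities, and the cross-term gap is governed by the union bound $\sum_{i\in A}\theta_{i,t}^l \ge 1-\prod_{i\in A}(1-\theta_{i,t}^l)$. Averaging over $t$ therefore yields $\text{gain}(A)\le \sum_{i\in A}s_i$; the explicit two-cache difference obtained by subtracting (\ref{Eq:Gain_Lower_Bound_Twice}) from (\ref{Eq:Gain_Upper_Bound_Twice}), namely $(p_{i,t}^m+p_{j,t}^m)\sum_{l}\theta_{i,t}^l\theta_{j,t}^l+\sum_{k\ne i,j}p_{k,t}^m\sum_{l}\theta_{i,t}^l\theta_{j,t}^l\theta_{k,t}^l\ge 0$, is precisely this inequality when $|A|=2$.

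With the per-tuple bound in hand I would close the proof by maximizing over tuples. The optimal policy in a given regime caches at some particular set $A^{*}$; applying the per-tuple inequality to $A^{*}$ and then bounding $\sum_{i\in A^{*}}s_i\le \max_{|A|=|A^{*}|}\sum_{i\in A}s_i$, the right-hand side is exactly the sum of the largest $|A^{*}|$ entries of the level-$(1)$ list, which is the defining expression for $\bigtriangleup C_{U}$ in that regime. Reinstating the weight $S_m$, summing over $m$, and cancelling the common reward term on both sides then delivers (\ref{Eq:Gain_Upper_Bound}).

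The main obstacle is the general per-tuple inequality. For two caches the overlap collapses to the single triple-product correction displayed above, whose non-negativity is obvious; for larger sets one must first recognize that the true sharing probability of an uncached user is exactly the union probability $1-\prod_{i\in A}(1-\theta_{i,t}^l)$ — the full inclusion--exclusion expansion encoded by $v_{i,t}$ in (\ref{Eq:Ranking_Optimal_N_N-1}) — and then dominate it by the additive $\sum_{i\in A}\theta_{i,t}^l$ supplied by the level-$(1)$ sum via the Weierstrass/union-bound inequality. Once this identification is made the sign is immediate, so the genuine labour is the bookkeeping that pairs each term of $\sum_{i\in A}s_i$ with its counterpart, or its excess, in $\text{gain}(A)$.
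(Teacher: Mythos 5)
Your proposal is correct and follows the same basic strategy as the paper: fix a content $m$ and a caching depth, and show that the sum of level-$(1)$ scores from (\ref{Eq:Ranking_Optimal_N_1}) dominates the optimal gain at that depth. The paper, however, only verifies this explicitly for depths one and two --- for depth two the discrepancy collapses to the single correction $(p_{i,t}^m+p_{j,t}^m)\sum_{l}\theta_{i,t}^l\theta_{j,t}^l+\sum_{k\neq i,j}p_{k,t}^m\sum_{l}\theta_{i,t}^l\theta_{j,t}^l\theta_{k,t}^l\geq 0$ you identify --- and then asserts that the ``same approach applies to all other cases.'' Your per-tuple inequality $\mathrm{gain}(A)\leq\sum_{i\in A}s_i$, proved by recognizing the true sharing probability of an uncached user as the union probability $\sum_l\theta_{j,t}^l\bigl(1-\prod_{i\in A}(1-\theta_{i,t}^l)\bigr)$ and dominating it by the additive $\sum_l\theta_{j,t}^l\sum_{i\in A}\theta_{i,t}^l$ via the union bound (plus discarding the non-negative within-$A$ cross terms), is exactly the general principle that makes the paper's depth-two computation work, so you have supplied the missing general case rather than a different route. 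You also handle a point the paper glosses over: the optimal set $A^{*}$ at a given depth need not consist of the top $|A^{*}|$ users of the level-$(1)$ list, and your extra step $\sum_{i\in A^{*}}s_i\leq\max_{|A|=|A^{*}|}\sum_{i\in A}s_i$ is needed to land on the quantity $\bigtriangleup C_{U}$ as the paper defines it. In short, same approach, but your version is the complete argument.
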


\subsection{Choosing Optimal Reward}\label{Sec:Choosing_Reward}

The SP service cost consists of two components, one component represents the corresponding cost for serving the peak load, and another component representing the caching cost affected by the reward value $r$. In the previous section, we introduced the solution of the centralized caching scheme based on the reward value $r$. The SP pays this reward back to users to incentivize them to participate in this model. The aim of the SP is to minimize this reward as much as possible to reduce its expected cost. On the other hand, users try to get as much reward as they can to minimize their expected payment. This creates a tension between the SP and users and raises the question of what is the optimal rewarding value on which both SP and users will agree. To see this we assume that each user has an isolated memory $Z_n$. Suppose, for simplicity, that all users have the same memory capacity of $Z_n$. Therefore, SP can see an aggregate memory of size $Z=NZ_n$. Users have a reward preference to assign a certain memory of their devices for this caching process. We consider a linear relation between the reward $(r)$ and the memory size $(Z)$, i.e. $Z(r)=\beta r$, for some $\beta>0$. In particular, users assign more memory for larger values of the reward and vice versa. Now, SP aims to let users assign enough memory for caching but this memory should not be more than required. We can see this by rewriting (\ref{Eq:SP_Optimization_Problem}) as follows:
    \begin{equation}\label{Eq:SP_Optimization_Problem_Modified}
        \begin{aligned}
            & \min \hspace{5mm} L_t^{\mathcal{P}}\\
            & \text{s.t.} \hspace{5mm} \sum_{n=1}^{N} \sum_{m=1}^{M} x_n^m \leq Z.
        \end{aligned}
    \end{equation}
Converting this problem to an unconstrained problem, we get
    \begin{equation}\label{Eq:SP_Optimization_Problem_Modified2}
        \begin{aligned}
            \underset{x_n^m}{\min} \hspace{3mm} L_t^{\mathcal{P}} + r\left(\sum_{n=1}^{N} \sum_{m=1}^{M} x_n^m - Z\right) =  \underset{x_n^m}{\min} \hspace{3mm} \left( L_t^{\mathcal{P}} + r \sum_{n=1}^{N} \sum_{m=1}^{M} x_n^m \right)- rZ 
        \end{aligned}
    \end{equation}
where $r$ is the largrange multiplier corresponding to the constraint $\sum\limits_{n=1}^{N} \sum\limits_{m=1}^{M} x_n^m \leq Z$. Notice that the first term in (\ref{Eq:SP_Optimization_Problem_Modified2}) is the problem we solved in Section \ref{Sec:Centralized_Optimal} without having this memory constraint. The SP solves this optimization problem for each value of Z. The optimal choice of $r$ depends on users preference. So, plotting the Lagrangian multiplier $r$ for the solution of (\ref{Eq:SP_Optimization_Problem_Modified2}) versus $Z$ we get the curve shown in Figure \ref{Fig:Choosing_Reward}. The optimal reward  choice $r^*$ corresponds to a memory $Z^*$ and this point is determined by the intersection of the SP reward preference and the users reward preference. This shows that we can find a unique optimal solution for the problem discussed in Section \ref{Sec:Centralized_Optimal} by knowing $Z(r)$. Different functions of $Z(r)$ leads to another optimal solution. Therefore, one of the parameters that SP needs to learn about users is their preference function $Z(r)$.

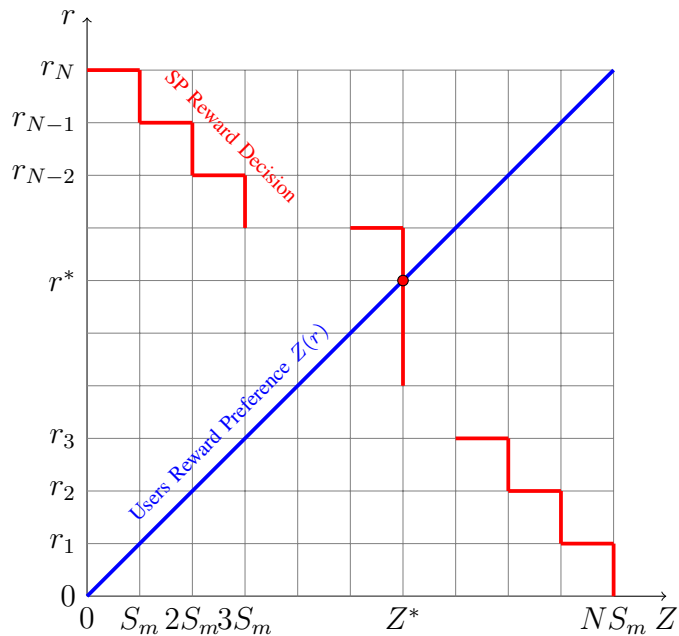
\begin{figure}[h!]
	\centering
      \begin{tikzpicture}[scale=0.70]
            
            \draw[->] (0,0) -- (11,0) node[anchor=north] {$Z$};
            \draw	(0,0) node[anchor=north] {$0$}
				(1,0) node[anchor=north] {$S_m$}
				(2,0) node[anchor=north] {$2S_m$}
				(3,0) node[anchor=north] {$3S_m$}
				(6,0) node[anchor=north] {\textbf{$Z^*$}}
                	(10,0) node[anchor=north] {$NS_m$};
            \draw[->] (0,0) -- (0,11) node[anchor=east] {$r$};
            \draw	(0,0) node[anchor=east] {$0$}
				(0,1) node[anchor=east] {$r_1$}
				(0,2) node[anchor=east] {$r_2$}
				(0,3) node[anchor=east] {$r_3$}
				(0,6) node[anchor=east] {\textbf{$r^*$}}
				(0,8) node[anchor=east] {$r_{N-2}$}
				(0,9) node[anchor=east] {$r_{N-1}$}
                	(0,10) node[anchor=east] {$r_N$};           
                
            \draw[step=1cm,gray,very thin] (0,0) grid (10,10);            

            \draw[line width=0.5mm,red] (0,10) -- (1,10);
            \draw[line width=0.5mm,red] (1,10) -- (1,9);
            \draw[line width=0.5mm,red] (1,9) -- (2,9);
            \draw[line width=0.5mm,red] (2,9) -- (2,8);
            \draw[line width=0.5mm,red] (2,8) -- (3,8);
            \draw[line width=0.5mm,red] (3,8) -- (3,7);
            \draw[line width=0.5mm,red] (5,7) -- (6,7);
            \draw[line width=0.5mm,red] (6,7) -- (6,4);
            \draw[line width=0.5mm,red] (7,3) -- (8,3);
            \draw[line width=0.5mm,red] (8,3) -- (8,2);
            \draw[line width=0.5mm,red] (8,2) -- (9,2);
            \draw[line width=0.5mm,red] (9,2) -- (9,1);
            \draw[line width=0.5mm,red] (9,1) -- (10,1);
            \draw[line width=0.5mm,red] (10,1) -- (10,0);

            \draw[line width=0.5mm,blue] (0,0) -- (10,10); 

            \draw [fill=red] (6,6) circle (0.1cm);

            \draw (3,9) node[red,rotate=-45,anchor=north] {\scriptsize{SP Reward Decision}};
            \draw  (3,3) node[blue,rotate=45,anchor=south] {\scriptsize{Users Reward Preference $Z(r)$}};
              
        \end{tikzpicture}	
	\caption{Choosing an optimal reward ($r$) for $N$ users in the centralized caching scheme}
	\label{Fig:Choosing_Reward}
\end{figure}

\subsection{Impact of User Mobility}\label{Sec:Mobility_Impact_Centralized}

User's mobility statistics affect the meeting probability between users. Going back to Algorithm \ref{Alg:Optimal_Centralized_Policy}, and the final result shown in Figure \ref{Fig:Opt_Soln_Char_N_T>1}, we can see that these meeting probabilities shift the points $r_1,r_2,\cdots,r_N$ and change the SP decision. 
\begin{definition}
The average meeting probability between any two users $i$ and $j$ is defined by:
\begin{equation}
    \alpha_{ij} = \frac{1}{T} \sum_{t=1}^{T} \sum_{l=1}^{L} \theta_{i,t}^{l} \theta_{j,t}^{l}
\end{equation}
\end{definition}
When users are moving such that they are meeting each other with a higher probability, i.e. the terms $\alpha_{ij}$ increase for all $i,j$, the points $r_1,r_2,\cdots,r_{N-1}$ shift to the left while the point $r_N$ shifts to the right. This means that the possibility of over-caching of this data content decreases and the SP optimal decision tends to be caching this content once. When the meeting probabilities decrease, the points $r_1,r_2,\cdots,r_{N-1}$ shift to the right. This means that the possibility of over-caching increases and the SP optimal tends to be caching this data content at multiple users.

To show how user's mobility affects the SP caching decision, we consider the case when users have similar mobility statistics. In particular, we consider the case when $\theta_{1,t}^{l}=\theta_{2,t}^{l}=\cdots=\theta_{N,t}^{l}=\theta_{t}^{l},\forall l \in \{1,2,\cdots,L\}$. We also consider the case when users move uniformly over all locations, i.e. $\theta_{t}^{l}=\frac{1}{L}, \forall t \in \{1,2,\cdots,T\}$. We keep the assumption that each user may have a different interest in each data content. This similarity in users mobility simplifies the optimal caching policy described in Algorithm \ref{Alg:Optimal_Centralized_Policy}. In particular, for any level-$(n)$, we create all possible $n$-tuples indices $a_n=\{i_1,i_2,\cdots,i_n\}$ and choose the optimal set of users with indices $a_n^*$ which satisfies:
\begin{equation}\label{Eq:Ranking_Optimal_Similarity}
    a_n^* = \underset{a_n}{\argmax} \hspace{5mm} \frac{1}{T}\sum_{t=1}^{T} \left( \sum_{i\in a_n} p_{i,t}^m + \left[ \sum_{k=1}^{n} (-1)^{k+1} \binom{n}{k} \sum_{l=1}^{L} \frac{1}{L^{k+1}} \right] \sum_{j \notin a_n} p_{j,t}^m \right)
\end{equation}
For example, at level-$(N-1)$, we have
\begin{equation}
    v_{n,t} = \sum_{k=1}^{N-1} (-1)^{k+1} \binom{N-1}{k} \sum_{l=1}^{L} \frac{1}{L^{k+1}}, \forall t
\end{equation}
and in this case, we choose the set of users by
\begin{equation}
    \begin{aligned}
    a_{N-1}^* &= \underset{a_{N-1}}{\argmax} \hspace{5mm} \frac{1}{T}\sum_{t=1}^{T} \left( \sum_{i\in a_{N-1}} p_{i,t}^m + p_{j,t}^m v_{j,t} \right), j\notin a_{N-1} \\
    &= \underset{a_{N-1}}{\argmax} \hspace{5mm} \frac{1}{T}\sum_{t=1}^{T} \left( \sum_{i=1}^{N} p_{i,t}^m - p_{j,t}^m \bigl(1-v_{j,t}\bigr) \right), j\notin a_{N-1} \\
    &= \frac{1}{T}\sum_{t=1}^{T} \sum_{i=1}^{N} p_{i,t}^m + \underset{j \in \mathcal{N}}{\argmin} \hspace{5mm} \frac{1}{T}\sum_{t=1}^{T} p_{j,t}^m \bigl(1-v_{j,t}\bigr).
    \end{aligned}
\end{equation}
Moreover, the greedy policy, described in Algorithm \ref{Alg:Suboptimal_Centralized_Policy}, becomes more simpler when we consider this similarity in users mobility. For each level-$(n)$ where $n\geq2$, knowing the set of selected users from the previous step $a_{n-1}$, we need to find the user $j^*$. This user will be added to the set $a_{n-1}$ to construct the new set $a_n$. This can be found by \begin{equation}\label{Eq:Ranking_SubOptimal_Similarity}
    \underset{j\notin a_{n-1}}{\argmax} \hspace{5mm} \frac{1}{T}\sum_{t=1}^{T} \left( p_{j,t}^m \left[1- \sum_{k=1}^{n-1} (-1)^{k+1} \binom{n-1}{k} \sum_{l=1}^{L} \frac{1}{L^{k+1}} \right] + \sum_{l=1}^{L} \frac{(L-1)^{n-1}}{L^{n+1}} \underset{\substack{k \neq j \\ k \notin a_{n-1}}}{\sum} p_{k,t}^m \right)
\end{equation}
Notice that (\ref{Eq:Ranking_SubOptimal_Similarity}), can be used to determine the points $r_1,r_2,\cdots,r_N$ for the optimal caching policy as well. We also notice that as $L\rightarrow\infty$, we have $\sum_{l=1}^{L} \frac{1}{L^{k+1}} \rightarrow 0$ and $\sum_{l=1}^{L} \frac{(L-1)^{n-1}}{L^{n+1}} \rightarrow 1$. Consequently, $\alpha_{ij} \rightarrow 0$ and $v_n \rightarrow 0$. Therefore, in this special case, the optimal caching policy depends only on users interest, as shown in Figure \ref{Fig:Opt_Soln_Similarity}, where $\hat{p}_{n}^m = \frac{1}{T} \sum_{t=1}^{T} p_{n,t}^m$.

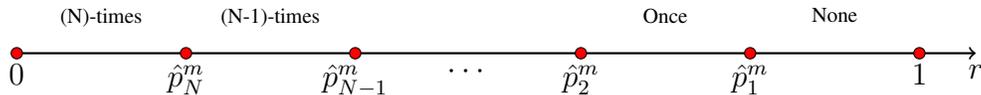
\begin{figure}[h!]
    \centering
    \begin{tikzpicture}[scale=0.75]
        \draw[->,thick] (0,0) -- (17,0) node[anchor=north] {$r$};
        \draw	(0,0) node[anchor=north] {$0$}
    		    (3,0) node[anchor=north] {$\hat{p}_N^m$}
    		    (6,0) node[anchor=north] {$\hat{p}_{N-1}^m$}
    		    (8,0) node[anchor=north] {$\cdots$}
    		    (10,0) node[anchor=north] {$\hat{p}_2^m$}
    		    (13,0) node[anchor=north] {$\hat{p}_1^m$}
    		    (16,0) node[anchor=north] {$1$};
        \draw	(1.5,1) node[anchor=north] {\scriptsize{(N)-times}}
    		    (4.5,1) node[anchor=north] {\scriptsize{(N-1)-times}}
    		    (11.5,1) node[anchor=north] {\scriptsize{Once}}
    		    (14.5,1) node[anchor=north] {\scriptsize{None}};
        \draw [fill=red] (0,0) circle (0.1cm);
        \draw [fill=red] (3,0) circle (0.1cm);
        \draw [fill=red] (6,0) circle (0.1cm);
        \draw [fill=red] (10,0) circle (0.1cm);
        \draw [fill=red] (13,0) circle (0.1cm);
        \draw [fill=red] (16,0) circle (0.1cm);
    \end{tikzpicture}
    \caption{Optimal centralized caching policy for $N$ users with a similar and uniformly distributed mobility patterns over $L\rightarrow\infty$ locations.}
    \label{Fig:Opt_Soln_Similarity}
\end{figure}

Now, considering the memory constraint and the reward preference, discussed in Section \ref{Sec:Choosing_Reward}, we will have the result shown in Figure \ref{Fig:Choosing_Reward_Similarity}. We notice that the optimal reward choice will depend only on users interest. Moreover, all users have interest less than $0.5$ will not receive any caching assignment. 

\begin{figure}[h!]
	\centering
      \begin{tikzpicture}[scale=0.70]
            
            \draw[->] (0,0) -- (11,0) node[anchor=north] {$Z$};
            \draw	(0,0) node[anchor=north] {$0$}
				(1,0) node[anchor=north] {$S_m$}
				(2,0) node[anchor=north] {$2S_m$}
				(3,0) node[anchor=north] {$3S_m$}
				(6,0) node[anchor=north] {\textbf{$Z^*$}}
                	(10,0) node[anchor=north] {$NS_m$};
            \draw[->] (0,0) -- (0,11) node[anchor=east] {$r$};
            \draw	(0,0) node[anchor=east] {$0$}
				(0,1) node[anchor=east] {$\hat{p}_{N}^{m}$}
				(0,2) node[anchor=east] {$\hat{p}_{N-1}^{m}$}
				(0,3) node[anchor=east] {$\hat{p}_{N-2}^{m}$}
				(0,6) node[anchor=east] {\textbf{$r^*$}}
				(0,8) node[anchor=east] {$\hat{p}_{3}^{m}$}
				(0,9) node[anchor=east] {$\hat{p}_{2}^{m}$}
                (0,10) node[anchor=east] {$\hat{p}_{1}^{m}$};           
                
            \draw[step=1cm,gray,very thin] (0,0) grid (10,10);            

            \draw[line width=0.5mm,red] (0,10) -- (1,10);
            \draw[line width=0.5mm,red] (1,10) -- (1,9);
            \draw[line width=0.5mm,red] (1,9) -- (2,9);
            \draw[line width=0.5mm,red] (2,9) -- (2,8);
            \draw[line width=0.5mm,red] (2,8) -- (3,8);
            \draw[line width=0.5mm,red] (3,8) -- (3,7);
            \draw[line width=0.5mm,red] (5,7) -- (6,7);
            \draw[line width=0.5mm,red] (6,7) -- (6,4);
            \draw[line width=0.5mm,red] (7,3) -- (8,3);
            \draw[line width=0.5mm,red] (8,3) -- (8,2);
            \draw[line width=0.5mm,red] (8,2) -- (9,2);
            \draw[line width=0.5mm,red] (9,2) -- (9,1);
            \draw[line width=0.5mm,red] (9,1) -- (10,1);
            \draw[line width=0.5mm,red] (10,1) -- (10,0);

            \draw[line width=0.5mm,blue] (0,0) -- (10,10); 

            \draw [fill=red] (6,6) circle (0.1cm);

            \draw (3,9) node[red,rotate=-45,anchor=north] {\scriptsize{SP Reward Decision}};
            \draw  (3,3) node[blue,rotate=45,anchor=south] {\scriptsize{Users Reward Preference $Z(r)$}};
              
        \end{tikzpicture}	
	\caption{Choosing an optimal reward ($r$) for $N$ users with a similar and uniformly distributed mobility patterns over $L\rightarrow\infty$ locations in the centralized caching scheme}
	\label{Fig:Choosing_Reward_Similarity}
\end{figure}
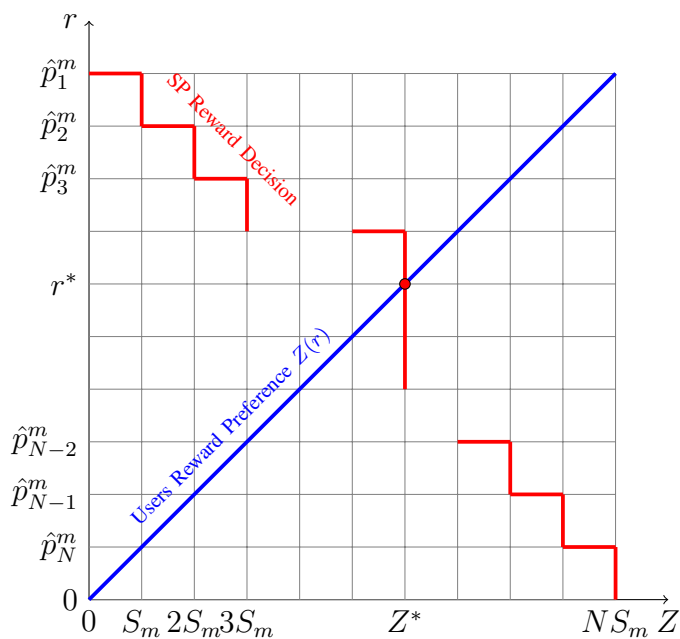

\section{Decentralized Caching Scheme}\label{Sec:Decentralized_Caching}

In the decentralized caching scheme, users make the caching decision based on the reward assigned by the SP. Users aim to leverage the lower network price at off-peak times to cache some data contents for their future request. They also share their proactive downloads with others to minimize their expected payments. We first define users payment function and then state the problem. We introduce an optimal decentralized caching policy and then compare it with the centralized caching policy mentioned in Section \ref{Sec:Centralized_Optimal}.

\subsection{Problem Statement}

In the flat pricing scenario, users behave reactively by requesting each data content $m$ at time $t$ with a probability $p_{n,t}^m$. Therefore, the time-averaged expected payment of user $n$ is given by:
	\begin{equation}\label{Eq:Payment_Reactive}
		\begin{split}
           	\mu_{n}^{\mathcal{R}} &= \limsup_{T\to\infty} \frac{1}{T} \sum_{t=1}^{T} \sum_{m=1}^{M} S_m p_{n,t}^{m} \\
           \end{split}
      \end{equation}
where the superscript $\mathcal{R}$ indicates the reactive operation. In the proposed model, we assume that users are aware of their demand and mobility profiles over $T$ time slots. Each user $n$ caches an amount $x_{n}^{m}$ of data item $m$ for a future possible request. Moreover, user $n$ transfers this data to other users through the D2D communication in any time slot $t$. Users replace the data stored in their devices when it is expired at the end of the day (i.e. at the end of time slot $T$). In particular, users cache data at the beginning of the day and share it throughout the rest of the day. We assume that users share their proactive downloads between them for free and they don't pay for getting their request from others. We assume that the peak price is normalized and the off-peak price is denoted by $r^{'}=1-r$, where $r$ is the reward received from the SP. In particular, instead of having $y_p,y_o$ for the peak and off-peak prices, as in \cite{Hosny2015Towards}, we have 1 and $r^{'}$ in this model. We still assume the maximum price constraint considered before with $\hat{y}=1$. Assuming that $y_p=\hat{y}$, it is enough to have an off-peak price including the SP reward. Hence, under this proactive model, the expected payment of user $n$ at time $t$ is given by:
    \begin{equation}\label{Eq:Payment_Proactive}
        \begin{split}
            \mu_{n,t}^{\mathcal{P}} & = \sum_{m=1}^{M} \sum_{k=2}^{N-1} \sum_{a^n_k \in \mA_n^k} \biggl(S_m - \sum_{j \in a_n^k} x_j^m \biggr)^+ p_{n,t}^{m} \underbrace{\sum_{l=1}^{L} \prod_{j \in a_n^k} \theta_{j,t}^{l} \prod_{i \notin a_n^k} \Bigl(1-\theta_{i,t}^{l}\Bigr)}_\text{meeting some users}\\
            & + \sum_{m=1}^{M} \biggl(S_m - \sum_{k=1}^{N} x_k^m \biggr)^+ p_{n,t}^{m} \underbrace{\sum_{l=1}^{L} \prod_{j=1}^N \theta_{j,t}^{l}}_\text{meeting all users} + \underbrace{r^{'} \sum_{m=1}^{M} x_n^m}_\text{caching cost}\\
            & + \sum_{m=1}^{M} \biggl( S_m - x_n^m \biggr) p_{n,t}^{m} \underbrace{\biggl( 1 - \sum_{l=1}^{L} \sum_{k=2}^{N} \sum_{a_n^k \in \mA_n^k} \prod_{j \in a_n^k} \theta_{j,t}^{l} \prod_{i \notin a_n^k} \Bigl(1-\theta_{i,t}^{l}\Bigr) \biggr)}_\text{user $n$ is alone}
        \end{split}
    \end{equation}
where the superscript $\mathcal{P}$ indicates the \emph{proactive} operation and $\mA_k^n$ is the set of all k-tuples indices including user $n$ . i.e. 
	\begin{equation*}
	    \mA_n^k = \Bigl\{a_k^n := \bigl(n,j_1,\cdots,j_{k-1}\bigr),j_i \neq n, \forall i \Bigr\},
	\end{equation*}
where $|\mA_k^n| = \binom{N-1}{k-1}$. The expected payment in (\ref{Eq:Payment_Proactive}) captures all the cases when user $n$ meets some users, when he meets all other users or when he is alone. He also pays $r^{'}$ for caching an amount $x_n^m$ of each content $m$. The time-averaged expected payment of user $n$ under the proposed model is given by:
    \begin{equation}\label{Eq:Payment_Proactive_Average}
        \mu_n^{\mathcal{P}} = \limsup_{T\to\infty} \frac{1}{T} \sum_{t=1}^{T} \mathbb{E} \biggl[ \mu_{n,t}^\mathcal{P} \biggr]
    \end{equation}


User's gain is the difference between the \emph{reactive} payment and the \emph{proactive} payment, under the proposed model, which is denoted by $\bigtriangleup \mu_n = \mu_n^\mathcal{R}-\mu_n^\mathcal{P}$. Users save some of their payment by finding the requested data items in their local cache or with others user in their neighborhood. User's objective is to achieve a positive gain (i.e. $\bigtriangleup \mu_n > 0$) by finding an optimal caching policy $\{x_n^{m*}\}_m$ which minimizes his time-averaged expected payment. The cached amount of data item $m$ at each user cannot exceed its size as mentioned in (\ref{Eq:Const}). Therefore, the problem is defined as
	\begin{equation}\label{Eq:User_Optimization_Problem}
		\begin{aligned}
			& \min \hspace{5mm} \mu_{n}^{\mathcal{P}}\\
                & \text{s.t.} \hspace{5mm} 0 \leq x_n^m \leq S_m, \hspace{5mm} \forall m.
		\end{aligned}
	\end{equation}

\subsection{Optimal Decentralized Caching Policy Analysis}\label{Sec:Optimal_Distributed}

In this section, we introduce an optimal decentralized caching policy which achieves a minimum payment for users. We can see from (\ref{Eq:Payment_Proactive}) that the objective function in (\ref{Eq:User_Optimization_Problem}) for user $n$ depends on the decision of the other users. Therefore, we start by the assumption that each user has a complete and perfect information about others and then discuss the sufficient statistics required to find his optimal decision. Moreover, without considering a memory constraint, we can decompose the problem in (\ref{Eq:User_Optimization_Problem}) to $M$ sub-problems and solve it for each content $m$, separately. We will introduce an optimal decentralized caching policy without considering any memory constraint. We discuss the effect of the memory constraint and how to choose an optimal memory size in Section \ref{Sec:Choosing_Memory}. To illustrate the idea of our analysis, we start by considering two simple cases for $N=2,3$ and then use them to generalize the solution.

\subsubsection{\textbf{\underline{Case Study ($N=2$)}}}

For simplicity, we start by $T=1$, and then extend it to any value of $T$. In this case, the suffix $t$ can be dropped and the expected payment of user $1$ will be
	\begin{equation} \label{Eq:Payment_N=2_T=1}
		\begin{aligned}
           	\mu_{1}^{\mathcal{P}} & = \sum_{m=1}^{M} \biggl(S_m - \Bigl(x_1^m+x_2^m\Bigr) \biggr)^+ p_1^m \sum_{l=1}^{L} \theta_1^l \theta_2^l\\
                & + \sum_{m=1}^{M} \biggl(S_m - x_1^m \biggr) p_1^m \biggl(1-\sum_{l=1}^{L} \theta_1^l \theta_2^l \biggr) + r^{'} \sum_{m=1}^{M} x_1^m
           \end{aligned}
	\end{equation}
Note that the optimal decision of user $1$ depends on the decision of user $2$. The problem decomposes to $M$ sub-problems and we have two sub-cases: either $x_1^m+x_2^m<S_m$ leading to a linear program (LP), where:
    \begin{equation} \label{Eq:Payment_N=2_T=1_Case1}
        \begin{aligned}
            \mu_1^{\mathcal{P}} & = \underbrace{\sum_{m=1}^{M} S_m p_1^m}_\text{reactive payment} + \underbrace{\sum_{m=1}^{M} \Bigl(r^{'}-p_1^m\Bigr) x_1^m}_\text{caching payment} - \underbrace{\sum_{m=1}^{M} x_2^m p_1^m \sum_{l=1}^{L} \theta_1^l \theta_2^l}_\text{sharing saving}
        \end{aligned}
    \end{equation}
and his optimization problem will be
	\begin{equation} \label{Eq:Problem_N=2_T=1_Case1}
		\begin{aligned}
			& \min \hspace{5mm} \sum_{m=1}^{M} \Bigl(r-p_1^m\Bigr) x_1^m\\
                & \text{s.t.} \hspace{8mm} 0 \leq x_1^m < S_m - x_2^m, \hspace{5mm} \forall m.
		\end{aligned}
	\end{equation}
or $x_1^m+x_2^m \geq S_m$ which leads to another LP, where:
    \begin{equation} \label{Eq:Payment_N=2_T=1_Case2}
        \begin{aligned}
            \mu_1^{\mathcal{P}} & = \underbrace{\sum_{m=1}^{M} S_m p_1^m}_\text{reactive payment} + \underbrace{\sum_{m=1}^{M} \Biggl(r^{'}-p_1^m \biggl(1-\sum\limits_{l=1}^{L} \theta_1^l \theta_2^l \biggr) \Biggr) x_1^m}_\text{caching payment} - \underbrace{\sum_{m=1}^{M} S_m p_1^m \sum_{l=1}^{L} \theta_1^l \theta_2^l}_\text{sharing saving}
        \end{aligned}
    \end{equation}
and his optimization problem will be
           \begin{equation} \label{Eq:Problem_N=2_T=1_Case2}
                \begin{aligned}
                & \min \hspace{5mm} \sum_{m=1}^{M} \Biggl(r^{'}-p_1^m \biggl(1-\sum\limits_{l=1}^{L} \theta_1^l \theta_2^l \biggr) \Biggr) x_1^m\\
                & \text{s.t.} \hspace{8mm} S_m - x_2^m \leq x_1^m \leq S_m, \hspace{5mm} \forall m.
                \end{aligned}
            \end{equation}  
The first term in \ref{Eq:Payment_N=2_T=1_Case1} and \ref{Eq:Payment_N=2_T=1_Case2} represents the reactive payment, the second term represents the payment corresponding to caching these data contents. The last term represents the saving in payment achieved by sharing the proactive download of user $2$. This saving gain depends on the meeting probability between user $1$ and $2$. We use the Best Response (BR) analysis to find the Sub-game Perfect Nash Equilibrium (SPNE) between them, where
    \begin{equation}
        \begin{aligned}
            \mathcal{B}_1 \Bigl(x_2^m\Bigr) &= \argmin_{0 \leq x_1^m \leq S_m} \mu_{1}^{\mathcal{P}} \Bigl( x_1^m, x_2^m \Bigr)\\
            \mathcal{B}_2 \Bigl(x_1^m\Bigr) &= \argmin_{0 \leq x_2^m \leq S_m} \mu_{2}^{\mathcal{P}} \Bigl( x_1^m, x_2^m \Bigr)\\
        \end{aligned}
    \end{equation}

User $1$ will consider all possible decisions of user $2$ and then make the decision that minimizes his payment for each case. Considering the two sub-cases when $x_1^m+x_2^m<S_m$ and $x_1^m+x_2^m \geq S_m$, we can plot the payment of user $1$ versus $x_2^m$ as shown in Figure \ref{Fig:Expected_Payment_Case1_N=2}. We can draw a similar curve for the payment of user $2$ as function of $x_1^m$. Based on these payment functions, we can come up with the best response shown in Figure \ref{Fig:Best_Response_N=2}. We can see that users best response depends on the comparison between the caching cost $r^{'}$ and their interest and mobility statistics. In particular, if $r^{'}<p_1^m \biggl(1-\sum\limits_{l=1}^{L} \theta_1^l \theta_2^l \biggr)$, user $1$ caches this content regardless of what user $2$ does, since its price is very low. If $r^{'}>p_1^m$, user $1$ will not have any incentive to cache this content, since its price is high. When $r^{'}$ lies between $p_1^m \biggl(1-\sum\limits_{l=1}^{L} \theta_1^l \theta_2^l \biggr)$ and $p_1^m$, user $1$ prefers to share the payment with user $2$, i.e. if user $2$ caches an amount $x$ from this content, user $1$ opts to cache an amount $S_m-x$. In particular, when $r^{'}$ lies in this region, partial caching is an optimal solution.

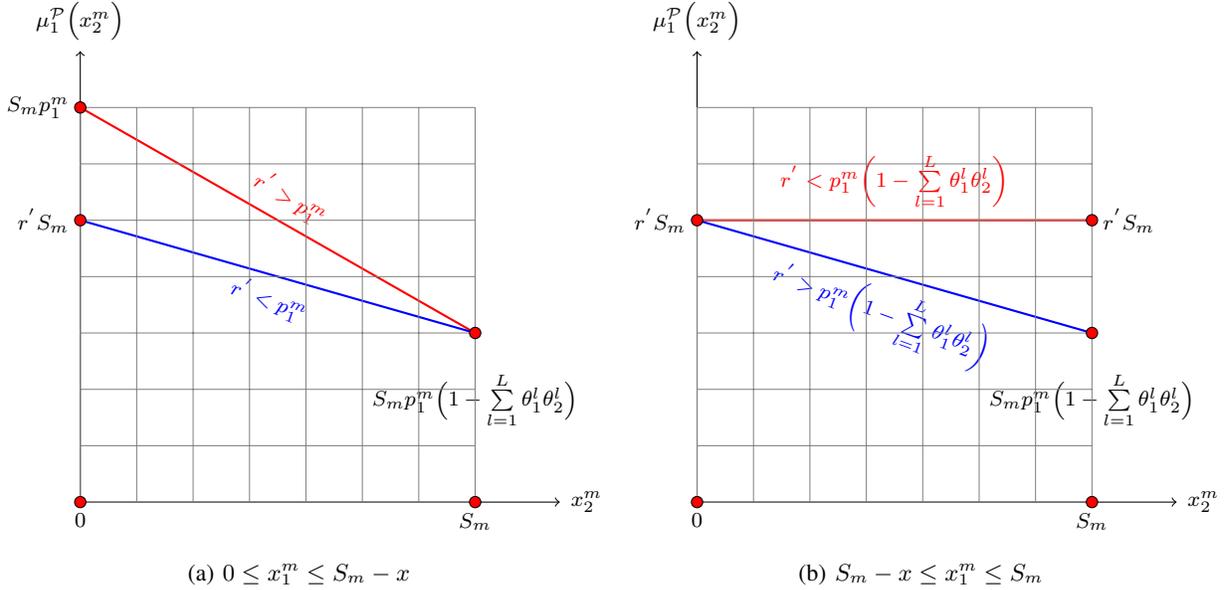
\begin{figure}[h!]
    \centering
    \subfloat[$0 \leq x_1^m \leq S_m - x$]{
        \centering
        \begin{tikzpicture}[scale=0.75]
            \draw[->] (0,0) -- (8.5,0) node[anchor=west] {\scriptsize{$x_2^m$}};
            \draw	(0,0) node[anchor=north] {\scriptsize{$0$}}
            		(7,0) node[anchor=north] {\scriptsize{$S_m$}};
            \draw[->] (0,0) -- (0,8) node[anchor=south] {\scriptsize{$\mu_{1}^\mathcal{P} \Bigl(x_2^m\Bigr)$}};
            \draw	(0,0) node[anchor=east] {}
            		(0,5) node[anchor=east] {\scriptsize{$r^{'} S_m$}}
            		(0,7) node[anchor=east] {\scriptsize{$S_m p_1^m$}}
            		(7,2.5) node[anchor=north] {\scriptsize{$S_m p_1^m \Bigl( 1 - \sum\limits_{l=1}^{L} \theta_1^l \theta_2^l \Bigr)$}}
            		(3.5,5) node[red,rotate=-30,anchor=south] {\scriptsize{$r^{'} > p_1^m$}}
            		(3.5,4) node[blue,rotate=-20,anchor=north] {\scriptsize{$r^{'} < p_1^m$}};  
            \draw[red,thick] (0,7) -- (7,3);
            \draw[blue,thick] (0,5) -- (7,3);
            \draw[step=1cm,gray,very thin] (0,0) grid (7,7);            
            \draw [fill=red] (0,0) circle (0.1cm);
            \draw [fill=red] (7,0) circle (0.1cm);
            \draw [fill=red] (7,3) circle (0.1cm);
            \draw [fill=red] (0,5) circle (0.1cm);
            \draw [fill=red] (0,7) circle (0.1cm);
            
        \end{tikzpicture}}
    \subfloat[$S_m - x \leq x_1^m \leq S_m$]{
        \centering
        \begin{tikzpicture}[scale=0.75]
            \draw[->] (0,0) -- (8.5,0) node[anchor=west] {\scriptsize{$x_2^m$}};
            \draw	(0,0) node[anchor=north] {\scriptsize{$0$}}
            		(7,0) node[anchor=north] {\scriptsize{$S_m$}};
            \draw[->] (0,0) -- (0,8) node[anchor=south] {\scriptsize{$\mu_{1}^\mathcal{P} \Bigl(x_2^m\Bigr)$}};
            \draw	(0,0) node[anchor=east] {}
            		(0,5) node[anchor=east] {\scriptsize{$r^{'} S_m$}}
            		(7,5) node[anchor=west] {\scriptsize{$r^{'} S_m$}}
            		(7,2.5) node[anchor=north] {\scriptsize{$S_m p_1^m \Bigl( 1 - \sum\limits_{l=1}^{L} \theta_1^l \theta_2^l \Bigr)$}}
            		(3.5,5) node[red,anchor=south] {\scriptsize{$r^{'} < p_1^m \biggl(1-\sum\limits_{l=1}^{L} \theta_1^l \theta_2^l \biggr)$}}
            		(3.5,4) node[blue,rotate=-20,anchor=north] {\scriptsize{$r^{'} > p_1^m \biggl(1-\sum\limits_{l=1}^{L} \theta_1^l \theta_2^l \biggr)$}};  
            \draw[red,thick] (0,5) -- (7,5);
            \draw[blue,thick] (0,5) -- (7,3);
            \draw[step=1cm,gray,very thin] (0,0) grid (7,7);            
            \draw [fill=red] (0,0) circle (0.1cm);
            \draw [fill=red] (7,0) circle (0.1cm);
            \draw [fill=red] (7,3) circle (0.1cm);
            \draw [fill=red] (0,5) circle (0.1cm);
            \draw [fill=red] (7,5) circle (0.1cm);

        \end{tikzpicture}}
    \caption{Expected payment for user $1$}
    \label{Fig:Expected_Payment_Case1_N=2}
\end{figure}

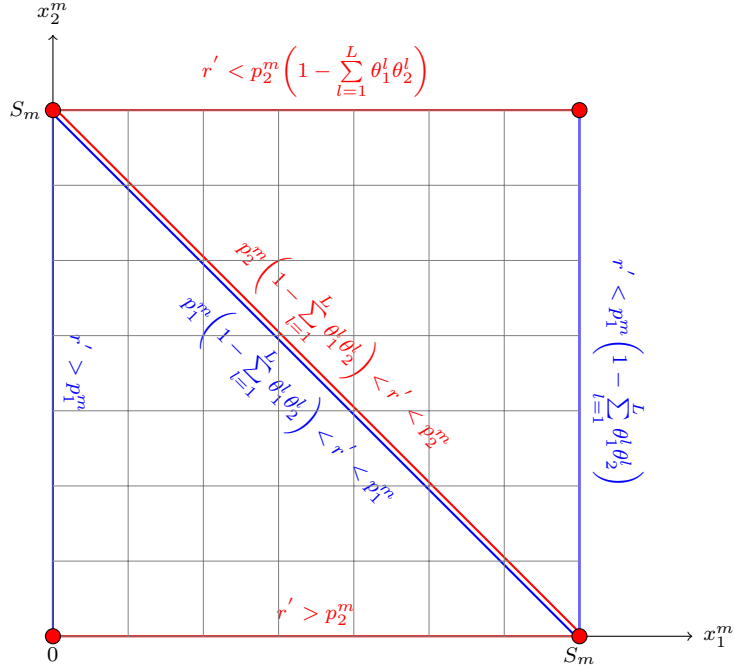
\begin{figure}[h!]
    \centering
    \begin{tikzpicture}[scale=1.0]
        \draw[->] (0,0) -- (8.5,0) node[anchor=west] {\scriptsize{$x_1^m$}};
        \draw	(0,0) node[anchor=north] {\scriptsize{$0$}}
        		(7,0) node[anchor=north] {\scriptsize{$S_m$}};
        \draw[->] (0,0) -- (0,8) node[anchor=south] {\scriptsize{$x_2^m$}};
        \draw	(0,0) node[anchor=east] {}
        		(0,7) node[anchor=east] {\scriptsize{$S_m$}}
        		(3.5,7) node[red,anchor=south] {\scriptsize{$r^{'} < p_2^m \biggl(1-\sum\limits_{l=1}^{L} \theta_1^l \theta_2^l \biggr)$}}
        		(3.5,3.5) node[red,rotate=-45,anchor=south] {\scriptsize{$p_2^m \biggl(1-\sum\limits_{l=1}^{L} \theta_1^l \theta_2^l \biggr) < r^{'} < p_2^m$}}
        		(3.5,0) node[red,anchor=south] {\scriptsize{$r^{'} > p_2^m$}}
        		(7,3.5) node[blue,rotate=-90,anchor=south] {\scriptsize{$r^{'} < p_1^m \biggl(1-\sum\limits_{l=1}^{L} \theta_1^l \theta_2^l \biggr)$}}
                (3.5,3.5) node[blue,rotate=-45,anchor=north] {\scriptsize{$p_1^m \biggl(1-\sum\limits_{l=1}^{L} \theta_1^l \theta_2^l \biggr) < r^{'} < p_1^m$}}
                (0,3.5) node[blue,rotate=-90,anchor=south] {\scriptsize{$r^{'} > p_1^m$}};  
        \draw[red,thick] (0,0) -- (7,0);
        \draw[red,thick] (7.05,0) -- (0,7.05);
        \draw[red,thick] (0,7) -- (7,7);
        \draw[blue,thick] (0,0) -- (0,7);
        \draw[blue,thick] (0,6.95) -- (6.95,0);
        \draw[blue,thick] (7,0) -- (7,7);
        \draw[step=1cm,gray,very thin] (0,0) grid (7,7);            
        \draw [fill=red] (0,0) circle (0.1cm);
        \draw [fill=red] (7,0) circle (0.1cm);
        \draw [fill=red] (0,7) circle (0.1cm);
        \draw [fill=red] (7,7) circle (0.1cm);
    \end{tikzpicture}
    \caption{Best Response (BR) of user $1$}
    \label{Fig:Best_Response_N=2}
\end{figure}

Now, without loss of generality, we can assume that $p_1^m>p_2^m$ and hence we have $p_1^m \biggl(1-\sum\limits_{l=1}^{L} \theta_1^l \theta_2^l \biggr)>p_2^m \biggl(1-\sum\limits_{l=1}^{L} \theta_1^l \theta_2^l \biggr)$.  We can consider two sub-cases, either $p_2^m < p_1^m \biggl(1-\sum\limits_{l=1}^{L} \theta_1^l \theta_2^l \biggr)$ or $p_2^m \geq p_1^m \biggl(1-\sum\limits_{l=1}^{L} \theta_1^l \theta_2^l \biggr)$, which leads to the solutions shown in Figures \ref{Fig:Opt_Soln_Char_N=2_T=1_Case_1} and \ref{Fig:Opt_Soln_Char_N=2_T=1_Case_2}, respectively. In Figure \ref{Fig:Opt_Soln_Char_N=2_T=1_Case_1}, we can see that when $r^{'}<p_2^m \biggl(1-\sum\limits_{l=1}^{L} \theta_1^l \theta_2^l \biggr)$, both users will have enough incentive to cache the content, since its price is very low. When $r^{'}>p_1^m$, both users will opt not to cache, since the price is high. When $p_2^m \biggl(1-\sum\limits_{l=1}^{L} \theta_1^l \theta_2^l \biggr) \leq r^{'} < p_1^m \biggl(1-\sum\limits_{l=1}^{L} \theta_1^l \theta_2^l \biggr)$, user $1$ caches this content and user $2$ takes it from him. When $p_1^m \biggl(1-\sum\limits_{l=1}^{L} \theta_1^l \theta_2^l \biggr) \leq r^{'} < p_1^m$, user $1$ prefers to share the payment with user $2$. But since $r^{'}>p_2^m$, user $2$ will not have any incentive to participate in caching this content. Therefore, user $1$ will cache the whole content alone. This sub-case does not have any ambiguity and there exits a unique SPNE between both users.

    \begin{figure}[h!]
        \centering
        \begin{tikzpicture}[scale=0.75]
            \draw[->,thick] (0,0) -- (16,0) node[anchor=west] {$r^{'}$};
            \draw	(0,0) node[anchor=north] {\scriptsize{$0$}}
        		    (3,0) node[anchor=north] {\scriptsize{$p_2^m \biggl(1-\sum\limits_{l=1}^{L} \theta_1^l \theta_2^l \biggr)$}}
        		    (6,0) node[anchor=north] {\scriptsize{$p_2^m$}}
        		    (9,0) node[anchor=north] {\scriptsize{$p_1^m \biggl(1-\sum\limits_{l=1}^{L} \theta_1^l \theta_2^l \biggr)$}}
        		    (12,0) node[anchor=north] {\scriptsize{$p_1^m$}}
        		    (15,0) node[anchor=north] {\scriptsize{$1$}};
            \draw	(1.5,1) node[anchor=north] {$(S_m,S_m)$}
        		    (4.5,1) node[anchor=north] {$(S_m,0)$}
        		    (7.5,1) node[anchor=north] {$(S_m,0)$}
        		    (10.5,1) node[anchor=north] {$(S_m,0)$}
        		    (13.5,1) node[anchor=north] {$(0,0)$};
            \draw [fill=red] (0,0) circle (0.1cm);
            \draw [fill=red] (3,0) circle (0.1cm);
            \draw [fill=red] (6,0) circle (0.1cm);
            \draw [fill=red] (9,0) circle (0.1cm);
            \draw [fill=red] (12,0) circle (0.1cm);
            \draw [fill=red] (15,0) circle (0.1cm);
            
        \end{tikzpicture}
        \caption{Decentralized caching policy for $N=2,T=1$ when $p_2^m < p_1^m \biggl(1-\sum\limits_{l=1}^{L} \theta_1^l \theta_2^l \biggr)$}
        \label{Fig:Opt_Soln_Char_N=2_T=1_Case_1}
    \end{figure}
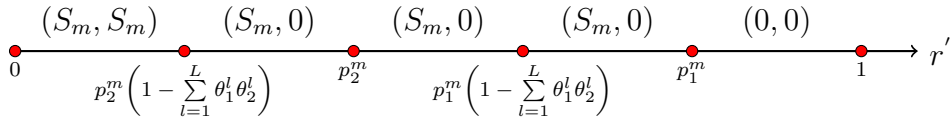

In Figure \ref{Fig:Opt_Soln_Char_N=2_T=1_Case_2}, when $r^{'}$ is very small, such that it is less than $p_2^m \biggl(1-\sum\limits_{l=1}^{L} \theta_1^l \theta_2^l \biggr)$ and $p_1^m \biggl(1-\sum\limits_{l=1}^{L} \theta_1^l \theta_2^l \biggr)$, both users cache this content. When $p_2^m \biggl(1-\sum\limits_{l=1}^{L} \theta_1^l \theta_2^l \biggr) \leq r^{'} < p_1^m \biggl(1-\sum\limits_{l=1}^{L} \theta_1^l \theta_2^l \biggr)$, user $1$ still has an incentive to cache this content, and hence user $2$ will depend on him and opt not to cache. When $p_1^m \biggl(1-\sum\limits_{l=1}^{L} \theta_1^l \theta_2^l \biggr) \leq r^{'} < p_2^m$, partial caching will be an optimal solution. So if user $2$ caches an amount $x$, user $1$ completes it by caching $S_m-x$. Actually, any value $0\leq x \leq S_m$ leads to a Nash equilibrium. This means that we have a non-unique Nash equilibrium in this region. Therefore, it is important to find another dynamic to choose one of these equilibria, as discussed in Section \ref{Sec:Fair_Distributed}.
    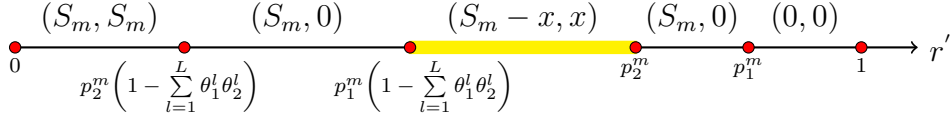
\begin{figure}[h!]
        \centering
        \begin{tikzpicture}[scale=0.75]
            \draw[->,thick] (0,0) -- (16,0) node[anchor=west] {$r^{'}$};
            \draw	(0,0) node[anchor=north] {\scriptsize{$0$}}
        		    (2.75,0) node[anchor=north] {\scriptsize{$p_2^m \biggl(1-\sum\limits_{l=1}^{L} \theta_1^l \theta_2^l \biggr)$}}
        		    (7.25,0) node[anchor=north] {\scriptsize{$p_1^m \biggl(1-\sum\limits_{l=1}^{L} \theta_1^l \theta_2^l \biggr)$}}
        		    (11,0) node[anchor=north] {\scriptsize{$p_2^m$}}
        		    (13,0) node[anchor=north] {\scriptsize{$p_1^m$}}
        		    (15,0) node[anchor=north] {\scriptsize{$1$}};
            \draw	(1.5,1) node[anchor=north] {$(S_m,S_m)$}
        		    (5,1) node[anchor=north] {$(S_m,0)$}
        		    (9,1) node[anchor=north] {$(S_m-x,x)$}
        		    (12,1) node[anchor=north] {$(S_m,0)$}
        		    (14,1) node[anchor=north] {$(0,0)$};
            \draw [fill=yellow,yellow,thick] (7,-0.1) rectangle (11,0.1);
            \draw [fill=red] (0,0) circle (0.1cm);
            \draw [fill=red] (3,0) circle (0.1cm);
            \draw [fill=red] (7,0) circle (0.1cm);
            \draw [fill=red] (11,0) circle (0.1cm);
            \draw [fill=red] (13,0) circle (0.1cm);
            \draw [fill=red] (15,0) circle (0.1cm);
            
        \end{tikzpicture}
	    \caption{Decentralized caching policy for $N=2,T=1$ when $p_2^m \geq p_1^m \biggl(1-\sum\limits_{l=1}^{L} \theta_1^l \theta_2^l \biggr)$}
        \label{Fig:Opt_Soln_Char_N=2_T=1_Case_2}
    \end{figure}

\subsubsection{\textbf{\underline{Case Study ($N=3$)}}}

We start by $T=1$, and then we can extend the result for any value of $T$. The suffix $t$ can be dropped and the expected payment of user $1$ can be written as:
    \begin{equation} \label{Eq:Payment_N=3_T=1}
        \begin{aligned}
            \mu_{1}^{\mathcal{P}} & = \sum_{m=1}^{M} \biggl(S_m - \Bigl(x_1^m+x_2^m\Bigr) \biggr)^+ p_1^m \sum_{l=1}^{L} \theta_1^l \theta_2^l \Bigl(1-\theta_3^l\Bigr)\\
            & + \sum_{m=1}^{M} \biggl(S_m - \Bigl(x_1^m+x_3^m\Bigr) \biggr)^+ p_1^m \sum_{l=1}^{L} \theta_1^l \theta_3^l \Bigl(1-\theta_2^l\Bigr) \\
            & + \sum_{m=1}^{M} \biggl(S_m - \Bigl(x_1^m+x_2^m+x_3^m\Bigr) \biggr)^+ p_1^m \sum_{l=1}^{L} \theta_1^l \theta_2^l \theta_3^l + r^{'} \sum_{m=1}^{M} x_1^m \\
            & + \sum_{m=1}^{M} \biggl(S_m - x_1^m \biggr) p_1^m \biggl(1-\underbrace{\sum_{l=1}^{L} \Bigl(\theta_1^l \theta_2^l+ \theta_1^l \theta_3^l - \theta_1^l \theta_2^l \theta_3^l \Bigr)}_\text{$v1$} \biggr)
        \end{aligned}
    \end{equation}

Following the same best response analysis, user $1$ determines his best response based on the decision of users $2$ and $3$. Therefore, $\mathcal{B}_1 (x_2^m,x_3^m)$ is the caching decision which achieves minimum payment for the corresponding values of $x_2^m$ and $x_3^m$. The optimal solution of user $1$ is shown in Figure \ref{Fig:Opt_Soln_Char_N=3_T=1}. Basically, each user $n$ decides whether he will be caching the content alone, sharing the payment with others, or discarding it at all based on the relation between $r^{'}$, $p_n^m$ and $p_n^m\bigl(1-v_n\bigr)$.
    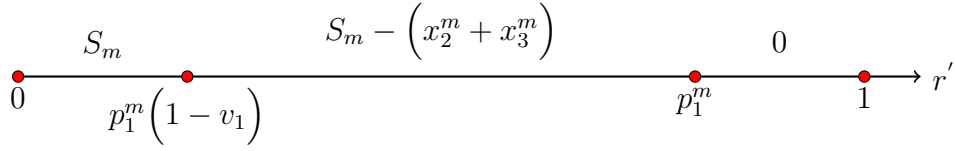
\begin{figure}[h!]
        \centering
        \begin{tikzpicture}[scale=0.75]
            \draw[->,thick] (0,0) -- (16,0) node[anchor=west] {$r^{'}$};
            \draw	(0,0) node[anchor=north] {$0$}
        		    (3,0) node[anchor=north] {$p_1^m \Bigl(1 - v_1 \Bigr)$}
        		    (12,0) node[anchor=north] {$p_1^m$}
        		    (15,0) node[anchor=north] {$1$};
            \draw	(1.5,1) node[anchor=north] {$S_m$}
        		    (7.5,1.5) node[anchor=north] {$S_m - \Bigl( x_2^m + x_3^m \Bigr)$}
        		    (13.5,1) node[anchor=north] {$0$};
            \draw [fill=red] (0,0) circle (0.1cm);
            \draw [fill=red] (3,0) circle (0.1cm);
            \draw [fill=red] (12,0) circle (0.1cm);
            \draw [fill=red] (15,0) circle (0.1cm);
            
        \end{tikzpicture}
	    \caption{Optimal solution of user $1$ for $N=3$ in the decentralized caching scheme}
        \label{Fig:Opt_Soln_Char_N=3_T=1}
    \end{figure}

The optimal solution of all users depends on the relation between their interest. For example, suppose $p_1^m>p_2^m>p_3^m$ and $p_1^m \Bigl(1-v_1\bigr)>p_2^m \Bigl(1-v_2\bigr)>p_3^m \Bigl(1-v_3\bigr)$. The optimal solution will be as shown in Figure \ref{Fig:Opt_Soln_Char_N=3_T=1_Ex_1}. When $r^{'}$ is small enough, all users cache the content. There are some other regions of $r^{'}$ where partial caching is an optimal solution. We notice that the non-unique equilibrium region expanded because partial caching may occur between user $1$ and $2$, $2$ and $3$ or $1,2$ and $3$. We emphasis here that the optimal solution depends on the relation between $p_1^m,p_2^m$ and $p_3^m$ and the relation between $v_1,v_2$ and $v_3$. The solution shown in Figure \ref{Fig:Opt_Soln_Char_N=3_T=1_Ex_1} considers one example but there are some other cases. However, the same idea applies to find the optimal solution in each regime of $r^{'}$.

    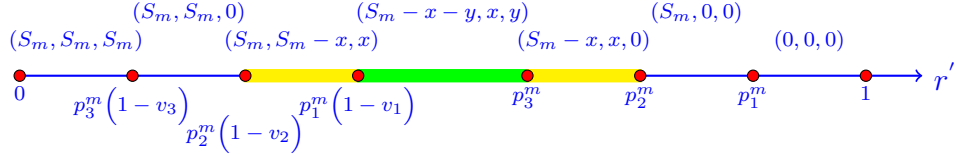
\begin{figure}[h!]
        \centering
        \begin{tikzpicture}[scale=0.75]
            \draw[->,blue,thick] (0,0) -- (16,0) node[anchor=west] {$r^{'}$};
            \draw	(0,0) node[anchor=north,blue] {\scriptsize{$0$}}
        		    (2,0) node[anchor=north,blue] {\scriptsize{$p_3^m \Bigl(1-v_3 \Bigr)$}}
        		    (4,-0.5) node[anchor=north,blue] {\scriptsize{$p_2^m \Bigl(1-v_2 \Bigr)$}}
        		    (6,0) node[anchor=north,blue] {\scriptsize{$p_1^m \Bigl(1-v_1 \Bigr)$}}
        		    (9,0) node[anchor=north,blue] {\scriptsize{$p_3^m$}}
        		    (11,0) node[anchor=north,blue] {\scriptsize{$p_2^m$}}
        		    (13,0) node[anchor=north,blue] {\scriptsize{$p_1^m$}}
        		    (15,0) node[anchor=north,blue] {\scriptsize{$1$}};
            \draw	(1,1) node[anchor=north,blue] {\scriptsize{$(S_m,S_m,S_m)$}}
        		    (3,1.5) node[anchor=north,blue] {\scriptsize{$(S_m,S_m,0)$}}
        		    (5,1) node[anchor=north,blue] {\scriptsize{$(S_m,S_m-x,x)$}}
        		    (7.5,1.5) node[anchor=north,blue] {\scriptsize{$(S_m-x-y,x,y)$}}
        		    (10,1) node[anchor=north,blue] {\scriptsize{$(S_m-x,x,0)$}}
        		    (12,1.5) node[anchor=north,blue] {\scriptsize{$(S_m,0,0)$}}
        		    (14,1) node[anchor=north,blue] {\scriptsize{$(0,0,0)$}};
            \draw [fill=yellow,yellow,thick] (4,-0.1) rectangle (6,0.1);
            \draw [fill=green,green,thick] (6,-0.1) rectangle (9,0.1);
            \draw [fill=yellow,yellow,thick] (9,-0.1) rectangle (11,0.1);
            \draw [fill=red] (0,0) circle (0.1cm);
            \draw [fill=red] (2,0) circle (0.1cm);
            \draw [fill=red] (4,0) circle (0.1cm);
            \draw [fill=red] (6,0) circle (0.1cm);
            \draw [fill=red] (9,0) circle (0.1cm);
            \draw [fill=red] (11,0) circle (0.1cm);
            \draw [fill=red] (13,0) circle (0.1cm);
            \draw [fill=red] (15,0) circle (0.1cm);

        \end{tikzpicture}
	    \caption{An example decentralized caching policy for $N=3$}
        \label{Fig:Opt_Soln_Char_N=3_T=1_Ex_1}
    \end{figure}

\subsubsection{\textbf{\underline{Optimal Policy for $N$-users}}}

Now, from the previous cases, we can infer the optimal decentralized caching policy for a general number of users as shown in Figure \ref{fig:Opt_Soln_Char_N_T=1}, where $\hat{p}_{n}^m = \frac{1}{T} \sum_{t=1}^{T} p_{n,t}^m$, $\tilde{p}_{n}^m = \frac{1}{T} \sum_{t=1}^{T} p_{n,t}^m \Bigl( 1 - v_{n,t} \Bigr)$ and $v_n$ is as defined in (\ref{Eq:Ranking_Optimal_N_N-1}), $\forall n\in\mathcal{N}$.  Each user compares the caching cost $r^{'}$ with his interest and mobility statistics $\hat{p}_{n}^m$ and $\tilde{p}_{n}^m$ to determine whether he is caching the whole content, sharing the cost with others, or discarding it at all. Since there are non-unique equilibrium for the partial caching regime, we are not able to show uniqueness of the SPNE. The following theorem states the existence of the SPNE.
    \begin{theorem}
    For a game of $N$ users, there exists a Subgame Perfect Nash Equilibrium (SPNE) between users.
    \end{theorem}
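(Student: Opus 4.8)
The plan is to prove existence by a fixed-point argument rather than by exhibiting the equilibrium explicitly, since the earlier case studies ($N=2,3$) already reveal regimes of $r'$ in which the equilibrium is non-unique (the shaded partial-caching regions of Figures \ref{Fig:Opt_Soln_Char_N=2_T=1_Case_2} and \ref{Fig:Opt_Soln_Char_N=3_T=1_Ex_1}), so no uniqueness or contraction-based argument can work. First I would observe that, in the Stackelberg formulation, the SP moves first by fixing the reward $r$ (equivalently the off-peak price $r'=1-r$), and each such choice induces a simultaneous-move subgame among the $N$ users. By backward induction it then suffices to produce a Nash equilibrium of this user subgame for every admissible $r$; the resulting profile of best responses is subgame perfect by construction.

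Next I would exploit the fact that, absent a memory constraint, the payment $\mu_n^{\mathcal{P}}$ separates additively across contents $m$ (every term of (\ref{Eq:Payment_Proactive}) is content-wise), so the subgame factors into $M$ independent content games. It therefore suffices to establish a Nash equilibrium for a single content $m$, where user $n$'s strategy is the scalar $x_n^m$ ranging over the compact, convex interval $[0,S_m]$. Existence then follows from the Debreu--Glicksberg--Fan theorem (the continuous-strategy extension of Nash's theorem, itself proved via Kakutani's fixed-point theorem), provided each payment is continuous in the full profile and quasi-convex in the player's own variable.

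The two hypotheses I would verify are continuity and own-variable convexity of $\mu_n^{m,\mathcal{P}}$. Continuity is immediate because the payment is assembled from the continuous map $(\cdot)^+$, affine terms, and the constant nonnegative meeting-probability coefficients $\prod_j \theta_{j,t}^l \prod_i \bigl(1-\theta_{i,t}^l\bigr)\geq 0$. For convexity I would argue that each $\bigl(S_m-\sum_{j\in a_n^k} x_j^m\bigr)^+$ is a convex function of $x_n^m$ carrying a nonnegative coefficient, while the remaining caching and ``user $n$ alone'' terms are affine in $x_n^m$, so that a nonnegative combination of convex and affine functions is convex. The $N=2$ computation makes this concrete: from (\ref{Eq:Payment_N=2_T=1_Case1})--(\ref{Eq:Payment_N=2_T=1_Case2}), $\mu_1^{\mathcal{P}}$ is piecewise linear in $x_1^m$ with slope $r'-p_1^m$ below the kink $x_1^m=S_m-x_2^m$ and slope $r'-p_1^m\bigl(1-\sum_l \theta_1^l\theta_2^l\bigr)$ above it, and since the latter slope is the larger, the function is convex. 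Consequently each best-response set $\cB_n(x_{-n}^m)=\argmin_{0\le x_n^m\le S_m}\mu_n^{m,\mathcal{P}}$ is nonempty, compact, and convex (an interval), and by Berge's maximum theorem it is upper hemicontinuous in $x_{-n}^m$.

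With these properties in hand, the joint best-response correspondence maps the compact convex cube $[0,S_m]^N$ into its own nonempty convex subsets and is upper hemicontinuous, so Kakutani's theorem yields a fixed point, which is by definition a Nash equilibrium of the content game; assembling these across $m$ gives an equilibrium of the subgame, and backward induction over $r$ delivers the SPNE. The main obstacle I anticipate is the bookkeeping in the convexity verification for general $N$: one must check that, across all the overlapping $(\cdot)^+$ pieces indexed by the tuple sets $\mA_n^k$, the coefficient of every convex piece remains nonnegative and that the affine remainder does not destroy convexity. This is precisely the structural feature that the $N=2,3$ case studies confirm, and it must be stated cleanly for arbitrary $N$ so that the quasi-convexity needed for Kakutani is beyond doubt.
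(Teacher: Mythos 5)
Your proposal is correct and follows essentially the same route as the paper: both rest on the Debreu--Glicksberg--Fan theorem applied to the compact convex strategy sets $[0,S_m]$ with continuous payments, with your Kakutani/Berge discussion simply unpacking what DGF already packages. If anything, your verification of own-variable convexity of $\mu_n^{\mathcal{P}}$ (each $(\cdot)^+$ term is convex with a nonnegative probability coefficient and the remainder is affine) is more careful than the paper's appeal to concavity ``by linearity in each sub-case,'' since quasi-convexity of the payment in the player's own variable is the condition actually needed for this payment-minimization game.
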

    \begin{proof}
		The existence follows from Debreu, Glicksberg and Fan (DGF) theorem since:
                \begin{itemize}
                    \item $x_n^m \in [ 0, S_m ], \forall n \in \mathcal{N}$ are compact and convex.
                    \item $\mu_n^\mathcal{P}$ are continuous over $[ 0, S_m ], \forall n \in \mathcal{N}$.
                    \item $\mu_n^\mathcal{P}$ are concave by its linearity in $x_1^m, x_2^m,\cdots,x_n^m$ (for each sub-case separately).
                \end{itemize}
	Optimality of the solution was shown by the best response analysis discussed before. 
    \end{proof}
    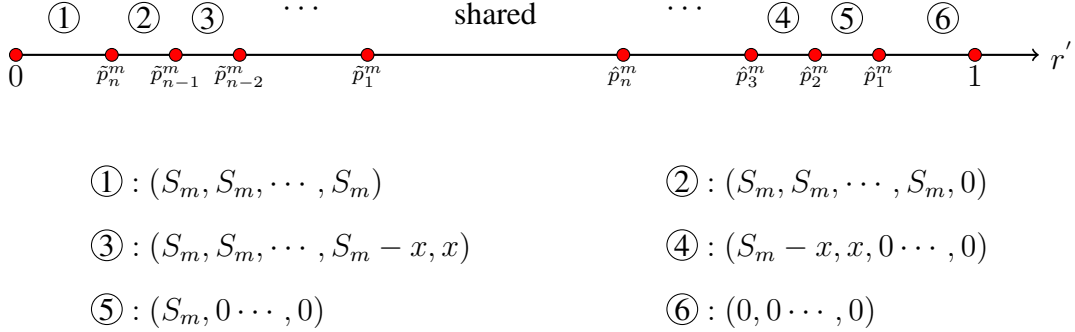
\begin{figure}[h!]
        \centering
        \begin{tikzpicture}[scale=0.85]
            \draw[->,thick] (0,0) -- (16,0) node[anchor=west] {$r^{'}$};
            \draw	(0,0) node[anchor=north] {$0$}
        		    (1.5,0) node[anchor=north] {\scriptsize{$\tilde{p}_n^m$}}
        		    (2.5,0) node[anchor=north] {\scriptsize{$\tilde{p}_{n-1}^m$}}
        		    (3.5,0) node[anchor=north] {\scriptsize{$\tilde{p}_{n-2}^m$}}                		    
    				(5.5,0) node[anchor=north] {\scriptsize{$\tilde{p}_1^m$}}
        		    (9.5,0) node[anchor=north] {\scriptsize{$\hat{p}_n^m$}}
        		    (11.5,0) node[anchor=north] {\scriptsize{$\hat{p}_3^m$}}
        		    (12.5,0) node[anchor=north] {\scriptsize{$\hat{p}_2^m$}}
        		    (13.5,0) node[anchor=north] {\scriptsize{$\hat{p}_1^m$}}
        		    (15,0) node[anchor=north] {$1$};
            \draw	(0.75,1) node[anchor=north] {\circled{1}}
        		    (2,1) node[anchor=north] {\circled{2}}
        		    (3,1) node[anchor=north] {\circled{3}}
        		    (4.5,1) node[anchor=north] {$\cdots$}
        		    (7.5,1) node[anchor=north] {shared}
        		    (10.5,1) node[anchor=north] {$\cdots$}
        		    (12,1) node[anchor=north] {\circled{4}}
        		    (13,1) node[anchor=north] {\circled{5}}
        		    (14.5,1) node[anchor=north] {\circled{6}};
            \draw [fill=red] (0,0) circle (0.1cm);
            \draw [fill=red] (1.5,0) circle (0.1cm);
            \draw [fill=red] (2.5,0) circle (0.1cm);
            \draw [fill=red] (3.5,0) circle (0.1cm);
            \draw [fill=red] (5.5,0) circle (0.1cm);
            \draw [fill=red] (9.5,0) circle (0.1cm);
            \draw [fill=red] (11.5,0) circle (0.1cm);
            \draw [fill=red] (12.5,0) circle (0.1cm);
            \draw [fill=red] (13.5,0) circle (0.1cm);
            \draw [fill=red] (15,0) circle (0.1cm);
    
    		\draw  (1,-2) node[anchor=west] {$\circled{1}: (S_m,S_m,\cdots,S_m)$}
    					(10,-2) node[anchor=west] {$\circled{2}: (S_m,S_m,\cdots,S_m,0)$}
    				   (1,-3) node[anchor=west] { $\circled{3}: (S_m,S_m,\cdots,S_m-x,x)$}
    				   (10,-3) node[anchor=west] {$\circled{4}: (S_m-x,x,0\cdots,0)$}
    					(1,-4) node[anchor=west] { $\circled{5}: (S_m,0\cdots,0)$}
    					(10,-4) node[anchor=west] {$\circled{6}: (0,0\cdots,0)$};
    					
        \end{tikzpicture}
    	\caption{Optimal decentralized caching policy for $N$ users}
        \label{fig:Opt_Soln_Char_N_T=1}
    \end{figure}

\subsection{Fair Caching Allocation}\label{Sec:Fair_Distributed}

When the caching cost $r^{'}$ lies in the regime where partial caching is an optimal solution, there exits a non-unique Nash equilibrium. Another dynamic need to be added to the game that allows the users to choose one of these equilibria \cite{harsanyi1988general,fevrier2006equilibrium}. A Nash equilibrium is considered \emph{payoff dominant} if it is Pareto superior to all other Nash equilibria in the game. Unfortunately, it is not clear if any of these equilibira has this feature. For example in the case of $N=2$, we can see that if user $2$ caches an amount $x$ of content $m$ and user $1$ completes it by caching an amount  $S_m-x$, the payment of user $1$, corresponding to this content, will be
    \begin{equation*}
            \mu_1^\mathcal{P} = r^{'} S_m - x \left[r-p_1^m\left(1-\sum\limits_{l=1}^{L}\theta_1^l \theta_2^l\right)\right]
    \end{equation*}
which is a decreasing function in $x$. In particular, any increase in $x$ is preferable to user $1$. On the contrary, the payment of user $2$, corresponding to this content, will be
    \begin{equation*}
            \mu_2^\mathcal{P} = S_m p_2^m \left(1-\sum\limits_{l=1}^{L}\theta_1^l \theta_2^l\right) + x \left[r^{'}-p_1^m\left(1-\sum\limits_{l=1}^{L}\theta_1^l \theta_2^l\right)\right]
    \end{equation*}
which is an increasing function in $x$. Therefore, user $2$ will try to reduce $x$ as much as possible. This means that the tension between both users will not lead them to a payoff dominant NE.

A Nash equilibrium is considered \emph{risk dominant} if it has the largest basin of attraction (i.e. is less risky). In particular, the more uncertainty players have about the actions of the other player(s), the more likely they will choose the strategy corresponding to it. Each user evaluates the risk corresponding to each NE, given that he doesn't know the reaction of the other users, and chooses the one with the least risk value (e.g. smallest expected payment). Unfortunately, this approach does not necessarily lead us to one of the Nash equilibria. Figure \ref{fig:Risk_Dominance} depicts the result obtained for an example of $N=2, T=1$, where $p_1^m=0.8,p_2^m=0.6$ and their meeting probability is $0.5$. We can see that there are some regimes of $r^{'}$ where $x_1^m+x_2^m$ exceeds $S_m$. In particular, for $0.4\leq r^{'} \leq 0.5$, the risk dominance solution does not lead to a Nash equilibrium. The corresponding payments are shown in Figure \ref{fig:Risk_Dominance} (b). User $1$ pays more than user $2$ since he is caching more. This also means that user $1$ is more affected by the risk dominance policy.

\begin{figure}
    \centering
    \begin{tabular}{cc}
        \subfloat[Users caching decisions]{\includegraphics[width = 0.50\textwidth]{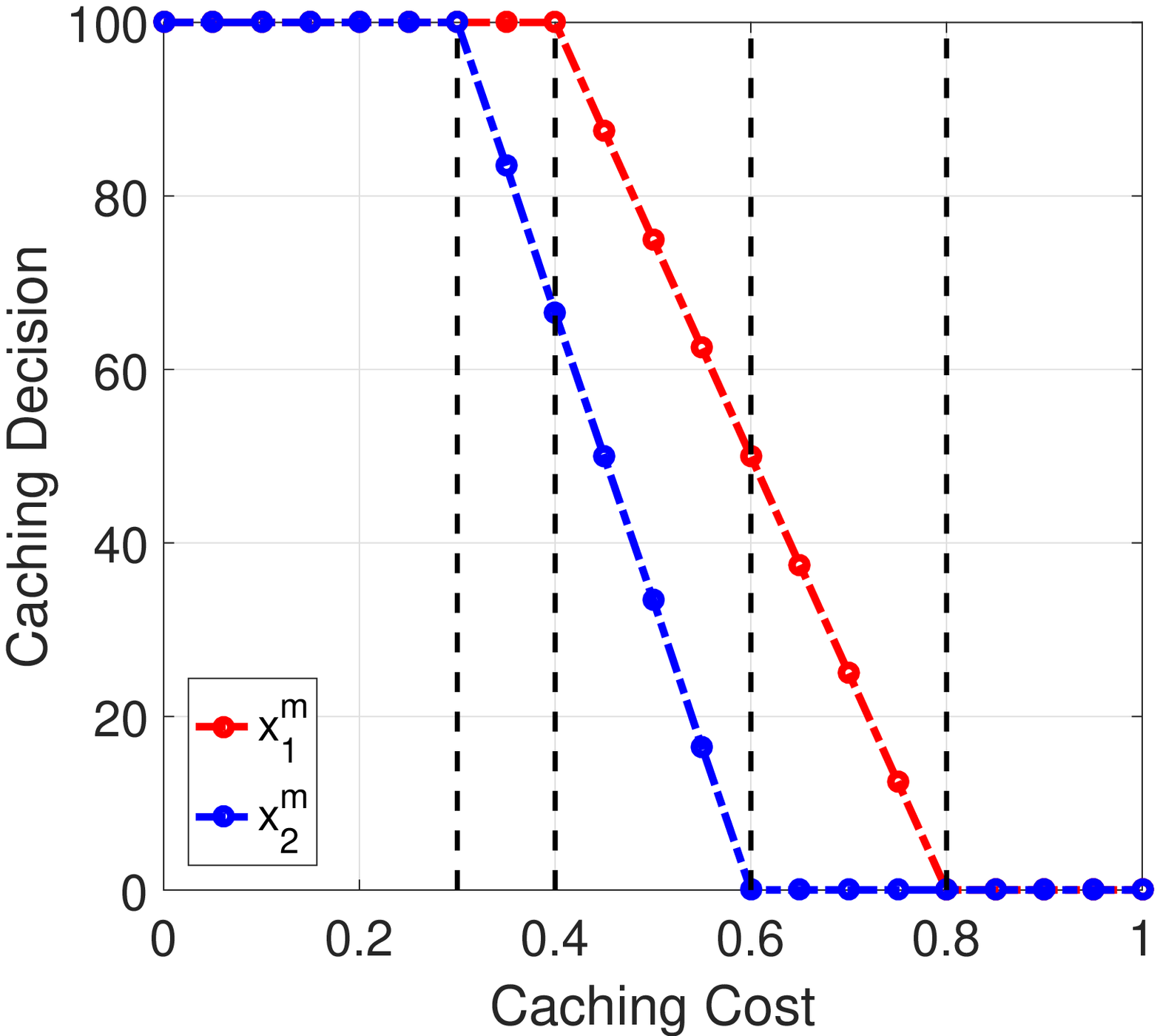}} &
        \subfloat[Expected user payment Comparison]{\includegraphics[width = 0.50\textwidth]{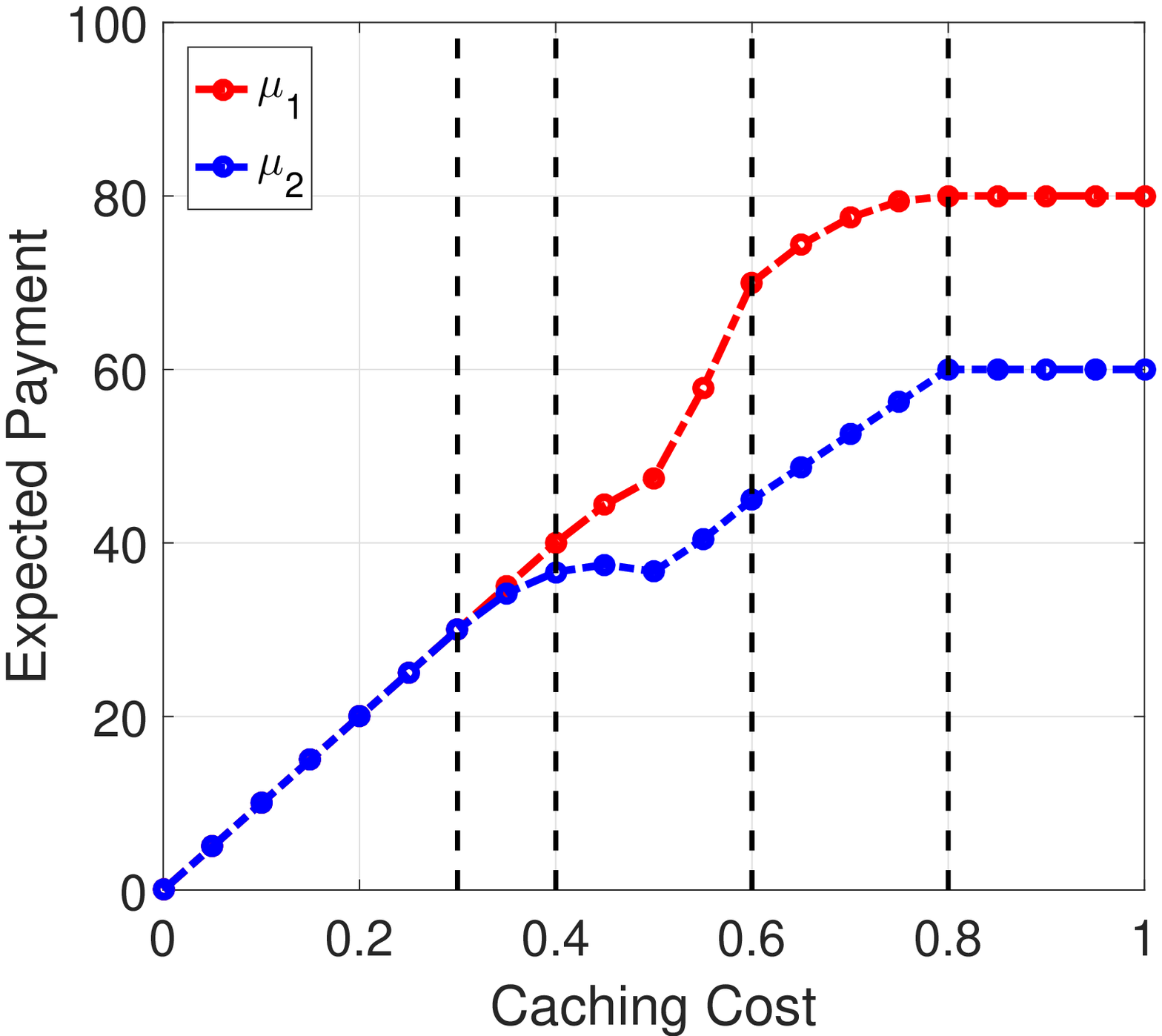}}
    \end{tabular}
    \caption{Risk dominance example for $N=2,T=1$ in the decentralized caching scheme}
    \label{fig:Risk_Dominance}
\end{figure}

Preplay communication is another way to coordinate between users. Users agree before playing the game on a certain strategy when $r^{'}$ lies in the partial caching regime. For example, they may agree on caching amounts proportional to their interests. This coordination may also be imposed by the SP who sets this rule for all users before playing the game. We know that users will pick one of the Nash equilibria since it allows them to minimize their payment. We adopt a fair allocation strategy for this case which is defined as follows.
\begin{definition}
    For the game of $N$-users, if $r^{'}$ lies in the region where they need to share the caching cost, then the {\bf fair equilibrium} is a NE which satisfies: $x_n^m = \frac{S_m \hat{p}_n^m}{\sum_{k=1}^{N} \hat{p}_k^m}, \forall k \in \mathcal{N}$.
\end{definition}

Notice that this fair allocation is one of the equilibria. So, if users agree on this strategy before playing the game, none of them will have any incentive to deviate unilaterally. Figure \ref{fig:Fair_Allocation} (a) depicts the fair allocation solution for the example mentioned above. Notice that, for $0.4\leq r^{'}\leq0.6$, each user caches an amount proportional to his interest. The corresponding payments are shown in Figure \ref{fig:Fair_Allocation} (b). Comparing the results obtained from the fair allocation policy with the risk dominance results, we see that users payments are reduced. The fair allocation policy is one of the pre-play communication policies; however, we can find some other coordination approaches between users. For example, users can make a caching decision such that their corresponding payments are proportional to their interest. We summarize the optimal decentralized caching policy in Algorithm \ref{Alg:Optimal_Decentralized_Policy}.
\begin{figure}
    \centering
    \begin{tabular}{cc}
        \subfloat[Users caching decisions]{\includegraphics[width = 0.50\textwidth]{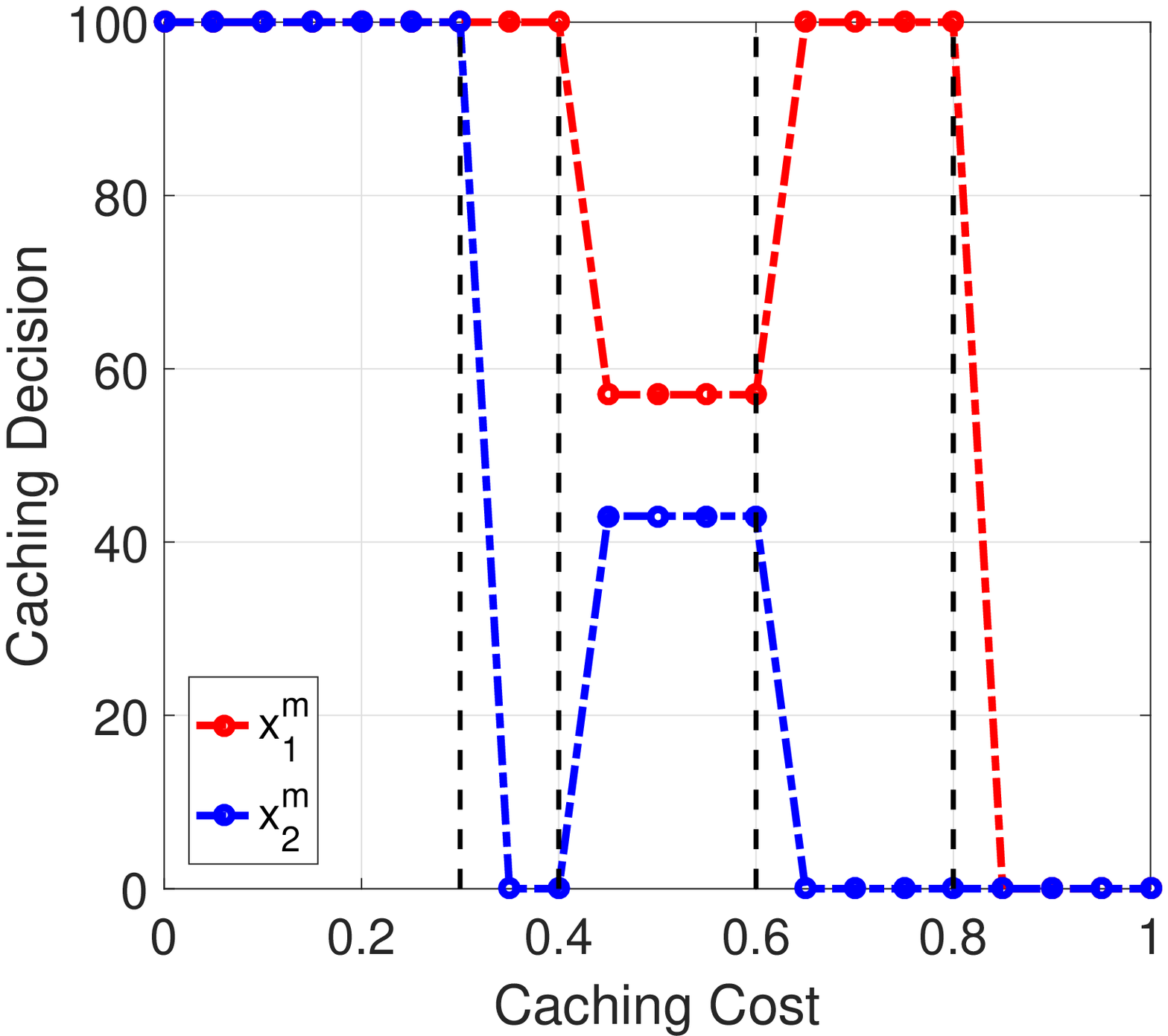}} &
        \subfloat[Expected user payment Comparison]{\includegraphics[width = 0.50\textwidth]{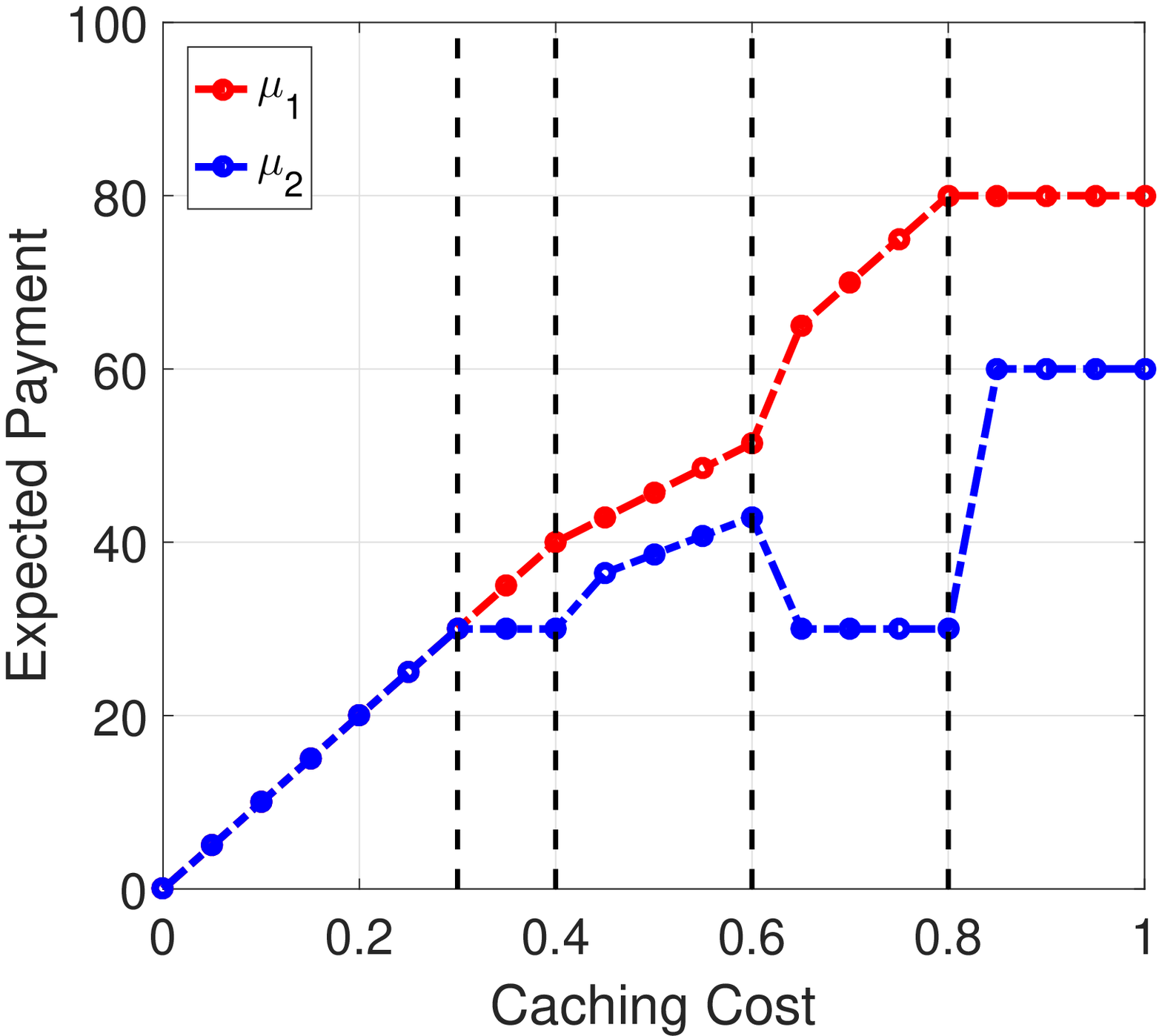}}
    \end{tabular}
    \caption{Fair Allocation example for $N=2,T=1$ in the decentralized caching scheme}
    \label{fig:Fair_Allocation}
\end{figure}

\begin{algorithm}[h!]
    \caption{Optimal Decentralized Caching Policy for $N$-users}
    \label{Alg:Optimal_Decentralized_Policy}
    \begin{algorithmic}
    \State \textbf{Given:} $N,M,L,\vec{\Pi}_n,\vec{\Theta}_{n},r^{'}$
    \For{$m=1$ to $M$}
        \Call{Caching}{$N,L,m,p_{n,t}^{m},\theta_{n,t}^{l}$}
    \EndFor
        \Procedure{Caching}{$N,L,m,p_{n,t}^{m},\theta_{n,t}^{l},r^{'}$}
        \For{$n=1$ to $N$}
            \If{$r^{'}\leq\frac{1}{T} \sum_{t=1}^{T} p_{n,t}^m \Bigl( 1 - v_{n,t} \Bigr)$} \State $x_n^m = S_m$
            \ElsIf{$r^{'}>\frac{1}{T} \sum_{t=1}^{T} p_{n,t}^m$} \State $x_n^m = S_m$
            \Else \State Set $Flag(n)=1$ for Partial Caching
            \EndIf
        \EndFor
        \EndProcedure
        \Procedure{PartialCaching}{$N,L,m,p_{n,t}^{m},\theta_{n,t}^{l},r^{'}$}
        \For{$n=1$ to $N$}
            \If {$Flag(n)=1$} \State $x_n^m=S_m \frac{\frac{1}{T}\sum_{t=1}^{T} p_{n,t}^m}{\sum_{k=1}^{N} Flag(k) \frac{1}{T}\sum_{t=1}^{T} p_{k,t}^m}$ \EndIf
        \EndFor
        \EndProcedure
    \end{algorithmic}
\end{algorithm}

\subsection{Choosing Optimal Memory Size}\label{Sec:Choosing_Memory}

In the previous section, we introduced the solution of the decentralized caching scheme based on the reward value $r$ (recall that $r^{'}=1-r$). Since, SP pays this reward back to all users, it will always try to reduce this amount as much as possible. But at the same time, this reward creates an incentive for users to participate in this model. We assume that each user has an isolated memory of size $Z_n$. For simplicity, we also assume that all users have the same memory size. Hence, the SP finds an aggregate memory of size $Z=NZ_n$. The SP has a reward preference to assign a certain reward $r$ corresponding to the assigned memory $Z$. We consider a linear relationship between $r$ and $Z$. In particular, we assume that $r(Z)=1-\gamma Z$, for some $\gamma > 0$. This means that the SP gives users more reward when they assign smaller memory and reduces the reward when they assign larger memory. This relation stops the users from increasing their memory and caching everything. At the same time, when the SP needs more memory, it can reduce the reward to push users towards increasing their memory size.

Now, considering this memory constraint, we can rewrite (\ref{Eq:User_Optimization_Problem}) as follows:
    \begin{equation}\label{Eq:User_Optimization_Problem_Modified}
        \begin{aligned}
            & \min \hspace{5mm} L_n^\mathcal{P}\\
            & \text{s.t.} \hspace{5mm} \sum_{m=1}^{M} x_n^m \leq Z_n.
        \end{aligned}
    \end{equation}
where $L_n^P$ is the peak load generated by user $n$. In particular, from (\ref{Eq:Payment_Proactive}), we see that $ L_n^\mathcal{P}=\mu_n^\mathcal{P} - r^{'} \sum_{m=1}^{M} x_n^m$. Converting this problem to an unconstrained problem, we get
    \begin{equation}\label{Eq:User_Optimization_Problem_Modified2}
        \begin{aligned}
            \underset{x_n^m}{\min} \hspace{3mm} L_n^\mathcal{P}+ r^{'} \left(\sum_{m=1}^{M} x_n^m - Z_n\right) =  \underset{x_n^m}{\min} \hspace{3mm} \left( L_n^\mathcal{P} + r^{'} \sum_{m=1}^{M} x_n^m \right)- r^{'}Z_n 
        \end{aligned}
    \end{equation}
where $r^{'}$ is the Lagrangian multiplier associated with the constraint  $\sum_{m=1}^{M} x_n^m \leq Z_n$. Notice that the first term in (\ref{Eq:User_Optimization_Problem_Modified2}) is the problem we solved in Section \ref{Sec:Optimal_Distributed} without having this memory constraint. Each user solves this optimization problem for all possible values of $Z_n$. The optimal choice of $Z_n$ depends on the SP reward preference. 

Plotting the Lagrangian multiplier $r^{'}$ for the solution of (\ref{Eq:User_Optimization_Problem_Modified2}) versus $Z_n$ we get the curve shown in Figure \ref{Fig:Choosing_Memory_UserN}. The optimal solution is determined by the intersection point between the SP reward preference and the users reward preference. We can see that $r^{'}$ takes the values of $\hat{p}_1^m$ or $\tilde{p}_1^m$. At the intersection point, and under the fair allocation scheme discussed in Section \ref{Sec:Fair_Distributed}, the optimal solution will be at $Z_n^*=\frac{S_m \hat{p}_1^m}{\sum_{n=1}^{N} \hat{p}_n^m}$. Considering all users, we will have the result shown in Figure \ref{Fig:Choosing_Memory}. The optimal solution $(Z^*,r^{'*})$ is at the intersection point between the SP reward preference and the users  reward preference. 

Note that each sub-region corresponds to one of the solutions shown in Figure \ref{fig:Opt_Soln_Char_N_T=1}. Therefore, the intersection point correspond to one of these solutions, where we may have some users are caching the content while others are sharing the cost of caching that content once between them. The best case scenario happens when the intersection leads to a solution where the content is cached once between all users. This means that each user caches a small portion of this data content, based on the relation between his interest and the aggregate interest of all users. This also yields a lower memory consumption as the number of users increases.

\begin{figure}[h!]
	\centering
      \begin{tikzpicture}[scale=0.75]
            
            \draw[->] (0,0) -- (11,0) node[anchor=north] {$Z_n$};
            \draw	(0,0) node[anchor=north] {$0$}
				(6,0) node[anchor=north] {\textbf{$Z_n^*=\frac{S_m\hat{p}_1^m}{\sum_{n=1}^{N}\hat{p}_n^m}$}}
                	(10,0) node[anchor=north] {$S_m$};
            \draw[->] (0,0) -- (0,11) node[anchor=east] {$r$};
            \draw	(0,0) node[anchor=east] {$0$}
				(0,3) node[anchor=east] {$\tilde{p}_1^m$}
				(0,6) node[anchor=east] {\textbf{$r^*$}}
              		(0,10) node[anchor=east] {$\hat{p}_1^m$};           
                
            \draw[step=1cm,gray,very thin] (0,0) grid (10,10);            

            \draw[line width=0.5mm,red] (0,10) -- (6,10);
            \draw[line width=0.5mm,red] (6,10) -- (6,3);
            \draw[line width=0.5mm,red] (6,3) -- (10,3);
            \draw[line width=0.5mm,red] (10,3) -- (10,0);
            \draw[line width=0.5mm,blue] (0,0) -- (10,10); 

            \draw [fill=red] (6,6) circle (0.1cm);

            \draw (3,9.5) node[red] {\scriptsize{User $(n)$ Reward Decision}};
            \draw  (3,3) node[blue,rotate=45,anchor=south] {\scriptsize{SP Reward Preference $r(Z_n)$}};
              
        \end{tikzpicture}	
	\caption{Choosing an optimal memory ($Z_n$) for user $(n)$ in the decentralized caching scheme}
	\label{Fig:Choosing_Memory_UserN}
\end{figure}
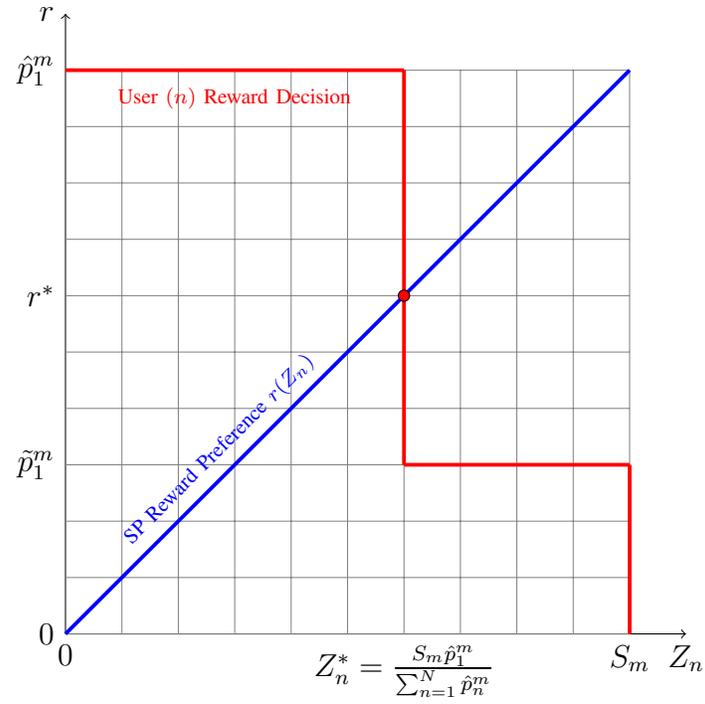

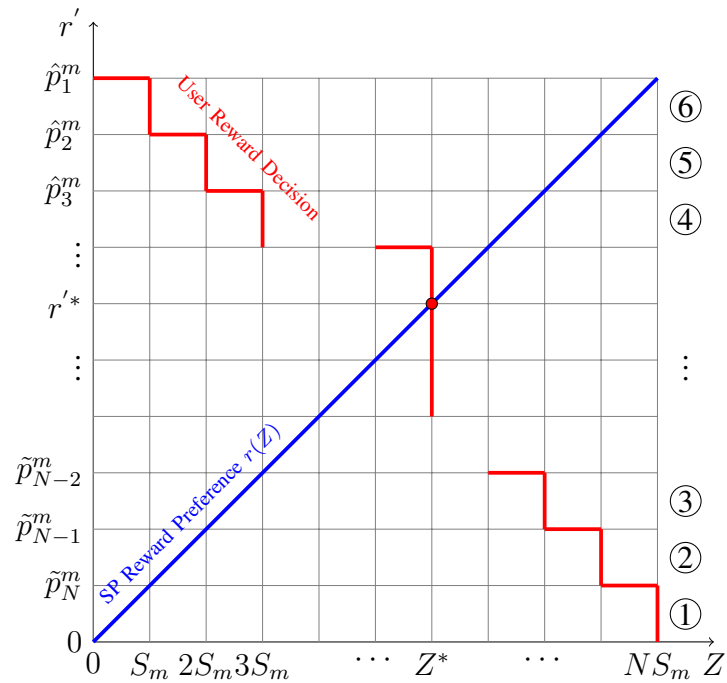
\begin{figure}[h!]
	\centering
      \begin{tikzpicture}[scale=0.75]
            
            \draw[->] (0,0) -- (11,0) node[anchor=north] {$Z$};
            \draw	(0,0) node[anchor=north] {$0$}
				(1,0) node[anchor=north] {$S_m$}
				(2,0) node[anchor=north] {$2S_m$}
				(3,0) node[anchor=north] {$3S_m$}
            		(5,0) node[anchor=north] {$\cdots$}
				(6,0) node[anchor=north] {\textbf{$Z^*$}}
            		(8,0) node[anchor=north] {$\cdots$}
                	(10,0) node[anchor=north] {$NS_m$};
            \draw[->] (0,0) -- (0,11) node[anchor=east] {$r^{'}$};
            \draw	(0,0) node[anchor=east] {$0$}
				(0,1) node[anchor=east] {$\tilde{p}_N^m$}
				(0,2) node[anchor=east] {$\tilde{p}_{N-1}^m$}
				(0,3) node[anchor=east] {$\tilde{p}_{N-2}^m$}
            	(0,5) node[anchor=east] {$\vdots$}
				(0,6) node[anchor=east] {\textbf{$r^{'*}$}}
            	(0,7) node[anchor=east] {$\vdots$}
				(0,8) node[anchor=east] {$\hat{p}_3^m$}
				(0,9) node[anchor=east] {$\hat{p}_2^m$}
                (0,10) node[anchor=east] {$\hat{p}_1^m$};           
                
            \draw[step=1cm,gray,very thin] (0,0) grid (10,10);            

            \draw[line width=0.5mm,red] (0,10) -- (1,10);
            \draw[line width=0.5mm,red] (1,10) -- (1,9);
            \draw[line width=0.5mm,red] (1,9) -- (2,9);
            \draw[line width=0.5mm,red] (2,9) -- (2,8);
            \draw[line width=0.5mm,red] (2,8) -- (3,8);
            \draw[line width=0.5mm,red] (3,8) -- (3,7);
            \draw[line width=0.5mm,red] (5,7) -- (6,7);
            \draw[line width=0.5mm,red] (6,7) -- (6,4);
            \draw[line width=0.5mm,red] (7,3) -- (8,3);
            \draw[line width=0.5mm,red] (8,3) -- (8,2);
            \draw[line width=0.5mm,red] (8,2) -- (9,2);
            \draw[line width=0.5mm,red] (9,2) -- (9,1);
            \draw[line width=0.5mm,red] (9,1) -- (10,1);
            \draw[line width=0.5mm,red] (10,1) -- (10,0);

            \draw[line width=0.5mm,blue] (0,0) -- (10,10); 

            \draw [fill=red] (6,6) circle (0.1cm);

            \draw (10.5,0.5) node[black] {$\circled{1}$};
            \draw (10.5,1.5) node[black] {$\circled{2}$};
            \draw (10.5,2.5) node[black] {$\circled{3}$};
            \draw (10.5,5) node[black] {$\vdots$};
            \draw (10.5,7.5) node[black] {$\circled{4}$};
            \draw (10.5,8.5) node[black] {$\circled{5}$};
            \draw (10.5,9.5) node[black] {$\circled{6}$};
            \draw (3,9) node[red,rotate=-45,anchor=north] {\scriptsize{User Reward Decision}};
            \draw  (2,2) node[blue,rotate=45,anchor=south] {\scriptsize{SP Reward Preference $r(Z)$}};
              
        \end{tikzpicture}	
	\caption{Choosing an optimal memory ($Z$) for $N$ users in the decentralized caching scheme}
	\label{Fig:Choosing_Memory}
\end{figure}

\subsection{Impact of User Mobility}

Users mobility statistics affect the optimal solution of the decentralized caching policy. The optimal decision of each user depends on his meeting probabilities with other users. The user who is meeting others with a higher probability will have more potential for partial caching. In particular, when his meeting probabilities increase, the value of $\tilde{p}_n^m$ decreases and the region of partial caching increases. To see this, let us consider a special case when users have similar mobility patterns. Further, we consider the case when users visit all locations with the same probability and hence have the same meeting probability. In particular, consider the case when $\theta_{1,t}^{l}=\theta_{2,t}^{l}=\cdots=\theta_{N,t}^{l}=\theta_{t}^{l}=\frac{1}{L}, \forall t \in \{1,2,\cdots,T\}$. Therefore, we have 
\begin{equation*}
    v_{n,t} = \sum_{k=1}^{N-1} (-1)^{k+1} \binom{N-1}{k} \sum_{l=1}^{L} \frac{1}{L^{k+1}}, \forall n,t
\end{equation*}

When $L\rightarrow\infty$, we have $v_{n,t}\rightarrow0,\forall n,t$. This case is typically similar to the proactive caching model discussed in \cite{Hosny2015Towards}. Since we are assuming here that $\alpha_m=1,\forall m$, each user will cache the content when his interest exceeds the caching cost, regardless of the other users decision. Note that $r^{'}$ is similar to $\frac{y_o}{y_p}$ in the proactive caching model, since we assume that $y_p=1$. For example, the solution of $N=2$ will be as shown in Figure \ref{Fig:Opt_Soln_N=2_No_Mobility}. When $r^{'}$ is smaller than $\hat{p}_1^m,\hat{p}_2^m$ both users cache this content. When it exceeds $\hat{p}_1^m$ or $\hat{p}_2^m$, the corresponding user opts to avoid caching.
    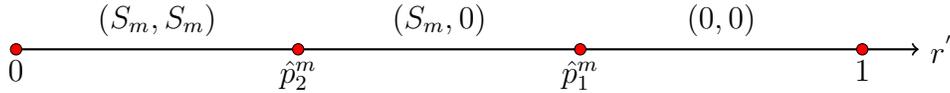
\begin{figure}[h!]
        \centering
        \begin{tikzpicture}[scale=0.75]
            \draw[->,thick] (0,0) -- (16,0) node[anchor=west] {$r^{'}$};
            \draw	(0,0) node[anchor=north] {$0$}
        		    (5,0) node[anchor=north] {$\hat{p}_2^m$}
        		    (10,0) node[anchor=north] {$\hat{p}_1^m$}
        		    (15,0) node[anchor=north] {$1$};
            \draw	(2.5,1) node[anchor=north] {$(S_m,S_m)$}
        		    (7.5,1.) node[anchor=north] {$(S_m,0)$}
        		    (12.5,1) node[anchor=north] {$(0,0)$};
            \draw [fill=red] (0,0) circle (0.1cm);
            \draw [fill=red] (5,0) circle (0.1cm);
            \draw [fill=red] (10,0) circle (0.1cm);
            \draw [fill=red] (15,0) circle (0.1cm);
            
        \end{tikzpicture}
	    \caption{Optimal decentralized caching policy for $N=2$ as $L\rightarrow\infty$}
        \label{Fig:Opt_Soln_N=2_No_Mobility}
    \end{figure}
    
Now suppose that users are moving together such that their meeting probabilities are very close to 1. This is similar to the content trading model. The difference is that all users are setting their selling price to 0, i.e. they are sharing their proactive downloads for free. For example, the optimal solution for $N=2$ shown in Figure \ref{Fig:Opt_Soln_Char_N=2_T=1_Case_2} will be modified as shown in Figure \ref{Fig:Opt_Soln_N=2_Connected}. Since, users are sharing their proactive downloads for free, partial caching will be an optimal solution, instead of having one user caching the content and selling it to all other users.     \begin{figure}[h!]
        \centering
        \begin{tikzpicture}[scale=0.75]
            \draw[->,thick] (0,0) -- (16,0) node[anchor=west] {$r^{'}$};
            \draw	(0,0) node[anchor=north] {$0$}
        		    (5,0) node[anchor=north] {$\hat{p}_2^m$}
        		    (10,0) node[anchor=north] {$\hat{p}_1^m$}
        		    (15,0) node[anchor=north] {$1$};
            \draw	(2.5,1) node[anchor=north] {$(S_m-x,x)$}
        		    (7.5,1.) node[anchor=north] {$(S_m,0)$}
        		    (12.5,1) node[anchor=north] {$(0,0)$};
            \draw [fill=red] (0,0) circle (0.1cm);
            \draw [fill=red] (5,0) circle (0.1cm);
            \draw [fill=red] (10,0) circle (0.1cm);
            \draw [fill=red] (15,0) circle (0.1cm);
            
        \end{tikzpicture}
	    \caption{Optimal decentralized caching policy for $N=2$ as $L\rightarrow1$}
        \label{Fig:Opt_Soln_N=2_Connected}
    \end{figure}





\section{Conclusion} \label{Sec:Conclusion}

We considered a mobility-aware D2D caching network where caching decision is taken based on the users demand and mobility statistics. Two caching schemes, centralized and decentralized, were considered. We started by considering a \emph{centralized D2D caching network}, where the SP is pushing data items in users devices and pays them a reward for participation. The SP aim was to minimize its incurred service cost by harnessing user's demand and mobility statistics. An \emph{optimal caching policy} was introduced that allows the SP to enhance its caching decisions. The \emph{complexity} of the optimal caching policy was found to grow exponentially with the number of users. Therefore, we introduced a \emph{greedy caching policy} that has a polynomial order complexity. The proposed greedy algorithm was used to establish \emph{upper and lower bounds} on the gain achieved by the optimal caching policy. The optimal solution of the proposed model was found to depend on \emph{users reward preference} which affects the assigned memory in their devices.
Our vision was completed by considering a \emph{decentralized D2D caching network}, where users make the caching decision based on the SP reward. We introduced an \emph{optimal caching policy} that allows users to minimize their expected payment. We formulated the tension between the SP and users as a \emph{Stackelberg game}. Best response analysis was used to identify a \emph{subgame perfect Nash equilibrium} between users. The optimal solution of the proposed model was found to depend on the \emph{SP reward preference}, which affects the assigned memory in users devices. We found some regimes for the reward value where the SPNE was non-unique. A \emph{fair allocation} caching policy was adopted to choose one of these SPNEs. 

To understand the impact of user behavior, we considered some special cases when users have similar behavior. If users have identical behavior, i.e. they have similar interest and mobility statistics, they receive a similar amount of caching. Moreover, when the number of popular locations grows large, their meeting probability approaches zero and the amount of caching depends on their interest only. We showed that if users have similar interest and different mobility statistics, caching the same content at all of them happens only if their meeting probability is zero. In this case, the amount of data cached at each user depends on their interest only. Users who are meeting each other with a probability of $1$, split the content caching between them and caching the data item once is optimal. Fair allocation plays an important role here, to pick one of the SPNEs. If users have similar and uniform mobility patterns, i.e. they visit all popular locations with the same probability, the complexity of the centralized optimal caching policy was significantly reduced. We used this special case to show how the mobility-aware model simplifies the proactive caching and the content trading models. Our objective from this part was to explore how users mobility statistics affect the caching decision. 

The results of this work extend our understanding for users behavior in D2D caching networks and allow us to add mobility dynamics to our content trading model discussed in \cite{Hosny2015Towards}. However, there are many aspects that need more investigations. Capturing group mobility in D2D caching networks helps SPs to leverage more statistics about users. This problem has its own importance in studying the correlation between users and how to exploit it to enhance the network performance. We considered mainly the economics point of view in the previous work and discussed cost minimization in the individual mobility model. There are some other metrics that can be used to evaluate the proposed models from different angles, like outage probability and achievable throughput. Further, scaling behavior of such networks is a major point in this direction. We need to investigate the performance of the network when it expands to a larger number of users or data items.

We also need to study the cooperative and distributed caching in social-aware D2D caching networks which is another dimension to capture the correlation between users. Inspired by the main results and insights from the group mobility direction, we can extend it to grasp another parameter which affects the D2D caching networks. The carrier should be able to harness the statistics about relations between users to optimize the cached data items. This should be another thrust towards cost minimization. The SP can also exploit this aspect to shape users demand and consequently maximizes its profit. Over and above, users gain from their relationships with others in many ways. By reducing the service cost, the SP will have more potential to offer lower prices to users, as a way to shape their demand. There will be a higher possibility to find the requested data items among users in the same vicinity.



\begin{thebibliography}{1}
	\providecommand{\url}[1]{#1}
	\csname url@samestyle\endcsname
	\providecommand{\newblock}{\relax}
	\providecommand{\bibinfo}[2]{#2}
	\providecommand{\BIBentrySTDinterwordspacing}{\spaceskip=0pt\relax}
	\providecommand{\BIBentryALTinterwordstretchfactor}{4}
	\providecommand{\BIBentryALTinterwordspacing}{\spaceskip=\fontdimen2\font plus
		\BIBentryALTinterwordstretchfactor\fontdimen3\font minus
		\fontdimen4\font\relax}
	\providecommand{\BIBforeignlanguage}[2]{{%
			\expandafter\ifx\csname l@#1\endcsname\relax
			\typeout{** WARNING: IEEEtran.bst: No hyphenation pattern has been}%
			\typeout{** loaded for the language `#1'. Using the pattern for}%
			\typeout{** the default language instead.}%
			\else
			\language=\csname l@#1\endcsname
			\fi
			#2}}
	\providecommand{\BIBdecl}{\relax}
	\BIBdecl
	
	\bibitem{Hosny2015Game}
	F.~Alotaibi, S.~Hosny, H.~E. Gamal, and A.~Eryilmaz, ``A game theoretic
	approach to content trading in proactive wireless networks,'' in \emph{2015
		IEEE International Symposium on Information Theory (ISIT)}, June 2015, pp.
	2216--2220.
	
	\bibitem{cisco_2015}
	C.~V.~N. Index, ``Cisco visual networking index: Global mobile data traffic
	forecast update, 2015--2020 white paper,'' Tech. rep. Cisco, 2016. url:
	http://www. cisco.
	com/c/en/us/solutions/collateral/service-provider/visual-networking-index-vni/mobile-white-paper-c11-520862.
	html (visited on 03/26/2016)(cit. on p. 6), Tech. Rep., 2016.
	
	\bibitem{Federal2002Spectrum}
	FCC, ``Spectrum policy task force report, fcc 02-155,'' 2002.
	
	\bibitem{Song2010Limits}
	\BIBentryALTinterwordspacing
	C.~Song, Z.~Qu, N.~Blumm, and A.-L. Barabási, ``Limits of predictability in
	human mobility,'' \emph{Science}, vol. 327, no. 5968, pp. 1018--1021, 2010.
	[Online]. Available:
	\url{http://www.sciencemag.org/content/327/5968/1018.abstract}
	\BIBentrySTDinterwordspacing
	
	\bibitem{farrahi2008discovering}
	K.~Farrahi and D.~Gatica-Perez, ``Discovering human routines from cell phone
	data with topic models,'' in \emph{Wearable Computers, 2008. ISWC 2008. 12th
		IEEE International Symposium on}.\hskip 1em plus 0.5em minus 0.4em\relax
	IEEE, 2008, pp. 29--32.
	
	\bibitem{Tadrous2013Proactive}
	J.~Tadrous, A.~Eryilmaz, and H.~El~Gamal, ``Proactive resource allocation:
	Harnessing the diversity and multicast gains,'' \emph{Information Theory,
		IEEE Transactions on}, vol.~59, no.~8, pp. 4833--4854, Aug 2013.
	
	\bibitem{yu2011resource}
	C.-H. Yu, K.~Doppler, C.~B. Ribeiro, and O.~Tirkkonen, ``Resource sharing
	optimization for device-to-device communication underlaying cellular
	networks,'' \emph{Wireless Communications, IEEE Transactions on}, vol.~10,
	no.~8, pp. 2752--2763, 2011.
	
	\bibitem{Caire2016Review}
	G.~C. M.~Ji and A.~F. Molisch, ``Wireless device-to-device caching networks:
	Basic principles and system performance,'' \emph{IEEE Journal on Selected
		Areas in Communications}, vol.~34, no.~1, pp. 176--189, Jan 2016.
	
	\bibitem{Caire2016CachingD2D}
	M.~Ji, G.~Caire, and A.~F. Molisch, ``Fundamental limits of caching in wireless
	d2d networks,'' \emph{IEEE Transactions on Information Theory}, vol.~62,
	no.~2, pp. 849--869, Feb 2016.
	
	\bibitem{Maddah2014Fundamental}
	M.~A. Maddah-Ali and U.~Niesen, ``Fundamental limits of caching,'' \emph{IEEE
		Transactions on Information Theory}, vol.~60, no.~5, pp. 2856--2867, May
	2014.
	
	\bibitem{Caire2013Optimal}
	M.~Ji, G.~Caire, and A.~F. Molisch, ``Optimal throughput-outage trade-off in
	wireless one-hop caching networks,'' in \emph{Information Theory Proceedings
		(ISIT), 2013 IEEE International Symposium on}, July 2013, pp. 1461--1465.
	
	\bibitem{Caire2015Throughput}
	------, ``The throughput-outage tradeoff of wireless one-hop caching
	networks,'' \emph{IEEE Transactions on Information Theory}, vol.~61, no.~12,
	pp. 6833--6859, Dec 2015.
	
	\bibitem{DavidTse2002Mobility}
	M.~Grossglauser and D.~N.~C. Tse, ``Mobility increases the capacity of ad hoc
	wireless networks,'' \emph{IEEE/ACM Transactions on Networking}, vol.~10,
	no.~4, pp. 477--486, Aug 2002.
	
	\bibitem{Gupta2000Capacity}
	P.~Gupta and P.~R. Kumar, ``The capacity of wireless networks,'' \emph{IEEE
		Transactions on Information Theory}, vol.~46, no.~2, pp. 388--404, Mar 2000.
	
	\bibitem{Camp02asurvey}
	T.~Camp, J.~Boleng, and V.~Davies, ``A survey of mobility models for ad hoc
	network research,'' \emph{Wireless Communications and Mobile Computing
		(WCMC)}, vol.~2, no.~5, pp. 483--502, Aug 2002.
	
	\bibitem{Hosny2015Towards}
	F.~Alotaibi, S.~Hosny, J.~Tadrous, H.~E. Gamal, and A.~Eryilmaz, ``Towards a
	marketplace for mobile content: Dynamic pricing and proactive caching,''
	\emph{arXiv preprint arXiv:1511.07573}, 2015.
	
	\bibitem{Chiang98Mobility}
	C.-C. Chiang, ``Wireless network multicasting,'' Ph.D. dissertation, University
	of California, Los Angeles, 1998.
	
	\bibitem{harsanyi1988general}
	J.~C. Harsanyi, R.~Selten \emph{et~al.}, ``A general theory of equilibrium
	selection in games,'' \emph{MIT Press Books}, vol.~1, 1988.
	
	\bibitem{fevrier2006equilibrium}
	P.~F{\'e}vrier and L.~Linnemer, ``Equilibrium selection: payoff or risk
	dominance?: the case of the “weakest link”,'' \emph{Journal of Economic
		Behavior \& Organization}, vol.~60, no.~2, pp. 164--181, 2006.
	
\end{thebibliography}

\end{document}